\title{The Complexity of Binary Matrix Completion Under Diameter Constraints} %TODO Please add
\author{Tomohiro Koana}{Technische Universit\"at Berlin, Faculty~IV, Algorithmics and Computational Complexity}{tomohiro.koana@tu-berlin.de}{[orcid]}{Partially supported by the DFG project MATE (NI 369/17).} % orcid
\author{Vincent Froese}{Technische Universit\"at Berlin, Faculty~IV, Algorithmics and Computational Complexity}{vincent.froese@tu-berlin.de}{[orcid]}{}
\author{Rolf Niedermeier}{Technische Universit\"at Berlin, Faculty~IV, Algorithmics and Computational Complexity}{rolf.niedermeier@tu-berlin.de}{[orcid]}{}
\authorrunning{T.~Koana, V.~Froese, and R.~Niedermeier} %TODO mandatory. First: Use abbreviated first/middle names. Second (only in severe cases): Use first author plus 'et al.'
\keywords{sunflowers, binary matrices, Hamming distance, stringology, consensus problems, complexity dichotomy, fine-grained complexity, combinatorial algorithms, graph factors, 2-Sat, Hamming radius} %TODO mandatory; please add comma-separated list of keywords
\definecolor{lipicsLightGray}{gray}{0.85}
\algnewcommand\algorithmicinput{\textbf{Input:}}
\algnewcommand\algorithmicoutput{\textbf{Task:}}
\algnewcommand\Input{\item[\algorithmicinput]}
\algnewcommand\Output{\item[\algorithmicoutput]}
\newcommand{\NN}{\mathbb{N}} % natural numbers
\newcommand{\dist}{d}
\newcommand{\mind}{\gamma}
\newcommand{\maxd}{\delta}
\DeclareMathOperator{\core}{core}
\newcommand{\mc}{{\normalfont\textsc{DMC}}}
\newcommand{\problemdef}[4]{
\begin{center}
  \begin{minipage}{0.95\textwidth}
    \normalsize\textsc{#2} \smallskip \\
    \begin{tabularx}{\textwidth}{@{}l@{\hspace{3pt}}X}
      \normalsize\textbf{Input:} & \normalsize#3 \\
      \normalsize\textbf{#1:}    & \normalsize#4
    \end{tabularx}
  \end{minipage}
\end{center}
}
\newcommand{\dprob}[4][Question]{\problemdef{#1}{#2}{#3}{#4}}
\newtheorem{obs}[theorem]{Observation}
\crefname{obs}{observation}{Observation}
\begin{document}

\maketitle

\begin{abstract}
  We thoroughly study a novel but basic combinatorial matrix completion problem: Given a binary incomplete matrix, fill in the missing entries so that every pair of rows in the resulting matrix has a Hamming distance within a specified range.
  We obtain an almost complete picture of the complexity landscape regarding the distance constraints and the maximum number of missing entries in any row.
  We develop polynomial-time algorithms for maximum diameter three based on Deza’s theorem [Discret.~Math.~1973] from extremal set theory.
  We also prove NP-hardness for diameter at least four.
  For the number of missing entries per row, we show polynomial-time solvability when there is only one and NP-hardness when there can be at least two. In many of our algorithms, we  heavily rely on Deza’s theorem to identify sunflower structures. This paves the way towards polynomial-time algorithms which are based on finding graph factors and solving 2-SAT instances.
\end{abstract}

\section{Introduction}
In combinatorial matrix completion problems, given an incomplete 
matrix over a fixed 
alphabet with some missing entries, 
the goal is to fill in the missing entries such that the resulting 
``completed matrix'' (over the same alphabet) fulfills a desired property.
Performing a parameterized complexity analysis, 
Ganian et al.~\cite{GKOS20,GKOS18} and Eiben et al.~\cite{EGKOS21,EGKOS22} recently 
contributed to this growing field by studying various desirable 
properties. More specifically, Ganian et al.~\cite{GKOS18} studied the two
properties of minimizing the rank or of minimizing the number of 
distinct rows of the completed matrix. Ganian et al.~\cite{GKOS20} analyzed the
complexity of completing an incomplete matrix so that it fulfills certain 
constraints and can be partitioned into subspaces of small rank. 
Eiben et al.~\cite{EGKOS21} investigated 
clustering problems where one wants to partition the
rows of the completed matrix into a given number of clusters of small radius 
or of small diameter. 
Finally, Koana et al.~\cite{KFN20} studied two cases of completing the 
matrix into one which has small (local) radius. 
The latter two papers~\cite{EGKOS21,KFN20} rely on Hamming distance as a 
distance measure;  
in general, all considered matrix completion 
problems are NP-hard and thus the above 
papers~\cite{EGKOS21,EGKOS22,GKOS20,GKOS18,KFN20} mostly focused on parameterized complexity
studies. In this work, we focus on a desirable property 
closely related to small radius, namely diameter bounds.
Doing so, we further focus on the case of binary alphabet.
For a matrix~$\mathbf{T}\in\{0,1\}^{n\times \ell}$, let $\mind(\mathbf{T}) := \min_{i \ne i' \in [n]} \dist(\mathbf{T}[i], \mathbf{T}[i'])$ and $\maxd(\mathbf{T}) := \max_{i \ne i' \in [n]} \dist(\mathbf{T}[i], \mathbf{T}[i'])$,
where $\dist$ denotes the Hamming distance and $\mathbf{T}[i]$ denotes the $i$-th row of $\mathbf{T}$.
We use the special symbol~$\square$ to represent a missing entry.
Specifically, we study the following problem.

\dprob
{Diameter Matrix Completion (DMC)}
{An incomplete matrix $\mathbf{S} \in \{ 0, 1, \square \}^{n \times \ell}$ and $\alpha \le \beta \in \NN$.}
{Is there a completion $\mathbf{T} \in \{ 0, 1 \}^{n \times \ell}$ of $\mathbf{S}$ with $\alpha \le \gamma(\mathbf{T})$ and $\maxd(\mathbf{T}) \le \beta$?}

We believe that \mc{} is a natural combinatorial matrix problem which may appear 
in the following contexts:
\begin{itemize}
\item In coding theory, one may want to ``design'' (by filling in the 
missing entries) codewords that are pairwise neither too close 
(parameter~$\alpha$ in DMC) nor too far (parameter~$\beta$ in
DMC) from each other.
One prime example is the completion into a Hadamard matrix~\cite{Joh90}.
This is a special case of \textsc{DMC} with $n = \ell$ and $\alpha = \beta = n / 2$.
\item In computational biology, one may want to minimize the maximum distance
of sequences in order to determine their degree of relatedness
(thus minimizing~$\beta$); missing entries refer to missing data 
points.\footnote{Here, it would be particularly natural to also study the case of non-binary alphabets;
however, most of our positive results probably only hold for binary alphabet.}
\item In data science, each row may represent an entity with its attributes, 
	and 
solving the DMC problem may fulfill some constraints with respect to
		the pairwise (dis)similarity of the completed entities.
\item In stringology, DMC seems to constitute a new and natural 
problem, closely related to several intensively studied consensus 
problems (many of which are NP-hard for binary alphabets)~\cite{BLL15,BLR05,BFN20,BHKN14,BP0R018,BS20,CLPPS16,GNR03,HR15,LKLB14,LSLI02,Sch17,WLM16}.
\end{itemize}
Somewhat surprisingly, although simple to define and well-motivated, 
in the literature there
seems to be no systematic study of DMC and its computational complexity.

\begin{figure}[t]
  \centering
  \begin{subfigure}{.45\textwidth}
    \centering
    \begin{tikzpicture}[scale=0.42]

      \pgfmathsetmacro{\xmax}{9}
      \pgfmathsetmacro{\xmaxd}{\xmax - 1}
      \pgfmathsetmacro{\xmaxdd}{\xmax - 2}
      \pgfmathsetmacro{\xmaxddd}{\xmax - 3}
      \pgfmathsetmacro{\ymax}{9}
      \pgfmathsetmacro{\ymaxd}{\ymax - 1}

      % \pgfmathsetmacro{\north}{0}
      % \pgfmathsetmacro{\south}{8}
      % \pgfmathsetmacro{\padding}{0.2}

      \foreach \i in { 0, ..., \xmaxd } \node at (\i + .5, -.5) {\i};
      \foreach \i in { 0, ..., \ymaxd } \node at (-.5, \i + .5) {\i};

      \fill[red!35] (0, 4) -- (1, 4) -- (1, 6) -- (3, 6) -- (3, 8) -- (5, 8) -- (5, 9) -- (0, 9) -- (0, 4);

      \tikzstyle{greenfill}=[green!35]
      \fill[greenfill] (0, 0) -- (0, 4) -- (1, 4) -- (1, 0) -- (0, 0);
      \foreach \i in { 1, ..., \xmaxdd } {
        \fill[greenfill] (\i, \i) -- (\i, \i + 2) -- (\i + 1, \i + 2) -- (\i + 1, \i) -- (\i, \i);
      }
      \fill[greenfill] (\xmaxd, \xmaxd) -- (\xmaxd + 1, \xmaxd) -- (\xmaxd + 1, \xmaxd + 1) -- (\xmaxd, \xmaxd + 1) -- (\xmaxd, \xmaxd);

      \draw[red!70,very thick] (0, 4) -- (1, 4) -- (1, 6) -- (3, 6) -- (3, 8) -- (5, 8) -- (5, 9);
      \tikzstyle{greenline}=[green!80, very thick]
      \foreach \i in { 1, ..., 6 } {
        \draw[greenline] (\i, \i + 2) -- (\i + 1, \i + 2);
      }
      \foreach \i in { 1, ..., 8 } {
        \draw[greenline] (\i, \i) -- (\i + 1, \i);
      }
      \foreach \i in { 0, ..., 5 } {
        \draw[greenline] (\i + 2, \i + 3) -- (\i + 2, \i + 4);
      }
      \foreach \i in { 0, ..., 8 } {
        \draw[greenline] (\i + 1, \i) -- (\i + 1, \i + 1);
      }
      \draw[greenline] (1, 3) -- (1, 4) -- (0, 4);
      \draw[line width=.69pt,xshift=1.4pt,yshift=.6pt,red!70] (0, 4) -- (1, 4);

      \foreach \i in { 1, ..., \xmax } {
        \draw[help lines] (\i, \i - 1) -- (\i, \ymax);
      }
      \foreach \i in { 1, ..., \ymax } {
        \draw[help lines] (0, \i - 1) -- (\i, \i - 1);
      }
      \tikzstyle{axis}=[->, very thick, >=stealth']
      \draw[axis] (0, 0) -- (\xmax + .3, 0) node [right] {$\alpha$};
      \draw[axis] (0, 0) -- (0, \ymax + .3) node [above] {$\beta$};
    \end{tikzpicture}
    \caption{Complexity of \mc{} with respect to combinations of constant values for $\alpha$ and~$\beta$.}
    \label{fig:const}
  \end{subfigure}
  \quad\quad
  \begin{subfigure}{.45\textwidth}
    \centering
    \begin{tikzpicture}[scale=0.42]
      \pgfmathsetmacro{\xmax}{9}
      \pgfmathsetmacro{\xmaxd}{\xmax - 1}
      \pgfmathsetmacro{\xmaxdd}{\xmax - 2}
      \pgfmathsetmacro{\xmaxddd}{\xmax - 3}
      \pgfmathsetmacro{\ymax}{9}
      \pgfmathsetmacro{\ymaxd}{\ymax - 1}
      \tikzstyle{redline}=[red!75, very thick]
      \fill[red!35] (2, 3) -- (\xmax, 3) -- (\xmax, \ymax) -- (2, \ymax) -- (2, 3);
      \fill[red!35] (3, 0) -- (\xmax, 0) -- (\xmax, 1) -- (3, 1) -- (3, 0);
      \draw[redline] (2, 3) -- (\xmax, 3);
      \draw[redline] (3, 1) -- (\xmax, 1);

      \tikzstyle{greenfill}=[green!35]
      \fill[greenfill] (0, 0) -- (0, \ymax) -- (2, \ymax) -- (2, 1) -- (3, 1) -- (3, 0) -- (0, 0);
      \tikzstyle{greenline}=[green!80, very thick]
      \draw[greenline] (2, \ymax) --  (2, 1) -- (3, 1) -- (3, 0);
      \draw[line width=.7pt,xshift=.6pt,yshift=-1.4pt,red!70] (2, 3) -- (2, 9.05);
      \draw[line width=.7pt,xshift=.6pt,red!70] (3, 0) -- (3, 1.05);

      \foreach \i in { 1, ..., \xmaxd } {
        \draw[help lines] (\i, 0) -- (\i, \ymax);
      }
      \foreach \i in { 1, ..., \ymaxd } {
        \draw[help lines] (0, \i) -- (\xmax, \i);
      }
      \tikzstyle{axis}=[->, very thick, >=stealth']
      \draw[axis] (0, 0) -- (\xmax + .3, 0) node [right] {$k$};
      \draw[axis] (0, 0) -- (0, \ymax + .3) node [above] {$\beta - \alpha$};

      \foreach \i in { 0, ..., \xmaxd } \node at (\i + .5, -.5) {\i};
      \foreach \i in { 0, ..., \ymaxd } \node at (-.5, \i + .5) {\i};
    \end{tikzpicture}
    \caption{Complexity of \mc{} with respect to combinations of the maximum number~$k$ of missing entries in any row and $\beta-\alpha$.}
    \label{fig:diff}
  \end{subfigure}
  \caption{
	  Overview of our results. Green (lighter) denotes polynomial-time solvability and red (darker) denotes NP-hardness. White cells indicate open cases.
}
\label{fig:results}
\end{figure}

We perform a fine-grained complexity study in terms of diameter bounds $\alpha$, $\beta$ and the maximum number $k$ of missing entries in any row.
Note that in bioinformatics applications matrix rows may represent sequences with few corrupted data points, thus resulting in small values for~$k$.
In fact, the computational complexity with respect to this kind of parameters has been studied in the context of computational biology \cite{BLR05,BHKN14,HR15}.
We identify polynomial-time cases as well as NP-hard cases, taking significant steps towards a computational complexity 
dichotomy (polynomial-time solvable versus NP-hard), leaving fairly few 
cases open.  While the focus of the previous works~\cite{EGKOS21,EGKOS22,KFN20} is 
on parameterized complexity studies, in this work we settle
more basic algorithmic questions on the DMC~problem, relying on several 
combinatorial insights,  including results from (extremal) combinatorics 
(most prominently, Deza's theorem~\cite{Deza1974solution}).
Indeed, we believe that exploiting sunflowers based on Deza's theorem 
in combination with the corresponding use of algorithms for 2-SAT and graph factors
 is our most interesting technical contribution.
In this context, we also observe the phenomenon that the running time 
bounds that we can prove for odd values of~$\alpha$ (the ``lower bound for 
dissimilarity'') are significantly better than the ones for even values
of~$\alpha$---indeed, for even values of~$\alpha$ the running time 
exponentially depends on~$\alpha$ while it is independent of~$\alpha$
for odd values of~$\alpha$.
We survey our results in Figure~\ref{fig:results} which also depicts 
remaining open cases. %in terms of polynomial-time versus NP-hard 
%classification.

\paragraph{Related work}
The closest studies are the work of Hermelin and Rozenberg~\cite{HR15}, Koana et al.~\cite{KFN20}, and Eiben et al.~\cite{EGKOS21,EGKOS22}.
Hermelin and Rozenberg~\cite{HR15} and Koana et al.~\cite{KFN20} studied \textsc{Constraint Radius Matrix Completion}:

\dprob
{Constraint Radius Matrix Completion (ConRMC)}
{An incomplete matrix $\mathbf{S} \in (\Sigma \cup \square)^{n \times \ell}$ and $r \in \NN^n$.}
{Is there a completion $\mathbf{T} \in \Sigma^{n \times \ell}$ of $\mathbf{S}$ and a row vector $v \in \Sigma^\ell$ such that $\dist(v, \mathbf{T}[i]) \le r[i]$ for all $i \in [n]$?}

Note that \textsc{ConRMC} is defined for arbitrary alphabets. 
A more important difference between DMC and \textsc{ConRMC} 
is that in DMC we basically have to compare all rows against each other,
but in \textsc{ConRMC} we have to compare one ``center row'' against all 
others. Indeed, this makes these two similarly defined problems quite 
different in many computational complexity aspects.
See the  example in \Cref{fig:dmc}
that also illustrates significant differences between radius 
minimization and diameter minimization (the latter referring to
$\maxd(\mathbf{T}) \le \beta$ above).
Recall that $k$ is the  maximum number of missing entries in any row.
The special case $k = 0$ is a generalization of the well-known \textsc{Closest String} problem, which is NP-hard \cite{FL97}. 
Hermelin and Rozenberg~\cite{HR15} proved that \textsc{ConRMC} is NP-hard even if $\max_{i \in [n]} r[i] = 2$ while it is polynomial-time solvable for $\max_{i \in [n]} r[i] = 1$.
Koana et al.~\cite{KFN20} provided a linear-time algorithm for $\max_{i \in [n]} r[i] = 1$.
Moreover, Koana et al.~\cite{KFN20} established fixed-parameter tractability with respect to $\max_{i \in [n]} r[i] + k$.

Eiben et al.~\cite{EGKOS21} studied the following problem among others:
\dprob
{Diameter Clustering Completion (DCC)}
{An incomplete matrix $\mathbf{S} \in \{ 0, 1, \square \}^{n \times \ell}$ and $r, c \in \NN$.}
{Is there a completion $\mathbf{T} \in \{ 0, 1 \}^{n \times \ell}$ of $\mathbf{S}$ and a partition $(I_1, \dots, I_c)$ of $[n]$ such that $\maxd(\mathbf{T}[I_i]) \le r$ for every $i \in [n]$?}

Here, for $I \subseteq [n]$, $\mathbf{T}[I]$ is the submatrix comprising the rows with index $i \in I$.
\mc{} and \textsc{DCC} are closely related.
In fact, \mc{} with $\alpha=0$ is equivalent to \textsc{DCC} with $c = 1$.
However, the problems are incomparable:
While \mc{} also models the aspect of achieving a minimum pairwise distance (not only a bounded diameter), \textsc{DCC} focuses on clustering.
Eiben et al.~\cite{EGKOS21} showed NP-hardness for $c \ge 3$.
They also showed that \textsc{DCC} on complete matrices is NP-hard for $r = 6$.
Furthermore, using the classical sunflower lemma \cite{ER60}, they proved fixed-parameter tractability with respect to $r + c + \textsf{cover}$, where $\textsf{cover}$ is the minimum number of rows and columns whose deletion results in a matrix without any missing entries.
We remark that this parameter $\textsf{cover}$ is not comparable to the maximum number $k$ of missing entries in any row.
To see this, consider the following two square matrices $\mathbf{M}_1, \mathbf{M}_2 \in \{ 0, 1, \square \}^{n \times n}$, where an entry is missing in $\mathbf{M}_1$ if and only if it is on the main diagonal and an entry is missing in $\mathbf{M}_2$ if and only if it is on the last row.
Observe that $k = 1$ and $\textsf{cover} = n$ for $\mathbf{M}_1$ and that $k = n$ and $\textsf{cover} = 1$ for $\mathbf{M}_2$.

There are also numerous work on non-combinatorial matrix completion problems in the context of clustering \cite{DBLP:journals/dcg/GaoLS08,DBLP:journals/talg/GaoLS10,DBLP:conf/soda/LeeS13,DBLP:conf/nips/MaromF19,DBLP:conf/soda/EibenFGLPS21} such as $k$-center clustering and $k$-means clustering.
These clustering problems are NP-hard even if the matrix has no missing entry. Often, the focus here is on developing approximation algorithms for the corresponding matrix completion problems.

\begin{figure}[t]
  \centering
  \begin{tikzpicture}[scale=0.58, every node/.style={font=\footnotesize}]
    \fill [lipicsLightGray] (0, 1) rectangle (1, 2);
    \fill [lipicsLightGray] (1, 2) rectangle (2, 3);
    \fill [lipicsLightGray] (1, 3) rectangle (2, 4);
    \fill [lipicsLightGray] (2, 0) rectangle (3, 1);
    \fill [lipicsLightGray] (2, 3) rectangle (3, 4);
    \fill [lipicsLightGray] (3, 1) rectangle (4, 2);
    \fill [lipicsLightGray] (3, 2) rectangle (4, 3);
    \fill [lipicsLightGray] (4, 0) rectangle (5, 1);
    \fill [lipicsLightGray] (4, 3) rectangle (5, 4);

    \node at (0.5, 0.5) {$0$};
    \node at (0.5, 1.5) {$1$};

    \node at (1.5, 1.5) {$0$};
    \node at (1.5, 2.5) {$1$};
    \node at (1.5, 3.5) {$1$};

    \node at (2.5, 0.5) {$1$};
    \node at (2.5, 1.5) {$0$};
    \node at (2.5, 2.5) {$0$};
    \node at (2.5, 3.5) {$1$};

    \node at (3.5, 0.5) {$0$};
    \node at (3.5, 1.5) {$1$};
    \node at (3.5, 2.5) {$1$};
    \node at (3.5, 3.5) {$0$};

    \node at (4.5, 0.5) {$1$};
    \node at (4.5, 1.5) {$0$};
    \node at (4.5, 2.5) {$0$};
    \node at (4.5, 3.5) {$1$};
    \draw[help lines] (0,0) grid (5, 4);

    \draw[very thick] (0, 2) rectangle (1, 3);
    \draw[very thick] (0, 3) rectangle (1, 4);
    \draw[very thick] (1, 0) rectangle (2, 1);
    \begin{scope}[shift={(6.5, 0)}]
      \fill [lipicsLightGray] (0, 1) rectangle (1, 2);
      \fill [lipicsLightGray] (0, 3) rectangle (1, 4);
      \fill [lipicsLightGray] (1, 2) rectangle (2, 3);
      \fill [lipicsLightGray] (1, 3) rectangle (2, 4);
      \fill [lipicsLightGray] (2, 0) rectangle (3, 1);
      \fill [lipicsLightGray] (2, 3) rectangle (3, 4);
      \fill [lipicsLightGray] (3, 1) rectangle (4, 2);
      \fill [lipicsLightGray] (3, 2) rectangle (4, 3);
      \fill [lipicsLightGray] (4, 0) rectangle (5, 1);
      \fill [lipicsLightGray] (4, 3) rectangle (5, 4);
      \node at (0.5, 0.5) {$0$};
      \node at (0.5, 1.5) {$1$};
      \node at (0.5, 2.5) {$0$};
      \node at (0.5, 3.5) {$1$};

      \node at (1.5, 0.5) {$0$};
      \node at (1.5, 1.5) {$0$};
      \node at (1.5, 2.5) {$1$};
      \node at (1.5, 3.5) {$1$};

      \node at (2.5, 0.5) {$1$};
      \node at (2.5, 1.5) {$0$};
      \node at (2.5, 2.5) {$0$};
      \node at (2.5, 3.5) {$1$};

      \node at (3.5, 0.5) {$0$};
      \node at (3.5, 1.5) {$1$};
      \node at (3.5, 2.5) {$1$};
      \node at (3.5, 3.5) {$0$};

      \node at (4.5, 0.5) {$1$};
      \node at (4.5, 1.5) {$0$};
      \node at (4.5, 2.5) {$0$};
      \node at (4.5, 3.5) {$1$};
      \draw[help lines] (0,0) grid (5, 4);
      \draw[very thick] (0, 2) rectangle (1, 3);
      \draw[very thick] (0, 3) rectangle (1, 4);
      \draw[very thick] (1, 0) rectangle (2, 1);
    \end{scope}
    \begin{scope}[shift={(13, .7)}]
      \fill [lipicsLightGray] (0, 1) rectangle (1, 2);
      \fill [lipicsLightGray] (1, 2) rectangle (2, 3);
      \fill [lipicsLightGray] (1, 3) rectangle (2, 4);
      \fill [lipicsLightGray] (2, 0) rectangle (3, 1);
      \fill [lipicsLightGray] (2, 3) rectangle (3, 4);
      \fill [lipicsLightGray] (3, 1) rectangle (4, 2);
      \fill [lipicsLightGray] (3, 2) rectangle (4, 3);
      \fill [lipicsLightGray] (4, 0) rectangle (5, 1);
      \fill [lipicsLightGray] (4, 3) rectangle (5, 4);
      \node at (0.5, 0.5) {$0$};
      \node at (0.5, 1.5) {$1$};
      \node at (0.5, 2.5) {$0$};
      \node at (0.5, 3.5) {$0$};

      \node at (1.5, 0.5) {$0$};
      \node at (1.5, 1.5) {$0$};
      \node at (1.5, 2.5) {$1$};
      \node at (1.5, 3.5) {$1$};

      \node at (2.5, 0.5) {$1$};
      \node at (2.5, 1.5) {$0$};
      \node at (2.5, 2.5) {$0$};
      \node at (2.5, 3.5) {$1$};

      \node at (3.5, 0.5) {$0$};
      \node at (3.5, 1.5) {$1$};
      \node at (3.5, 2.5) {$1$};
      \node at (3.5, 3.5) {$0$};

      \node at (4.5, 0.5) {$1$};
      \node at (4.5, 1.5) {$0$};
      \node at (4.5, 2.5) {$0$};
      \node at (4.5, 3.5) {$1$};
      \draw[help lines] (0,0) grid (5, 4);
      \draw[very thick] (0, 2) rectangle (1, 3);
      \draw[very thick] (0, 3) rectangle (1, 4);
      \draw[very thick] (1, 0) rectangle (2, 1);

      \begin{scope}[shift={(0,-1.5)}]
        \node at (0.5, 0.5) {$0$}; \node at (1.5, 0.5) {$0$}; \node at (2.5, 0.5) {$0$}; \node at (3.5, 0.5) {$0$}; \node at (4.5, 0.5) {$0$};
        \draw[help lines] (0,0) grid (5, 1);
      \end{scope}
    \end{scope}
  \end{tikzpicture}
  \caption{An illustration of matrix completion problems with the input matrix (left).
  Missing entries (and their completions) are framed by thick lines.
  The middle matrix is a completion of diameter four and the right matrix is a completion of radius three with the center vector below.
  Note that missing entries in the same column might be filled with different values to meet the diameter constraint, whereas this is never necessary for the radius constraint.}
  \label{fig:dmc}
\end{figure}

\paragraph*{Organization}
We provide preliminaries in \Cref{sec:prelim}.
In \Cref{sec:mindmc:d}, we study \textsc{DMC} with fixed constants $\alpha \le \beta$ (see \Cref{fig:const}).
We start with polynomial-time solvability:
\Cref{sec:alpha0} and \Cref{sec:const:poly} cover the case $\alpha = 0$ and $\beta \le 3$ and the case $\alpha > 0$ and $\beta \in \{ \alpha, \alpha + 1 \}$, respectively.
We prove NP-hardness for $\beta \ge 2 \lceil \alpha / 2 \rceil + 4$ in \Cref{sec:const:nph}.
In \Cref{sec:mindmc:k2}, we consider \textsc{DMC} with unbounded $\alpha$ and $\beta$ (see \Cref{fig:diff}).
We prove polynomial-time solvability in \Cref{sec:unbounded:poly} and NP-hardness in \Cref{sec:unbounded:nph}.

%\todo[inline]{Red/green in Fig 2 might be problematic if printed in b/w?}

\section{Preliminaries}
\label{sec:prelim}
For $m \le n \in \NN$, let $[m, n]:=\{ m, \dots, n \}$ and let $[n]:=[1, n]$.

For a matrix $\mathbf{T} \in \{ 0, 1 \}^{n \times \ell}$, we denote by $\mathbf{T}[i, j]$ the
entry in the~$i$-th row and~$j$-th column ($i \in [n]$ and $j \in [\ell]$) of~$\mathbf{T}$.
We use $\mathbf{T}[i, :]$ (or $\mathbf{T}[i]$ in short) to denote the \emph{row vector} $(\mathbf{T}[i, 1], \dots, \mathbf{T}[i, \ell])$ and $\mathbf{T}[:, j]$ to denote the \emph{column vector} $(\mathbf{T}[1, j], \dots, \mathbf{T}[n, j])^T$.
For subsets $I \subseteq [n]$ and $J \subseteq [\ell]$, we write $\mathbf{T}[I, J]$ to denote the submatrix containing only the rows in $I$ and the columns in $J$.
We abbreviate $\mathbf{T}[I, [\ell]]$ and $\mathbf{T}[[n], J]$ as $\mathbf{T}[I, :]$ (or $\mathbf{T}[I]$ for short) and $\mathbf{T}[:, J]$, respectively.
We use the special character~$\square$ for a \emph{missing} entry.
A matrix $\mathbf{S} \in \{ 0, 1, \square \}^{n \times \ell}$ is called \emph{incomplete} if it contains a missing entry, and it is called \emph{complete} otherwise.
We say that $\mathbf{T} \in \{ 0, 1 \}^{n \times \ell}$ is a \emph{completion} of $\mathbf{S} \in \{ 0, 1, \square \}^{n \times \ell}$ if either $\mathbf{S}[i, j] = \square$ or $\mathbf{S}[i, j] = \mathbf{T}[i, j]$ holds for all $i \in [n]$ and $j \in [\ell]$.

Let $u, w \in \{ 0, 1, \square \}^\ell$ be row vectors.
For $j \in [\ell]$ and $J \subseteq [\ell]$, let $u[j]$ denote the $j$-th entry of $u$ and let $u[J]$ denote the vector only containing entries in $J$.
Let~$D(u, w):=\{ j \in [\ell] \mid u[j] \ne w[j] \wedge u[j]\ne \square \wedge w[j]\ne \square \}$ be the set of column indices where~$u$ and~$v$ disagree (not considering positions with missing entries).
The \emph{Hamming distance} between $u$ and $w$ is $\dist(u, w) := |D(u, w)|$.
Note that the Hamming distance obeys the triangle inequality $d(u, w) \le d(u, v) + d(v, w)$ for a vector $v \in \{ 0, 1 \}^\ell$.
For a subset $J \subseteq [\ell]$, we also define $\dist_{J}(u, w):=\dist(u[J], w[J])$.
Let $u', v', w' \in \{ 0, 1 \}^\ell$ be complete row vectors.
Then, it holds that $\dist(u', w') = |D(u', v') \triangle D(v', w')| = |D(u', v')| + |D(v', w')| - 2|D(u', v') \cap D(v', w')|$.
The binary operation $u \leftarrow v$ replaces the missing entries of~$u$ with the corresponding entries in $v$ for $v \in \{ 0, 1 \}^\ell$.
We sometimes use string notation to represent row vectors, such as 001 for $(0, 0, 1)$.

\section{Constant Diameter Bounds \texorpdfstring{$\alpha$}{alpha} and \texorpdfstring{$\beta$}{beta}}
\label{sec:mindmc:d}

In this section, we consider the special case $(\alpha,\beta)$-\mc{} of \mc, where $\alpha\le
\beta$ are some fixed constants.
We prove the results depicted in \Cref{fig:const}.
To start with, we identify the following simple linear-time solvable case which will subsequently be used
several times.

\begin{lemma}
  \label{lemma:linearforconstantl}
  \mc{} can be solved in linear time for a constant number~$\ell$ of columns.
\end{lemma}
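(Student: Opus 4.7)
The plan is to exploit that with $\ell$ constant the universe of completed row vectors has constant size $2^\ell$, and the universe of incomplete row types is $\{0,1,\square\}^\ell$ which has constant size $3^\ell$. First I would pass once over $\mathbf{S}$ in $O(n\ell)=O(n)$ time to compute, for every type $t\in\{0,1,\square\}^\ell$, the number $n_t$ of rows of $\mathbf{S}$ equal to $t$; this reduces the instance to a constant-size profile $(n_t)_t$ with $\sum_t n_t=n$.

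Next I would observe that a completion is fully described by the multiset of complete vectors it uses, and enumerate candidate \emph{target sets} $V\subseteq\{0,1\}^\ell$ of vectors to be used. There are only $2^{2^\ell}=O(1)$ such subsets, so in constant time I can discard those $V$ for which some pair $v,v'\in V$ violates $\alpha\le d(v,v')\le\beta$. For each surviving $V$ there is a valid completion with image exactly $V$ iff the rows can be distributed over $V$ consistently with their types; I would split this into two cases.

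If $\alpha=0$, two rows may be completed to the same vector, so feasibility for a given $V$ just requires that every type $t$ with $n_t\ge 1$ has at least one $v\in V$ agreeing with $t$ on all non-$\square$ positions, which is an $O(1)$ check. If $\alpha\ge 1$, any two distinct rows must receive distinct completions, so I want an assignment of the $n$ rows to $V$ that is injective; grouping rows by type, this is precisely a bipartite $b$-matching between the $O(1)$ types (demands $n_t$) and $V$ (capacities $1$), which exists iff Hall's condition holds for the $O(1)$ many subsets of types. I would accept iff some candidate $V$ passes its check; since only $O(1)$ candidate sets and $O(1)$-sized matching checks are performed after the initial linear scan, the total running time is $O(n)$.

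I do not anticipate a real obstacle here: the only thing to be careful about is the bookkeeping distinction between $\alpha=0$ (multiplicities allowed) and $\alpha\ge 1$ (injectivity forced), and making sure that grouping rows by type really does reduce the feasibility test to a constant-size combinatorial problem so that the linear-time bound is not spoiled by a hidden dependence on~$n$ inside the enumeration.
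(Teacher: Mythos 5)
Your proposal is correct and follows essentially the same strategy as the paper: enumerate all $2^{2^\ell}=O(1)$ candidate sets $\mathcal{V}\subseteq\{0,1\}^\ell$ whose pairwise Hamming distances lie in $[\alpha,\beta]$, and test in linear time whether the rows of $\mathbf{S}$ can be completed into $\mathcal{V}$. The one substantive difference is that you are more careful than the paper on the realizability test: the paper's proof only checks that \emph{each} row can be completed to \emph{some} vector of $\mathcal{V}$, which suffices for $\alpha=0$ but, read literally, overlooks that for $\alpha\ge 1$ two distinct rows must not receive the same completion (e.g.\ three rows $\square\square$ with $\mathcal{V}=\{00,11\}$ and $\alpha=2$ would pass that check but admit no valid completion). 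Your type-profile reduction followed by a constant-size bipartite $b$-matching (Hall's condition) check closes exactly this gap while preserving the $O(n)$ bound, so your write-up is, if anything, the more complete of the two.
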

\begin{proof}
  If $\alpha > 0$ and $n > 2^\ell$, then there is no completion $\mathbf{T}$ of $\mathbf{S}$ with $\gamma(\mathbf{T}) \ge \alpha > 0$.
  Thus, we can assume that the input matrix comprises at most $n \ell \le 2^\ell \cdot \ell$ (that is, constantly many) entries for the case $\alpha > 0$.
  Suppose that $\alpha = 0$.
  Consider a set $\mathcal{V}\subseteq\{ 0, 1 \}^\ell$ in which the pairwise Hamming distances are at most $\beta$.
  We simply check whether each row vector in the input matrix can be completed to some row vector in $\mathcal{V}$ in~$O(n\cdot 2^\ell)=O(n)$ time.
  Since there are at most~$2^{2^{\ell}}$ choices for $\mathcal{V}$, this procedure can be done in linear time.
\end{proof}

\subsection{Polynomial time for \texorpdfstring{$\alpha = 0$}{alpha = 0} and \texorpdfstring{$\beta \le 3$}{beta = 2}}
\label{sec:alpha0}

As an entry point, we show that $(0, 1)$-\mc{} is easily solvable.
To this end, we call a column vector \emph{dirty} if it contains both~0 and~1.
Clearly, for~$\alpha=0$, we can ignore columns that are not dirty since they can always be completed without increasing the Hamming distances between rows.
Hence, throughout this subsection, we assume that the input matrix contains only dirty columns.
Now, any $(0, 1)$-\mc{} instance is a \textbf{Yes}-instance if and only if there is at most one dirty column in the input matrix:

\begin{lemma}
  \label{lemma:dmc:d1}
  A matrix $\mathbf{S} \in \{ 0, 1, \square \}^{n \times \ell}$ admits a completion $\mathbf{T} \in \{ 0, 1 \}^{n \times \ell}$ with $\delta(\mathbf{T}) \le 1$ if and only if $\mathbf{S}$ contains at most one dirty column.
\end{lemma}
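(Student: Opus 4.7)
The plan is to prove the biconditional by separating the two directions.

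For the constructive direction ($\Leftarrow$), I would write down an explicit completion: fill every non-dirty column with a constant (choosing $0$ or $1$ to agree with whichever value already appears in it, and arbitrarily if the column is entirely $\square$), and fill the $\square$-entries of the one dirty column, if it exists, arbitrarily. By construction any two rows of the resulting matrix $\mathbf{T}$ agree on every column except possibly on the single dirty one, hence $\delta(\mathbf{T}) \le 1$.

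For the harder direction ($\Rightarrow$), I would argue contrapositively: assuming $\mathbf{S}$ has two distinct dirty columns $j_1$ and $j_2$, I will exhibit in every completion $\mathbf{T}$ of $\mathbf{S}$ two rows that simultaneously disagree on both $j_1$ and $j_2$, forcing $\delta(\mathbf{T}) \ge 2$. Given any completion $\mathbf{T}$, partition the row indices by their values in column $j_1$ (into $A_0$ and $A_1$) and in column $j_2$ (into $B_0$ and $B_1$). Because $j_1$ and $j_2$ are already dirty in $\mathbf{S}$ and a completion can only replace $\square$ entries, each of the four sets $A_0, A_1, B_0, B_1$ inherits a $0$- or $1$-witness from $\mathbf{S}$ and is therefore nonempty.

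The remaining case analysis is short. If both $A_0 \cap B_1$ and $A_1 \cap B_0$ are nonempty, picking one row from each yields two rows disagreeing on both columns, so their Hamming distance is at least $2$. Otherwise, without loss of generality assume $A_0 \cap B_1 = \emptyset$; then $A_0 \subseteq B_0$ and $B_1 \subseteq A_1$, so $A_0 \cap B_0 \supseteq A_0$ and $A_1 \cap B_1 \supseteq B_1$ are both nonempty, and again a row from each disagrees on both $j_1$ and $j_2$. Either way $\delta(\mathbf{T}) \ge 2$. I do not anticipate a conceptual obstacle; the only subtlety worth flagging is the preservation fact that dirty-ness of a column in $\mathbf{S}$ passes to every completion of $\mathbf{S}$, which is exactly what guarantees that all four sets above are nonempty.
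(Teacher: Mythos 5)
Your proposal is correct and follows essentially the same route as the paper: both directions hinge on the observation that dirtiness of a column survives completion, and the forward direction exhibits, in any completion with two dirty columns, a pair of rows disagreeing in both of them. The only cosmetic difference is that the paper locates this pair by fixing an arbitrary row and propagating agreements under the assumption $\delta(\mathbf{T})\le 1$, whereas you find it unconditionally via the $2\times 2$ partition $A_0,A_1,B_0,B_1$; both case analyses are equally short and sound.
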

\begin{proof}
  Suppose that \( \mathbf{S} \) contains two dirty columns $\mathbf{S}[:, j_0]$ and $\mathbf{S}[:, j_1]$ for $j_0 \ne j_1 \in [\ell]$.
  We claim that $\maxd(\mathbf{T}) \ge 2$ holds for any completion $\mathbf{T}$ of $\mathbf{S}$.
  Let $i \in [n]$.
  Then, there exist $i_0, i_1 \in [n]$ with $\mathbf{T}[i, j_0] \ne \mathbf{T}[i_0, j_0]$ and $\mathbf{T}[i, j_1] \ne \mathbf{T}[i_1, j_1]$.
  If $\maxd(\mathbf{T}) \le 1$, then we obtain $\mathbf{T}[i_0, j_1] = \mathbf{T}[i, j_1]$ and $\mathbf{T}[i_1, j_0] = \mathbf{T}[i, j_0]$.
  Now we have $\dist(\mathbf{T}[i_0], \mathbf{T}[i_1]) \ge 2$ because $\mathbf{T}[i_0, j_0] \ne \mathbf{T}[i_1, j_0]$ and $\mathbf{T}[i_0, j_1] \ne \mathbf{T}[i_1, j_1]$.
  The reverse direction follows easily.
\end{proof}
\Cref{lemma:dmc:d1} implies that one can solve $(0, 1)$-\mc{} in linear time.
In the following, we extend this to a linear-time algorithm for $(0, 2)$-\mc{} (\Cref{theorem:d2linear}) and a polynomial-time algorithm for $(0, 3)$-\mc{} (\Cref{theorem:d3polynomial}).

For these algorithms, we make use of a concept from extremal set theory, known as $\Delta$-systems~\cite{jukna2011extremal}.
We therefore consider matrices as certain set systems.

\begin{definition}
  \label{def:diffsets}
  For a matrix $\mathbf{T} \in \{ 0, 1 \}^{n \times \ell}$,
  let $\mathcal{T}$ denote the set system $\{ D(\mathbf{T}[i], \mathbf{T}[n]) \mid i \in [n - 1] \}$.
  Moreover, for $x \in \NN$, let $\mathcal{T}_x$ denote the set system $\{ D(\mathbf{T}[i], \mathbf{T}[n]) \mid i \in [n - 1], d(\mathbf{T}[i], \mathbf{T}[n]) = x \}$.
\end{definition}
The set system~$\mathcal T$ contains the subsets (without duplicates) of column indices corresponding to the columns where the row vectors~$\mathbf{T}[1],\ldots,\mathbf{T}[n-1]$ differ from~$\mathbf{T}[n]$.
For given~$\mathbf{T}[n]$, all the rows of~$\mathbf{T}$ can be determined from $\mathcal{T}$, as we have binary alphabet.

The concept of $\Delta$-systems has previously been used to obtain efficient algorithms \cite{EGKOS21,FG06,FBNS16}.
They are defined as follows (see also \Cref{fig:deltasystems}):

\begin{definition}[Weak $\Delta$-system]
  A set family $\mathcal{F} = \{ S_1, \dots, S_m \}$ is a \emph{weak $\Delta$-system} if there exists an integer $\lambda \in \NN$ such that $|S_i \cap S_j| = \lambda$ for any pair of distinct sets $S_i, S_j \in \mathcal{F}$.
  The integer~$\lambda$ is called the \emph{intersection size} of $\mathcal{F}$.
\end{definition}

\begin{definition}[Strong $\Delta$-system, Sunflower]
  A set family $\mathcal{F} = \{ S_1, \dots, S_m \}$ is a \emph{strong $\Delta$-system} (or \emph{sunflower}) if there exists a subset $C \subseteq S_1 \cup \dots \cup S_m$ such that $S_i \cap S_j = C$ for any pair of distinct sets $S_i, S_j \in \mathcal{F}$.
  We call the set $C$ the \emph{core} and the sets $P_i = S_i \setminus C$ the \emph{petals} of $\mathcal{F}$.
\end{definition}
Clearly, every strong $\Delta$-system is a weak $\Delta$-system.

\begin{figure}
  \centering
  \begin{tikzpicture}[scale=0.45]
    \node at (0.5, 3.7) {1};
    \node at (1.5, 3.7) {2};
    \node at (2.5, 3.7) {3};
    \fill[black] (0,0) rectangle (2,1);
    \fill[black] (0,1) rectangle (1,2);
    \fill[black] (2,1) rectangle (3,2);
    \fill[black] (1,2) rectangle (3,3);
    \draw[help lines] (0,0) grid (3,3);

    \node at (4.5, 0.5) {$\{ 1, 2 \}$};
    \node at (4.5, 1.5) {$\{ 1, 3 \}$};
    \node at (4.5, 2.5) {$\{ 2, 3 \}$};

    \node at (0.5, 3.7) {1};
    \node at (1.5, 3.7) {2};
    \node at (2.5, 3.7) {3};

    \begin{scope}[shift={(10, -.5)}]
      \fill[black] (0,0) rectangle (1,4);
      \fill[black] (1,3) rectangle (2,4);
      \fill[black] (2,2) rectangle (3,3);
      \fill[black] (3,1) rectangle (4,2);
      \fill[black] (4,0) rectangle (5,1);
      \draw[help lines] (0,0) grid (5,4);
      \node at (6.5, 0.5) {$\{ 1, 5 \}$};
      \node at (6.5, 1.5) {$\{ 1, 4 \}$};
      \node at (6.5, 2.5) {$\{ 1, 3 \}$};
      \node at (6.5, 3.5) {$\{ 1, 2 \}$};
      \node at (0.5, 4.7) {1};
      \node at (1.5, 4.7) {2};
      \node at (2.5, 4.7) {3};
      \node at (3.5, 4.7) {4};
      \node at (4.5, 4.7) {5};
    \end{scope}
  \end{tikzpicture}
  \caption{Illustration of a weak $\Delta$-system with intersection size one (left) and a strong $\Delta$-system with core $\{ 1 \}$ (right).}
  \label{fig:deltasystems}
\end{figure}

Our algorithms employ the combinatorial property that under certain conditions the set system~$\mathcal T$ of a matrix~$\mathbf T$ with bounded diameter forms a strong $\Delta$-system (which can be algorithmically exploited).
We say that a family $\mathcal{F}$ of sets is $h$-uniform if $|S| = h$ holds for each $S \in \mathcal{F}$.
Deza~\cite{Deza1974solution} showed that an $h$-uniform weak $\Delta$-system is a strong $\Delta$-system if its cardinality is sufficiently large (more precisely, if $|\mathcal{F}| \ge h^2 - h + 2$).
Moreover, Deza~\cite{deza1973propriete} also proved a stronger lower bound for uniform weak $\Delta$-systems in which the intersection size is exactly half of the cardinality of each set.
(We remark that our algorithm could rely on the weaker bound of Deza but using the stronger bound yields a faster algorithm.)

\begin{lemma}[{\cite[Th\'eor\`eme~1.1]{deza1973propriete}}]
  \label{prop:sunflower}
  Let $\mathcal{F}$ be a $(2 \mu)$-uniform weak $\Delta$-system with intersection size $\mu$.
  If $|\mathcal{F}| \ge \mu^2 + \mu + 2$, then $\mathcal{F}$ is a strong $\Delta$-system.
\end{lemma}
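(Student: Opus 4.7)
The plan is to argue by double counting of element-set incidences, following the standard approach for Deza-type theorems. Let $\mathcal{F} = \{S_1, \ldots, S_m\}$ be the given $(2\mu)$-uniform weak $\Delta$-system with $m \ge \mu^2 + \mu + 2$, and let $C := \bigcap_{i=1}^m S_i$ with $k := |C|$. First I would observe that once $k = \mu$ is established, the claim follows immediately: since $C \subseteq S_i \cap S_j$ and $|S_i \cap S_j| = \mu = |C|$ for all $i \ne j$, equality forces $S_i \cap S_j = C$ for every pair, which is precisely the sunflower condition with core $C$. The whole task therefore reduces to ruling out $k \le \mu - 1$.

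For each element $x$ of the ground set, write $d_x := |\{i : x \in S_i\}|$. Two elementary double counts yield $\sum_x d_x = 2\mu m$ and $\sum_x \binom{d_x}{2} = \mu \binom{m}{2}$. Elements of $C$ satisfy $d_x = m$, so removing their contribution gives $\sum_{x \notin C} d_x = (2\mu - k) m$ and $\sum_{x \notin C} \binom{d_x}{2} = (\mu - k) \binom{m}{2}$, together with the sharper cap $d_x \le m - 1$ for all $x \notin C$ (any element lying in every $S_i$ would belong to $C$). The next step is to combine these three constraints to upper bound $m$ and contradict the assumption $m \ge \mu^2 + \mu + 2$.

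I expect the hard part to be extracting the tight threshold: a pure convexity estimate on $\binom{\cdot}{2}$ subject only to the cap $d_x \le m-1$ is not sharp enough, since for $k \le \mu - 1$ one has to exploit the finer structural fact that no two sets differ only on elements covered very few times. I would therefore proceed structurally: pick two sets $S_a, S_b$ with $(S_a \cap S_b) \setminus C \ne \emptyset$, fix an element $y \in (S_a \cap S_b) \setminus C$, and examine the split $\mathcal{F}_y := \{S \in \mathcal{F} : y \in S\}$ versus $\mathcal{F} \setminus \mathcal{F}_y$. On $\mathcal{F}_y$ the intersection pattern is inherited, and deleting $y$ produces a weak $\Delta$-system on a smaller ground set whose pairwise intersections and set sizes I can control. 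An induction on $\mu$, combined with the identities above applied to the two parts, should then force $m \le \mu^2 + \mu + 1$, yielding the desired contradiction; this is essentially the route taken in Deza's original proof~\cite{deza1973propriete}, and the subtlety lies in matching the residual family parameters with the inductive hypothesis on both sides of the split without losing the quadratic tightness.
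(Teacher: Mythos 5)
The paper offers no proof of this lemma at all: it is imported verbatim from Deza~\cite{deza1973propriete} (Th\'eor\`eme~1.1), so there is no in-paper argument to compare yours against. Judged on its own, your proposal has a genuine gap. The parts you actually carry out are the easy ones: the observation that $|C|=\mu$ immediately forces $S_i\cap S_j=C$ for all pairs, and the two incidence counts $\sum_x d_x = 2\mu m$ and $\sum_x \binom{d_x}{2}=\mu\binom{m}{2}$ together with the cap $d_x\le m-1$ off the common intersection. You then correctly concede that convexity plus the cap cannot yield a contradiction (indeed it cannot: the resulting inequality is satisfied for all $m\ge 3$ and any $k\le\mu-1$), and the entire burden of the theorem is shifted onto a structural induction that you describe only in outline.

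That outline does not close as stated. If you fix $y\in(S_a\cap S_b)\setminus C$ and pass to $\{S\setminus\{y\} : S\in\mathcal{F}_y\}$, you obtain a $(2\mu-1)$-uniform family whose pairwise intersections have size $\mu-1$; this is not an instance of the lemma you are proving (which requires set size exactly twice the intersection size), so an induction on $\mu$ \emph{within the lemma's own hypotheses} is unavailable. You would need the full Deza theorem for general uniformity $h$ and intersection size $\lambda$ as the inductive statement, and you give no account of how the specific threshold $\mu^2+\mu+2$ (which is $(\mu+1)^2-(\mu+1)+2$, i.e.\ one more than the number of lines of a projective plane of order $\mu$ --- the extremal non-sunflower configuration) emerges from the split into $\mathcal{F}_y$ and $\mathcal{F}\setminus\mathcal{F}_y$. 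Without these details the quadratic bound is asserted, not proved; the proposal is a correct reduction plus a plan, not a proof.
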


We extend this result to the case in which the set size is odd.

\begin{lemma}[restate=sunflowerodd,name=]
  \label{prop:sunflowerodd}
  Let $\mathcal{F}$ be a $(2 \mu + 1)$-uniform weak $\Delta$-system.
  \begin{enumerate}[label=(\roman*)]
  \item\label{mu+1} If the intersection size of~$\mathcal{F}$ is $\mu + 1$ and $|\mathcal{F}| \ge \mu^2 + \mu + 3$, then $\mathcal{F}$ is a strong $\Delta$-system.
    \item\label{mu} If the intersection size of~$\mathcal{F}$ is $\mu$ and $|\mathcal{F}| \ge (\mu+1)^2 + \mu + 3$, then $\mathcal{F}$ is a strong~$\Delta$-system.
  \end{enumerate}
\end{lemma}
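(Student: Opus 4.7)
The plan is to reduce both parts to the even-uniform case (Lemma~\ref{prop:sunflower}) via dual transformations: for part~\ref{mu+1} (the ``dense'' case with intersection $\mu+1$) I would strip a common element from every set to obtain a $2\mu$-uniform family with intersection $\mu$; for part~\ref{mu} (the ``sparse'' case with intersection $\mu$) I would attach a fresh common element to every set to obtain a $2(\mu+1)$-uniform family with intersection $\mu+1$. The second reduction is direct; the first demands first establishing that a common element exists, and that is the main obstacle.

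For part~\ref{mu}, I would pick a fresh symbol $x_0 \notin \bigcup_{S \in \mathcal{F}} S$ and set $\mathcal{F}' := \{S \cup \{x_0\} : S \in \mathcal{F}\}$. Every set of $\mathcal{F}'$ has size $2(\mu+1)$, and any pairwise intersection equals $(S \cap T) \cup \{x_0\}$, of size $\mu+1$; thus $\mathcal{F}'$ is a $2(\mu+1)$-uniform weak $\Delta$-system with intersection size $\mu+1$. The hypothesis $|\mathcal{F}| \ge (\mu+1)^2 + \mu + 3 = (\mu+1)^2 + (\mu+1) + 2$ is precisely the threshold required by Lemma~\ref{prop:sunflower} at parameter $\mu+1$, so the lemma yields a strong $\Delta$-system with core $C'$ of size $\mu+1$. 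Because $x_0$ is contained in every pairwise intersection in $\mathcal{F}'$, we have $x_0 \in C'$, and $C' \setminus \{x_0\}$ is the desired core of size $\mu$ for~$\mathcal{F}$.

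For part~\ref{mu+1}, my goal is to exhibit an element $x^*$ common to every set of~$\mathcal{F}$. Once located, $\mathcal{F}'' := \{S \setminus \{x^*\} : S \in \mathcal{F}\}$ is $2\mu$-uniform with intersection size $\mu$ and cardinality $\ge \mu^2 + \mu + 3 > \mu^2 + \mu + 2$, so Lemma~\ref{prop:sunflower} at parameter $\mu$ furnishes a strong $\Delta$-system; reattaching $x^*$ enlarges its core to size $\mu+1$, as required. The first step toward $x^*$ is to fix arbitrary $S_1, S_2 \in \mathcal{F}$, let $A := S_1 \cap S_2$ (so $|A| = \mu+1$), and observe via inclusion--exclusion that $|S_j \cap A| \ge |S_j \cap S_1| + |S_j \cap S_2| - |S_j| = 1$ for every other $S_j$; averaging the frequencies $f(y) := |\{S \in \mathcal{F} : y \in S\}|$ over $y \in A$ then produces some $y^* \in A$ lying in at least $2 + (|\mathcal{F}|-2)/(\mu+1)$ of the sets.

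The hard part will be lifting this ``popular'' $y^*$ to a ``universal'' element, since averaging alone falls well short of $|\mathcal{F}|$. My plan is to apply Lemma~\ref{prop:sunflower} at parameter~$\mu$ to the stripped subfamily $\{S \setminus \{y^*\} : S \in \mathcal{F},\ y^* \in S\}$ once it is large enough, extracting a rigid core $C$ shared by every set containing~$y^*$. Any hypothetical $S_k \in \mathcal{F}$ with $y^* \notin S_k$ must still pairwise-intersect each set of this subfamily in exactly $\mu+1$ elements, and a case analysis on how $S_k$ can meet $C$---possibly iterated on the smaller subfamily of sets not containing $y^*$---should either force $y^*$ to lie in every set or contradict $|\mathcal{F}| \ge \mu^2 + \mu + 3$. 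Making this final compatibility argument tight, so that the precise threshold $\mu^2 + \mu + 3$ (and not just a weaker average-only bound) is actually saturated, is the technical hurdle I expect to be the main difficulty.
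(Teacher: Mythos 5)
Your treatment of part~(\ref{mu}) is correct and is exactly the paper's argument: adjoin a fresh element to every set, note that $(\mu+1)^2+\mu+3=(\mu+1)^2+(\mu+1)+2$ meets the threshold of \Cref{prop:sunflower} at parameter $\mu+1$, and strip the element back off. No issues there.

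Part~(\ref{mu+1}), however, has a genuine gap, and you have correctly located it yourself: everything hinges on producing an element $x^*$ contained in \emph{every} set of $\mathcal{F}$, and your plan for doing so does not go through. The averaging step only yields an element $y^*\in S_1\cap S_2$ lying in at least $2+(|\mathcal{F}|-2)/(\mu+1)$ sets, which for $|\mathcal{F}|=\mu^2+\mu+3$ is roughly $\mu+3$. But to apply \Cref{prop:sunflower} to the stripped subfamily $\{S\setminus\{y^*\}: y^*\in S\in\mathcal{F}\}$ you need that subfamily to have at least $\mu^2+\mu+2$ members --- an order of magnitude more than the averaging provides. The subsequent ``case analysis, possibly iterated'' is precisely the missing content of the proof; as written, part~(\ref{mu+1}) is a plan with an unresolved core difficulty rather than an argument. (Note also that the existence of a universal element is essentially equivalent to the conclusion you are trying to prove, so any direct attack on it must do real work.)

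The paper sidesteps this entirely with a different reduction: fix one set $S\in\mathcal{F}$ and pass to the family of symmetric differences $\mathcal{F}'=\{T\triangle S : T\in\mathcal{F}\setminus\{S\}\}$. A short inclusion--exclusion computation (using $|S\cap T|=|S\cap U|=\mu+1$ and $\kappa=|S\cap T\cap U|$) shows $\mathcal{F}'$ is $2\mu$-uniform with intersection size exactly $\mu$, and crucially $|\mathcal{F}'|=|\mathcal{F}|-1\ge\mu^2+\mu+2$, so \Cref{prop:sunflower} applies at full strength with no loss of cardinality. One then shows the core $C'$ of $\mathcal{F}'$ satisfies $T'\cap S=C'$ for every $T'\in\mathcal{F}'$ (otherwise the pairwise-disjoint petals would each contribute a distinct element of $S$, forcing $|S|\ge\mu^2+\mu+2>2\mu+1$), whence $S\setminus T=C'$ for all $T$, i.e.\ $\mathcal{F}$ is a sunflower with core $S\setminus C'$. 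If you want to salvage your stripping strategy, you would first need to prove the universal-element claim by some comparably global argument; the symmetric-difference trick is the cleanest way I know to avoid that entirely.
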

\begin{proof}
  \ref{mu+1} Let $S \in \mathcal{F}$ and let $\mathcal{F}' = \{ T \triangle S \mid  T \in \mathcal{F} \setminus \{ S \} \}$.
  Here $T \triangle S$ denotes the symmetric difference $(T \setminus S) \cup (S \setminus T)$.
  Note that $\mathcal{F}'$ is a $(2 \mu)$-uniform weak $\Delta$-system with intersection size $\mu$:
  \begin{itemize}
    \item
      For each $T \in \mathcal{F} \setminus \{ S \}$, we have $|T \triangle S| = |S \setminus T| + |T \setminus S| = 2 \mu$.
    \item
      We show that $|(T \triangle S) \cap (U \triangle S)| = \mu$ for each distinct $T, U \in \mathcal{F} \setminus \{ S \}$.
      We rewrite
      \begin{align*}
        &|(T \triangle S) \cap (U \triangle S)| \\
        &= |((T \setminus S) \cup (S \setminus T)) \cap ((U \setminus S) \cup (S \setminus U))| \\
        &= |((T \setminus S) \cap (U \setminus S)) \cup ((T \setminus S) \cap (S \setminus U)) \\
        &\qquad\cup  ((S \setminus T) \cap (U \setminus S)) \cup ((S \setminus T) \cap (S \setminus U))| \\
        &= |((T \cap U) \setminus S) \cup (S \setminus (T \cup U)) | \\
        &= |((T \cap U) \setminus S)| + |(S \setminus (T \cup U))|.
      \end{align*}
      Here the third equality follows from ${(T \setminus S) \cap (S \setminus U) = (S \setminus T) \cap (U \setminus S) = \emptyset}$.
      Let $\kappa = |S \cap T \cap U|$.
      Since $|S \cap T| = |S \cap U| = \mu + 1$, it follows that $|(S \cap T) \setminus U| = |(S \cap U) \setminus T| = \mu - \kappa + 1$.
      Thus, we obtain
      \begin{align*}
        |S \setminus (T \cup U))|
        &= |S| - |(S \cap T) \setminus U| - |(S \cap U) \setminus T| - |S \cap T \cap U| \\
        &= (2 \mu + 1) - (\mu - \kappa + 1) - (\mu - \kappa + 1) - \kappa = \kappa - 1.
      \end{align*}
      Moreover, we obtain 
      \begin{align*}
        |((T \cap U) \setminus S)| = |T \cap U| - |S \cap T \cap U| = \mu - \kappa + 1.
      \end{align*}
      Now we have $|(T \triangle S) \cap (U \triangle S)| = |(S \setminus (T \cup U))| + |((T \cap U) \setminus S)| = \mu$.
  \end{itemize}
  Now, \Cref{prop:sunflower} implies that $\mathcal{F}'$ is a strong $\Delta$-system.
  Let $C'$ be the core of $\mathcal{F}'$.
  Note that $|(T \triangle S) \cap S| = |S \setminus T| = \mu$ for each $T \in \mathcal{F} \setminus \{ S \}$ or equivalently $|T' \cap S| = \mu$ for each $T' \in \mathcal{F}'$.
  We claim that $T' \cap S = C'$ for each $T' \in \mathcal{F}'$.
  Suppose not.
  Then, we have two cases: $C' \setminus S \ne \emptyset$ or $(T' \cap S) \setminus C' \ne \emptyset$.
  We show that $C' \setminus S \ne \emptyset$ holds for the latter case as well.
  Since $|S \cap C'| = |T' \cap S| - |(T' \cap S') \setminus C|$, we have $|S \cap C'| < |T' \cap S| = \mu$.
  This gives us $|C' \setminus S| = |C'| - |S \cap C'| > 0$.
  We thus have $C' \setminus S \ne \emptyset$.
  It follows that there exists an element $x \in (T' \setminus C') \cap S$ for each $T' \in \mathcal{F}'$.
  Since the set family $\{ T' \setminus C' \mid T' \in \mathcal{F}' \}$ is pairwise disjoint, it gives us $|S| \ge \mu^2 + \mu + 2 > 2 \mu + 1$, a contradiction.
  Thus, $\mathcal{F}$ is a sunflower with its core being $S \setminus C'$.

  \ref{mu} Let $x$ be an element which is not included in any set of $\mathcal{F}$.
  Consider the set family $\mathcal{F}' = \{ S \cup \{ x \} \mid S \in \mathcal{F} \}$.
  It is easy to see that $\mathcal{F}'$ is a $(2\mu+2)$-uniform weak $\Delta$-system with intersection size $\mu+1$.
  Since $\mathcal{F'}$ is a sunflower by \Cref{prop:sunflower}, so is $\mathcal{F}$.
\end{proof}

In order to obtain a linear-time algorithm for $(0, 2)$-\mc, we
will prove that for $\mathbf{T} \in \{ 0, 1 \}^{n \times \ell}$ with~$\maxd(\mathbf{T})\le 2$ and sufficiently large $\ell$, 
the set system $\mathcal{T}$ is a sunflower.
This yields a linear-time algorithm via a reduction to a linear-time solvable special case of \textsc{ConRMC}.
We start with a simple observation on matrices of diameter two, which will be helpful in the subsequent proofs.

\begin{obs}
  \label{obs:d2pattern}
  Let $\mathbf{T} \in \{ 0, 1 \}^{n \times \ell}$ be a matrix with $\maxd(\mathbf{T}) \le 2$.
  For each $T_1 \in \mathcal{T}_1$ and $T_2, T_2' \in \mathcal{T}_2$, it holds that $T_1 \subseteq T_2$ and that $|T_2 \cap T_2'| \ge 1$ (otherwise there exists a pair of rows with Hamming distance at least three).
\end{obs}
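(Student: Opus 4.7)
The plan is to establish both claims directly from the definition of $\mathcal{T}_x$ via the identity $\dist(\mathbf{T}[i], \mathbf{T}[i']) = |D(\mathbf{T}[i], \mathbf{T}[n]) \triangle D(\mathbf{T}[i'], \mathbf{T}[n])|$, exploiting the diameter-two constraint to rule out certain intersection patterns.

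First I would fix representatives: pick $i_1 \in [n-1]$ with $T_1 = D(\mathbf{T}[i_1], \mathbf{T}[n])$ and $|T_1| = 1$, and similarly $i_2, i_2' \in [n-1]$ with $T_2 = D(\mathbf{T}[i_2], \mathbf{T}[n])$, $T_2' = D(\mathbf{T}[i_2'], \mathbf{T}[n])$, both of size~2. Then $\dist(\mathbf{T}[i_1], \mathbf{T}[i_2]) = |T_1 \triangle T_2|$ and $\dist(\mathbf{T}[i_2], \mathbf{T}[i_2']) = |T_2 \triangle T_2'|$ follow from the identity for complete binary vectors stated in the preliminaries.

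For the first claim, suppose $T_1 \not\subseteq T_2$. Since $|T_1| = 1$, the unique element of $T_1$ lies outside $T_2$, so $|T_1 \cap T_2| = 0$ and $|T_1 \triangle T_2| = 1 + 2 = 3$, contradicting $\maxd(\mathbf{T}) \le 2$. For the second claim, suppose $T_2 \cap T_2' = \emptyset$; then $|T_2 \triangle T_2'| = |T_2| + |T_2'| = 4 > 2$, again contradicting $\maxd(\mathbf{T}) \le 2$. Both arguments are immediate once the symmetric-difference identity is applied.

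There is essentially no obstacle here: the whole observation reduces to two one-line computations of symmetric differences. The only small care needed is to note that the rows $\mathbf{T}[i_1], \mathbf{T}[i_2], \mathbf{T}[i_2'], \mathbf{T}[n]$ are all complete binary vectors, so that the identity $\dist(u', w') = |D(u', v') \triangle D(v', w')|$ from the preliminaries applies with $v' = \mathbf{T}[n]$.
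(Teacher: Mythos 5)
Your proof is correct and matches the paper's intended argument: the paper states this as an observation justified only by the parenthetical remark that a violation would produce a pair of rows at Hamming distance three (or four), and your computation via the identity $\dist(\mathbf{T}[i],\mathbf{T}[i'])=|D(\mathbf{T}[i],\mathbf{T}[n])\triangle D(\mathbf{T}[i'],\mathbf{T}[n])|$ is exactly the calculation being left implicit. Nothing is missing.
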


The next lemma states that~$|\mathcal{T}_2|$ restricts the number of columns.

\begin{lemma}
  \label{lemma:d2distinctcols}
  Let $\mathbf{T} \in \{ 0, 1 \}^{n \times \ell}$ be a matrix consisting of only dirty columns with $\maxd(\mathbf{T}) \le 2$.
  If $\mathcal{T}_2 \ne \emptyset$, then $\ell \le |\mathcal{T}_2| + 1$.
\end{lemma}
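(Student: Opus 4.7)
The plan is to reduce the column-count bound to a simple combinatorial fact about pairwise intersecting $2$-element sets. First I would argue that $[\ell] = \bigcup_{T \in \mathcal{T}_2} T$. Fix a column index $j \in [\ell]$; since $\mathbf{T}[:, j]$ is dirty, there exists some $i \in [n-1]$ with $\mathbf{T}[i, j] \ne \mathbf{T}[n, j]$, so $j \in D(\mathbf{T}[i], \mathbf{T}[n])$, and this set belongs to $\mathcal{T}_1 \cup \mathcal{T}_2$ because its cardinality (the underlying Hamming distance) lies between $1$ and $\maxd(\mathbf{T}) \le 2$. Using $\mathcal{T}_2 \ne \emptyset$, \Cref{obs:d2pattern} guarantees that every set in $\mathcal{T}_1$ is contained in every set of $\mathcal{T}_2$, and hence lies in $\bigcup_{T \in \mathcal{T}_2} T$.

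It then remains to prove that $|\bigcup_{T \in \mathcal{T}_2} T| \le |\mathcal{T}_2| + 1$. By \Cref{obs:d2pattern} the $2$-element sets in $\mathcal{T}_2$ are pairwise intersecting, so it suffices to establish the following self-contained claim: any nonempty family $\mathcal{F}$ of pairwise intersecting $2$-element sets satisfies $|\bigcup \mathcal{F}| \le |\mathcal{F}| + 1$. I would argue by cases. If some element $a$ lies in every $S \in \mathcal{F}$, then the members have the form $\{a, b_S\}$ with pairwise distinct $b_S \ne a$ (by distinctness of the sets), giving $|\bigcup \mathcal{F}| = |\mathcal{F}| + 1$. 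Otherwise, choose $S_1 = \{a, b\} \in \mathcal{F}$ and some $S_2 \in \mathcal{F}$ omitting $a$; the intersection requirement forces $S_2 = \{b, c\}$ for some $c \ne a$. A short case split on whether a third set $S \in \mathcal{F}$ contains $b$ (combined with $S \cap S_1 \ne \emptyset$ and $S \cap S_2 \ne \emptyset$) shows that $S$ must equal one of $\{a,b\}, \{b,c\}, \{a,c\}$. Thus $|\mathcal{F}| \le 3$ and $|\bigcup \mathcal{F}| \le 3 \le |\mathcal{F}| + 1$.

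The only real work is in the second case of the combinatorial claim, namely verifying that the absence of a common element forces $\mathcal{F}$ to be a subfamily of the triangle $\{\{a,b\}, \{b,c\}, \{a,c\}\}$, ruling out any additional members. Once that dichotomy is in place, the desired inequality $\ell = |\bigcup_{T \in \mathcal{T}_2} T| \le |\mathcal{T}_2| + 1$ follows by direct application.
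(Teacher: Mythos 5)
Your proposal is correct in substance and reaches the same conclusion, but the key counting step is done by a genuinely different argument than the paper's. Both proofs begin identically: dirtiness plus \Cref{obs:d2pattern} reduce the claim to bounding $\bigl|\bigcup_{T \in \mathcal{T}_2} T\bigr|$. From there the paper argues by induction on $|\mathcal{T}_2|$, peeling off one set $T_2$ at a time and observing that $|T_2 \setminus T_2''| = |T_2| - |T_2 \cap T_2''| \le 1$ for any other $T_2'' \in \mathcal{T}_2$; this inductive scheme is reused almost verbatim for the diameter-three analogue (\Cref{lemma:d3distinctcols}), which is what it buys. You instead prove the classical structural dichotomy for pairwise-intersecting $2$-element sets: the family is either a star (all sets through one common element) or the triangle $\{\{a,b\},\{b,c\},\{a,c\}\}$. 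This is more self-contained and arguably more illuminating for $\beta = 2$, but it is specific to $2$-sets and would not transfer to $\mathcal{T}_3$. One branch of your case analysis is under-justified as written: when a further set $S$ \emph{contains} $b$, the conditions $S \cap S_1 \ne \emptyset$ and $S \cap S_2 \ne \emptyset$ that you cite are automatically satisfied and rule nothing out (e.g.\ $S = \{b,d\}$ for a fresh element $d$ passes both tests). To close this you must use the no-common-element hypothesis a second time: since $b$ is not in every member, some $S_3 \in \mathcal{F}$ omits $b$, the first branch forces $S_3 = \{a,c\}$, and then any $S$ containing $b$ must meet $\{a,c\}$ and hence equals $\{a,b\}$ or $\{b,c\}$. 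With that one-line repair the dichotomy, and hence the lemma, goes through.
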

\begin{proof}
  First, observe that $\ell = |\bigcup_{T_1 \in \mathcal{T}_1} T_1 \cup \bigcup_{T_2 \in \mathcal{T}_2} T_2|$ because each column of $\mathbf{T}$ is dirty.
  Thus, it follows from \Cref{obs:d2pattern} that $\ell = |\bigcup_{T_2 \in \mathcal{T}_2} T_2|$.
  We prove the lemma by induction on~$|\mathcal{T}_2|$.
  Clearly, we have at most two columns if $|\mathcal{T}_2| = 1$.
  Suppose that $|\mathcal{T}_2| \ge 2$.
  For $T_2 \in \mathcal{T}_2$, we claim that
  \begin{align*}
    \ell
    = \Bigg|\bigcup_{T_2' \in \mathcal{T}_2} T_2' \, \Bigg|
    = \Bigg|\bigcup_{T_2' \in \mathcal{T}_2 \setminus \{ T_2 \}} T_2' \, \Bigg| + \Bigg| \, T_2 \mathbin{\Big\backslash} \bigcup_{T_2' \in \mathcal{T}_2 \setminus \{ T_2 \}} T_2' \, \Bigg|
    \le |\mathcal{T}_2| + 1.
  \end{align*}
  The induction hypothesis gives us that $|\bigcup_{T_2' \in \mathcal{T}_2 \setminus \{ T_2 \}} T_2'| \le |\mathcal{T}_2|$.
  For the other term, observe that $|T_2 \setminus \bigcup_{T_2' \in \mathcal{T}_2 \setminus \{ T_2 \}} T_2'| \le |T_2 \setminus T_2''| = |T_2| - |T_2 \cap T_2''|$ for $T_2'' \in \mathcal{T}_2 \setminus \{ T_2 \}$.
  Hence, it follows from \Cref{obs:d2pattern} that the second term is at most 1.
\end{proof}

Next, we show that a matrix with diameter at most two has radius at most one as long as it has at least five columns. Thus, we can solve \mc{} by solving \textsc{ConRMC} with radius one, which can be done in linear time via a reduction to \textsc{2-SAT}~\cite{KFN20}.
We use the following lemma concerning certain intersections of a set with elements of a sunflower.

\begin{lemma}[{\cite[Lemma~8]{FBNS16}}]
  \label{lemma:containcore}
  Let $\lambda \in \NN$, let $\mathcal{F}$ be a sunflower with core $C$, and let $X$ be a set such that $|X \cap S| \ge \lambda$ for all $S \in \mathcal{F}$.
  If $|\mathcal F| > |X|$, then $\lambda \le |C|$ and $|X \cap C| \ge \lambda$.
\end{lemma}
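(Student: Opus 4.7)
The plan is to exploit the defining disjointness property of sunflower petals together with a simple pigeonhole-style argument. For each $S \in \mathcal{F}$, I would write $S = C \cup P_S$ where $P_S := S \setminus C$ is the petal of $S$, and use the fact that the petals are pairwise disjoint and each disjoint from $C$. Intersecting with $X$ then yields the disjoint decomposition
\[
X \cap S = (X \cap C) \cup (X \cap P_S),
\]
so $|X \cap S| = |X \cap C| + |X \cap P_S|$ for every $S \in \mathcal{F}$.

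The main step is to establish $|X \cap C| \ge \lambda$ by contradiction. Assuming $|X \cap C| < \lambda$, the identity above forces $|X \cap P_S| \ge \lambda - |X \cap C| \ge 1$ for every $S \in \mathcal{F}$, so one can select an element $x_S \in X \cap P_S$ for each such $S$. Since the petals are pairwise disjoint, the chosen elements $\{ x_S \mid S \in \mathcal{F} \}$ are pairwise distinct and all lie in $X$, which gives $|X| \ge |\mathcal{F}|$, contradicting the hypothesis $|\mathcal{F}| > |X|$. Hence $|X \cap C| \ge \lambda$ as required, and combining this with the trivial bound $|X \cap C| \le |C|$ immediately yields $\lambda \le |C|$ as well.

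I do not anticipate a serious obstacle; the argument is purely combinatorial and relies only on the pairwise disjointness of the petals and elementary counting. The one point worth a brief sanity check is that the sunflower definition in the paper (with $C \subseteq S_1 \cup \dots \cup S_m$ and pairwise intersections equal to $C$) does imply $C \subseteq S_i$ for each $i$ whenever $|\mathcal{F}| \ge 2$, which holds here since $|\mathcal{F}| > |X| \ge 0$ rules out the only truly trivial case.
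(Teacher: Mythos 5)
Your argument is correct, and it is the standard proof of this fact: decompose $X \cap S$ into the disjoint pieces $X \cap C$ and $X \cap P_S$, and if $|X\cap C|<\lambda$ then each of the pairwise disjoint petals contributes a distinct element of $X$, forcing $|X|\ge|\mathcal{F}|$ and contradicting $|\mathcal{F}|>|X|$. Note that the paper does not prove this lemma at all --- it imports it verbatim from Froese et al.\ [FBNS16, Lemma~8] --- so there is no in-paper proof to compare against; your write-up supplies exactly the intended argument. One tiny slip in your closing sanity check: $|\mathcal{F}|>|X|\ge 0$ only gives $|\mathcal{F}|\ge 1$, not $|\mathcal{F}|\ge 2$, but this is harmless because for a singleton family the definition still forces $C\subseteq S_1\cup\dots\cup S_m=S_1$, so the decomposition $S=C\cup P_S$ you rely on remains valid.
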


\begin{lemma}
  \label{lemma:d2structure}
  Let $\mathbf{T} \in \{ 0, 1 \}^{n \times \ell}$ be a matrix with $\maxd(\mathbf{T}) \le 2$.
  If $\ell \ge 5$, then there exists a vector $v \in \{ 0, 1 \}^\ell$ such that $\dist(v, \mathbf{T}[i]) \le 1$ for all $i \in [n]$.
\end{lemma}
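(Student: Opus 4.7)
The plan is to exploit the sunflower structure of $\mathcal{T}_2$ guaranteed by \Cref{prop:sunflower}. Following the subsection's standing convention, I assume all columns of $\mathbf{T}$ are dirty (non-dirty columns can be matched by $v$ without penalty, so they do not affect the argument). I then split into two cases based on whether $\mathcal{T}_2$ is empty.

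If $\mathcal{T}_2 = \emptyset$, every row satisfies $\dist(\mathbf{T}[i], \mathbf{T}[n]) \le 1$, so $v := \mathbf{T}[n]$ itself is a radius-$1$ center and we are done.

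The interesting case is $\mathcal{T}_2 \ne \emptyset$. By \Cref{lemma:d2distinctcols} we get $\ell \le |\mathcal{T}_2| + 1$, whence $|\mathcal{T}_2| \ge 4$. The family $\mathcal{T}_2$ is $2$-uniform, and by \Cref{obs:d2pattern} any two of its distinct members intersect in at least one element; since these are distinct $2$-element sets, the intersection is exactly one element. Hence $\mathcal{T}_2$ is a $2$-uniform weak $\Delta$-system with intersection size $1$, and \Cref{prop:sunflower} with $\mu = 1$ (so the threshold is $\mu^2 + \mu + 2 = 4$) promotes it to a strong $\Delta$-system whose core must be a singleton $\{c\}$ for some column~$c$.

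I then take $v$ to be $\mathbf{T}[n]$ with the $c$-th entry flipped, and check that $\dist(v, \mathbf{T}[i]) \le 1$ for every $i \in [n]$. By construction $\dist(v, \mathbf{T}[n]) = 1$. For any row with $D(\mathbf{T}[i], \mathbf{T}[n]) \in \mathcal{T}_2$, we have $c \in D(\mathbf{T}[i], \mathbf{T}[n])$, so $|D(\mathbf{T}[i], \mathbf{T}[n]) \triangle \{c\}| = 1$. For any row with $D(\mathbf{T}[i], \mathbf{T}[n]) \in \mathcal{T}_1$, \Cref{obs:d2pattern} forces $D(\mathbf{T}[i], \mathbf{T}[n]) \subseteq \bigcap_{T \in \mathcal{T}_2} T = \{c\}$, so $D(\mathbf{T}[i], \mathbf{T}[n]) = \{c\}$ and the distance is $0$. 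Finally, for rows with $D(\mathbf{T}[i], \mathbf{T}[n]) = \emptyset$ the distance to $v$ is $1$. The main hurdle is identifying the right column~$c$ to flip, which is precisely the singleton core of the sunflower~$\mathcal{T}_2$; once this is in hand, the verification that members of $\mathcal{T}_1$ necessarily coincide with the core follows quickly from \Cref{obs:d2pattern}.
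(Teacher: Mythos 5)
Your proof is correct and follows essentially the same route as the paper's: dispose of the case $\mathcal{T}_2=\emptyset$ directly, use \Cref{lemma:d2distinctcols} to get $|\mathcal{T}_2|\ge 4$, invoke Deza's theorem (\Cref{prop:sunflower}) to obtain a singleton core $\{c\}$, and take $v$ to be $\mathbf{T}[n]$ flipped at $c$. The only cosmetic difference is that you pin down the sets in $\mathcal{T}_1$ via the containment $T_1\subseteq\bigcap_{T\in\mathcal{T}_2}T=\{c\}$ coming directly from \Cref{obs:d2pattern}, whereas the paper routes this step through \Cref{lemma:containcore}; both arguments are valid.
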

\begin{proof}
  If $\mathcal{T}_2 = \emptyset$, then we are immediately done by definition, because $\dist(\mathbf{T}[n], \mathbf{T}[i]) \le 1$ for all $i \in [n]$ (see \Cref{fig:d2structure1} for an illustration).
  Since $\ell \ge 5$, \Cref{lemma:d2distinctcols} implies $|\mathcal{T}_2| \ge 4$.

  It follows from \Cref{obs:d2pattern} that $\mathcal{T}_2$ is a 2-uniform weak $\Delta$-system with intersection size~one (see \Cref{fig:d2structure2}).
  Thus,~$\mathcal{T}_2$ is a sunflower by \Cref{prop:sunflower}.
  Let $\{ j_{\core} \}$ denote the core of~$\mathcal{T}_2$.
  Note that $|T_1 \cap T_2| \ge 1$ holds for each $T_1 \in \mathcal{T}_1$ and $T_2 \in \mathcal{T}_2$ by \Cref{obs:d2pattern}.
  Now we can infer from \Cref{lemma:containcore} (let $X = T_1$, $\lambda = 1$, and $\mathcal{F} = \mathcal{T}_2$) that $\mathcal{T} \subseteq \{ T_1 \}$, where $T_1 = \{ j_{\core} \}$.

  Hence, it holds that $\dist(v, \mathbf{T}[i]) \le 1$ for all $i \in [n]$, where $v \in \{ 0, 1 \}^\ell$ is a row vector such that $v[j_{\core}] = 1 - \mathbf{T}[n, j_{\core}]$ and $v[j] = \mathbf{T}[n, j]$ for each $j \in [\ell] \setminus \{ j_{\core} \}$.
\end{proof}

% We arrive at the following theorem.

\begin{figure}[t]
  \centering
  \begin{subfigure}{.32\textwidth}
    \centering
    \begin{tikzpicture}[scale=0.45]
      \fill[black] (4,5) rectangle (5,6);
      \fill[black] (3,4) rectangle (4,5);
      \fill[black] (0,3) rectangle (1,4);
      \fill[black] (2,2) rectangle (3,3);
      \fill[black] (1,1) rectangle (2,2);
      \draw[help lines] (0,0) grid (5,6);
    \end{tikzpicture}
    \caption{The case $\mathcal{T}_2 = \emptyset$.}
    \label{fig:d2structure1}
  \end{subfigure}
  % \begin{subfigure}{.32\textwidth}
  %   \centering
  %   \begin{tikzpicture}[scale=0.45]
  %     \fill[black] (5,5) rectangle (6,6);
  %     \fill[black] (4,4) rectangle (5,5);
  %     \fill[black] (0,3) rectangle (1,4);
  %     \fill[black] (3,2) rectangle (4,3);
  %     \fill[black] (2,1) rectangle (3,2);
  %     \fill[black] (1,1) rectangle (2,6);
  %     \draw[help lines] (0,0) grid (6,6);
  %   \end{tikzpicture}
  %   \caption{\Cref{lemma:d2structure}~\ref{lemma:d2structure2}}
  % \end{subfigure}
  \begin{subfigure}{.32\textwidth}
    \centering
    \begin{tikzpicture}[scale=0.45]
      \fill[black] (4,5) rectangle (5,6);
      \fill[black] (3,4) rectangle (4,5);
      \fill[black] (0,3) rectangle (1,4);
      \fill[black] (2,2) rectangle (3,3);
      \fill[black] (1,1) rectangle (2,6);
      \draw[help lines] (0,0) grid (5,6);
    \end{tikzpicture}
    \caption{The case $|\mathcal{T}_2| \ge 4$.}
    \label{fig:d2structure2}
  \end{subfigure}
  \caption{
    Illustration of \Cref{lemma:d2structure} with $n = 6$.
    A black cell denotes a value different from row~$\mathbf{T}[6]$.
    In~(b) the set system~$\mathcal{T}_2$ forms a sunflower with core~$\{2\}$.
    In both cases the radius is one.
  }
  \label{fig:d2structure}
\end{figure}

\begin{theorem}
  \label{theorem:d2linear}
  $(0, 2)$-\mc{} can be solved in $O(n \ell)$ time.
\end{theorem}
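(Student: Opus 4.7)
The plan is to reduce $(0,2)$-\mc to the special case of \textsc{ConRMC} with unit radius, which admits an $O(n\ell)$-time algorithm by~\cite{KFN20}. The key enabler is Lemma~\ref{lemma:d2structure}: once the instance has at least five dirty columns, every diameter-$2$ completion automatically has radius at most one, so the pairwise diameter constraint collapses into a single-center constraint amenable to \textsc{ConRMC}.

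First I would scan the input in $O(n\ell)$ time and compute the index set $J\subseteq[\ell]$ of dirty columns. Because $\alpha=0$, any non-dirty column can be completed to a single constant value without contributing to any pairwise Hamming distance, so the original instance is a \textbf{Yes}-instance if and only if its restriction $\mathbf{S}' := \mathbf{S}[:,J]$ is. This preprocessing both takes and fits inside the $O(n\ell)$ budget.

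Next I would distinguish two cases on $|J|$. If $|J|\le 4$, then $\mathbf{S}'$ has constantly many columns and I directly invoke Lemma~\ref{lemma:linearforconstantl} for an $O(n)$-time solution. If $|J|\ge 5$, I claim that $\mathbf{S}'$ admits a completion of diameter at most $2$ if and only if there exists a vector $v\in\{0,1\}^{|J|}$ with $\dist(v,\mathbf{S}'[i])\le 1$ for all $i\in[n]$; the latter is precisely a \textsc{ConRMC} instance with $r\equiv 1$, and hence is solved in $O(n\ell)$ time by~\cite{KFN20}. The ``if'' direction is the triangle inequality applied to any two rows each completed within Hamming distance $1$ of the common witness~$v$. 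For the ``only if'' direction I take an arbitrary valid completion $\mathbf{T}$ of $\mathbf{S}'$, apply Lemma~\ref{lemma:d2structure} to obtain a vector $v$ with $\dist(v,\mathbf{T}[i])\le 1$ for all $i$, and note that $\dist(v,\mathbf{S}'[i])\le \dist(v,\mathbf{T}[i])$ because $\mathbf{S}'[i]$ only disagrees with $v$ at positions where $\mathbf{T}[i]$ already does.

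No real obstacle remains: the combinatorial heavy lifting was done in Lemma~\ref{lemma:d2structure}, and both branches run in time $O(n\ell)$, matching the claimed bound. The only detail worth double-checking is that restricting to dirty columns does not spuriously turn a \textbf{No}-instance into a \textbf{Yes}-instance, which is immediate from the $\alpha=0$ assumption since pairwise distances only drop when columns are removed and the upper bound~$\beta$ is monotone in the same direction.
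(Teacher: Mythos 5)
Your proposal is correct and follows essentially the same route as the paper's proof: a case distinction on the number of dirty columns, invoking Lemma~\ref{lemma:linearforconstantl} when there are at most four, and otherwise reducing to \textsc{ConRMC} with unit radius via Lemma~\ref{lemma:d2structure} (forward direction) and the triangle inequality (backward direction). The only cosmetic difference is that you make the restriction to dirty columns and the inequality $\dist(v,\mathbf{S}'[i])\le\dist(v,\mathbf{T}[i])$ explicit, whereas the paper handles the former as a standing assumption for the whole subsection.
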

\begin{proof}
  Let $\mathbf{S} \in \{ 0, 1, \square \}^{n \times \ell}$ be an input matrix of $(0, 2)$-\mc. 
  If $\ell \le 4$, then we use the linear-time algorithm of \Cref{lemma:linearforconstantl}.
  Henceforth, we assume that $\ell \ge 5$.

  We claim that $\mathbf{S}$ is a \textbf{Yes}-instance if and only if the \textsc{ConRMC} instance $I = (\mathbf{S}, 1^n)$ is a \textbf{Yes}-instance.

  $(\Rightarrow)$
  Let $\mathbf{T}$ be a completion of $\mathbf{S}$ with $\maxd(\mathbf{T}) \le 2$.
  Since $\ell \ge 5$, there exists a vector $v$ such that $\dist(v, \mathbf{T}[i]) \le 1$ for all $i \in [n]$ by \Cref{lemma:d2structure}.
  It follows that $I$ is a \textbf{Yes}-instance.

  $(\Leftarrow)$
  Let $v$ be a solution of $I$.
  Let $\mathbf{T}$ be the matrix such that for each $i \in [n]$, $\mathbf{T}[i] = \mathbf{S}[i] \leftarrow v$ (recall that $u \leftarrow v$ denotes the vector obtained from $u$ by replacing all missing entries of $u$ with the entries of $v$ in the corresponding positions).
  % ($\mathbf{S}[i] \leftarrow v$ denotes the vector obtained by filling each missing entry in $\mathbf{S}[i]$ with the symbol in the corresponding position of $v$).
  Then, we have $\dist(v, \mathbf{T}[i]) \le 1$ for each $i \in [n]$.
  By the triangle inequality, we obtain $\dist(\mathbf{T}[i], \mathbf{T}[i']) \le \dist(v, \mathbf{T}[i]) + \dist(v, \mathbf{T}[i']) \le 2$ for each $i, i' \in [n]$.

  Since \textsc{ConRMC} can be solved in linear time when $\max_{i \in [n]} r[i] = 1$ \cite[Theorem~1]{KFN20}, it follows that $(0, 2)$-\mc{} can be solved in linear time.
\end{proof}

We next show polynomial-time solvability of $(0,3)$-\mc{} (\Cref{theorem:d3polynomial}).
The overall idea is, albeit technically more involved, similar to $(0, 2)$-\mc.
We first show that the set family~$\mathcal{T}$ of a matrix~$\mathbf{T}$ with $\delta(\mathbf{T}) = 3$ contains a sunflower by \Cref{prop:sunflowerodd}.
We then show that such a matrix has a certain structure which again allows us to reduce the problem to the linear-time solvable special case of~\textsc{ConRMC} with radius one.

We start with an observation on a matrix whose diameter is at most three.

\begin{obs}
  \label{obs:d3pattern}
  Let $\mathbf{T} \in \{ 0, 1 \}^{n \times \ell}$ be a matrix with $\maxd(\mathbf{T}) \le 3$.
  It holds for each $T_1 \in \mathcal{T}_1$, $T_2 \in \mathcal{T}_2$, and $T_3, T_3' \in \mathcal{T}_3$ that $T_1 \subseteq T_3$, $T_2 \cap T_3 \ne \emptyset$, and $|T_3 \cap T_3'| \ge 2$ (otherwise there exists a pair of rows with Hamming distance four).
\end{obs}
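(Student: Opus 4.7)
The plan is to derive all three statements from a single, uniform calculation using the identity $\dist(u', w') = |D(u', v')| + |D(v', w')| - 2|D(u', v') \cap D(v', w')|$ recorded in the preliminaries, applied with $v' = \mathbf{T}[n]$ as the common reference row. Since every $T_x \in \mathcal{T}_x$ is realized as $D(\mathbf{T}[i], \mathbf{T}[n])$ for some row index $i \in [n-1]$ with $\dist(\mathbf{T}[i], \mathbf{T}[n]) = x$, the diameter bound $\maxd(\mathbf{T}) \le 3$ translates each of the three required inclusions/intersection conditions into a lower bound on the intersection size of two sets in $\mathcal{T}$.

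More concretely, fix rows $\mathbf{T}[i_1], \mathbf{T}[i_2], \mathbf{T}[i_3], \mathbf{T}[i_3']$ realizing $T_1, T_2, T_3, T_3'$ respectively, and evaluate the Hamming distance between each relevant pair via the identity above, noting that $|T_x| = x$. For $T_1 \in \mathcal{T}_1$ and $T_3 \in \mathcal{T}_3$, the bound $\dist(\mathbf{T}[i_1], \mathbf{T}[i_3]) = 1 + 3 - 2|T_1 \cap T_3| \le 3$ yields $|T_1 \cap T_3| \ge 1$; combined with $|T_1| = 1$, this gives $T_1 \subseteq T_3$. For $T_2 \in \mathcal{T}_2$ and $T_3 \in \mathcal{T}_3$, the bound $\dist(\mathbf{T}[i_2], \mathbf{T}[i_3]) = 2 + 3 - 2|T_2 \cap T_3| \le 3$ gives $|T_2 \cap T_3| \ge 1$, i.e., $T_2 \cap T_3 \ne \emptyset$. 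Finally, for $T_3, T_3' \in \mathcal{T}_3$, the bound $\dist(\mathbf{T}[i_3], \mathbf{T}[i_3']) = 6 - 2|T_3 \cap T_3'| \le 3$ forces $|T_3 \cap T_3'| \ge 3/2$, and hence $|T_3 \cap T_3'| \ge 2$ since the intersection size is integral.

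There is no substantive obstacle; the only minor point to keep in mind is that one must work with complete vectors so that the symmetric-difference identity applies verbatim (which is the case here since $\mathbf{T} \in \{0,1\}^{n \times \ell}$), and that the parity argument in the third case relies on $|T_3 \cap T_3'|$ being an integer rather than merely a real number bounded below by $3/2$.
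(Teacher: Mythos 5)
Your proposal is correct and is exactly the argument the paper intends: the observation is stated with only the parenthetical hint that a violation would produce two rows at Hamming distance four, and your calculation via the identity $\dist(u',w') = |D(u',v')| + |D(v',w')| - 2|D(u',v') \cap D(v',w')|$ with $v' = \mathbf{T}[n]$ simply makes that hint explicit. The three intersection bounds you derive (including the integrality step for $|T_3 \cap T_3'| \ge 2$) are all accurate.
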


From \Cref{obs:d3pattern}, we obtain (by induction) the following lemma analogously to \Cref{lemma:d2distinctcols}.

\begin{lemma}
  \label{lemma:d3distinctcols}
  Let $\mathbf{T} \in \{ 0, 1 \}^{n \times \ell}$ be a matrix consisting of dirty columns with $\maxd(\mathbf{T}) \le 3$.
  If $\mathcal{T}_3 \neq \emptyset$, then $\ell \le |\mathcal{T}_2| + |\mathcal{T}_3| + 2$.
\end{lemma}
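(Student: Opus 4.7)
The plan is to mimic the inductive argument used in the proof of \Cref{lemma:d2distinctcols}, exploiting the three intersection properties stated in \Cref{obs:d3pattern}. As a preparatory step, note that since every column of $\mathbf{T}$ is dirty, every column index lies in at least one set of $\mathcal{T} = \mathcal{T}_1 \cup \mathcal{T}_2 \cup \mathcal{T}_3$. Because $\mathcal{T}_3 \ne \emptyset$ and every $T_1 \in \mathcal{T}_1$ is contained in every $T_3 \in \mathcal{T}_3$ by \Cref{obs:d3pattern}, the sets in $\mathcal{T}_1$ contribute no new column indices, so
\[
  \ell \;=\; \Bigl|\bigcup_{T_2 \in \mathcal{T}_2} T_2 \;\cup\; \bigcup_{T_3 \in \mathcal{T}_3} T_3 \Bigr|.
\]

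The proof then proceeds by induction on $|\mathcal{T}_3|$. For the base case $|\mathcal{T}_3| = 1$, let $\mathcal{T}_3 = \{T_3\}$; then $T_3$ contributes exactly three indices, and for every $T_2 \in \mathcal{T}_2$ we have $T_2 \cap T_3 \ne \emptyset$ by \Cref{obs:d3pattern}, so $|T_2 \setminus T_3| \le 1$, yielding $\ell \le 3 + |\mathcal{T}_2| = |\mathcal{T}_2| + |\mathcal{T}_3| + 2$.

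For the inductive step $|\mathcal{T}_3| \ge 2$, pick any $T_3^\ast \in \mathcal{T}_3$ and split the union:
\[
  \ell \;\le\; \Bigl|\bigcup_{T_2 \in \mathcal{T}_2} T_2 \,\cup\, \bigcup_{T_3 \in \mathcal{T}_3 \setminus \{T_3^\ast\}} T_3 \Bigr| \;+\; \Bigl|\,T_3^\ast \,\Big\backslash\, \bigcup_{T_3 \in \mathcal{T}_3 \setminus \{T_3^\ast\}} T_3 \Bigr|.
\]
Applying the induction hypothesis to the (still nonempty) family $\mathcal{T}_3 \setminus \{T_3^\ast\}$ bounds the first term by $|\mathcal{T}_2| + (|\mathcal{T}_3| - 1) + 2$. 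For the second term, fix any $T_3' \in \mathcal{T}_3 \setminus \{T_3^\ast\}$; by \Cref{obs:d3pattern} we have $|T_3^\ast \cap T_3'| \ge 2$, hence $|T_3^\ast \setminus T_3'| \le 1$, so the second term is at most one. Summing gives $\ell \le |\mathcal{T}_2| + |\mathcal{T}_3| + 2$, completing the induction.

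The argument is essentially routine once \Cref{obs:d3pattern} is in hand; the only subtlety worth flagging is the clean handling of $\mathcal{T}_1$, which is simply absorbed into $\mathcal{T}_3$ via the containment $T_1 \subseteq T_3$ and therefore contributes nothing to the bound. I do not anticipate any serious obstacle, since the intersection bounds $|T_2 \cap T_3| \ge 1$ and $|T_3 \cap T_3'| \ge 2$ each translate directly into an ``at most one new column'' contribution in exactly the way the $(0,2)$ case uses $|T_2 \cap T_2'| \ge 1$.
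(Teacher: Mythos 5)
Your proof is correct and is exactly the argument the paper intends: the paper omits the details, stating only that the lemma follows ``by induction, analogously to \Cref{lemma:d2distinctcols},'' and your induction on $|\mathcal{T}_3|$ — absorbing $\mathcal{T}_1$ into $\mathcal{T}_3$ via $T_1 \subseteq T_3$, charging at most one new column to each $T_2$ (since $T_2 \cap T_3 \ne \emptyset$) and to each additional $T_3^\ast$ (since $|T_3^\ast \cap T_3'| \ge 2$) — is precisely that analogous argument. No gaps.
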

\begin{proof}
  First, observe that $\ell = |\bigcup_{T_1 \in \mathcal{T}_1} T_1 \cup \bigcup_{T_2 \in \mathcal{T}_2} T_2 \cup \bigcup_{T_3 \in \mathcal{T}_3} T_3|$ because each column of $\mathbf{T}$ is dirty.
  Thus, it follows from \Cref{obs:d2pattern} that $\ell = |\bigcup_{T_2 \in \mathcal{T}_2} T_2 \cup \bigcup_{T_3 \in \mathcal{T}_3} T_3|$.
  We prove the lemma by induction on~$|\mathcal{T}_2| + |\mathcal{T}_3|$.
  We have at most three columns if $|\mathcal{T}_2| + |\mathcal{T}_3| = 1$.
  Suppose that $|\mathcal{T}_2| + |\mathcal{T}_3| \ge 2$.
  For $T \in \mathcal{T}_2 \cup \mathcal{T}_3$ of minimize size, we claim that
  \begin{align*}
    \ell
    = \Bigg|\bigcup_{T' \in \mathcal{T}_2 \cup \mathcal{T}_3} T' \, \Bigg|
    = \Bigg|\bigcup_{T' \in \mathcal{T}_2 \cup \mathcal{T}_3 \setminus \{ T \}} T' \, \Bigg| + \Bigg| \, T \mathbin{\Big\backslash} \bigcup_{T' \in \mathcal{T}_2 \cup  \mathcal{T}_3 \setminus \{ T \}} T' \, \Bigg|
    \le |\mathcal{T}_2| + |\mathcal{T}_3| + 2.
  \end{align*}
  The induction hypothesis gives us that $|\bigcup_{T' \in \mathcal{T}_2 \cup \mathcal{T}_3 \setminus \{ T \}} T'| \le |\mathcal{T}_2| + |\mathcal{T}_3| + 1$.
  For the other term, since we can assume that $\mathcal{T}_2 \cup \mathcal{T}_3 \setminus \{ T \}$ has a set $T_3$ from $\mathcal{T}_3$, we have $|T \setminus \bigcup_{T \in \mathcal{T}_2 \cup \mathcal{T}_3 \setminus \{ T \}} T'| \le |T \setminus T_3| = |T| - |T \cap T_3|$.
  Hence, it follows from \Cref{obs:d2pattern} that the second term is at most 1.
\end{proof}

Our goal is to use \Cref{prop:sunflowerodd} to derive that $\mathcal{T}_3$ forms a sunflower, that is, we need that $|\mathcal{T}_3|\ge 5$.
The next lemma shows that this holds when~$\mathbf{T}$ has at least 14 dirty columns. 

\begin{lemma}
  \label{lemma:fived3pairs}
  Let $\mathbf{T} \in \{ 0, 1 \}^{n \times \ell}$ be a matrix consisting of $\ell \ge 14$ dirty columns with $\maxd(\mathbf{T}) = 3$.
  Then, $|\mathcal{T}_3| \ge 5$ (for some reordering of rows).
\end{lemma}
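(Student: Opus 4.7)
My plan is to argue by contradiction: suppose that under every permutation of rows, $|\mathcal{T}_3| \le 4$, or equivalently, no row of $\mathbf{T}$ has five or more distinct other rows at Hamming distance exactly $3$ from it. Since $\maxd(\mathbf{T}) = 3$, some pair of rows achieves distance $3$; first I permute rows so that $\mathbf{T}[n]$ participates in such a pair, forcing $1 \le |\mathcal{T}_3| \le 4$. By \Cref{lemma:d3distinctcols}, this yields $|\mathcal{T}_2| \ge \ell - |\mathcal{T}_3| - 2 \ge 14 - 4 - 2 = 8$.

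Next I extract sunflower structure from $\mathcal{T}_2$. For any two distinct $T_2, T_2' \in \mathcal{T}_2$, the calculation $|T_2 \triangle T_2'| = 4 - 2|T_2 \cap T_2'| \le \maxd(\mathbf{T}) = 3$ forces $|T_2 \cap T_2'| = 1$, so $\mathcal{T}_2$ is a $2$-uniform weak $\Delta$-system of intersection size $1$. Since $|\mathcal{T}_2| \ge 8 \ge 4$, \Cref{prop:sunflower} (with $\mu = 1$) makes $\mathcal{T}_2$ a sunflower; call its core $\{j^*\}$ and its petal tips $x_1, \dots, x_m$, where $m = |\mathcal{T}_2|$. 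Then I pin down $\mathcal{T}_3$ using \Cref{obs:d3pattern}: every $T_3 \in \mathcal{T}_3$ meets every petal $\{j^*, x_i\}$, so if $j^* \notin T_3$ then $T_3$ would have to contain all $m \ge 8$ distinct petal tips, which is impossible for a $3$-element set. Hence every $T_3 \in \mathcal{T}_3$ is of the form $\{j^*, y_1, y_2\}$.

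To finish, I pick any $T_3^* = \{j^*, y_1, y_2\} \in \mathcal{T}_3$ and let $\mathbf{T}[i^*]$ be its corresponding row. For each $T_2 = \{j^*, x\} \in \mathcal{T}_2$ with $x \notin \{y_1, y_2\}$---there are at least $m - 2 \ge 6$ such $T_2$---the associated row satisfies $|T_2 \triangle T_3^*| = |\{x, y_1, y_2\}| = 3$, so it is at Hamming distance exactly $3$ from $\mathbf{T}[i^*]$. Permuting rows so that $\mathbf{T}[i^*]$ becomes the last row therefore gives $|\mathcal{T}_3| \ge 6$, contradicting the hypothesis and in particular establishing $|\mathcal{T}_3| \ge 5$ for this permutation. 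The main technical hurdle is this ``bootstrap'' step: starting from only one distance-$3$ pair, one must first invoke \Cref{lemma:d3distinctcols} to blow up $\mathcal{T}_2$, then apply the Deza-type bound to crystallize it into a rigid sunflower with a single core column $j^*$ shared by all $\mathcal{T}_3$-sets, and only then exploit that rigidity to relocate a reference row with an abundance of distance-$3$ neighbors.
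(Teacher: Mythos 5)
Your proof is correct. The skeleton matches the paper's: assume (toward a contradiction with the best row permutation) that $|\mathcal{T}_3|\le 4$, note $\mathcal{T}_3\neq\emptyset$ since $\maxd(\mathbf{T})=3$, invoke \Cref{lemma:d3distinctcols} to get $|\mathcal{T}_2|\ge 8$, and then exhibit a row with at least five distance-$3$ neighbours, contradicting the choice of permutation. Where you differ is the middle step: you first crystallize $\mathcal{T}_2$ into a sunflower via \Cref{prop:sunflower}, deduce that every $T_3\in\mathcal{T}_3$ contains the core element $j^*$, and only then count the sets $T_2=\{j^*,x\}$ with $x\notin T_3^*$. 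The paper skips the sunflower machinery entirely here: fixing any $T_3\in\mathcal{T}_3$, \Cref{obs:d3pattern} gives $T_2\cap T_3\neq\emptyset$ for all $T_2\in\mathcal{T}_2$, and since a $3$-set has only three $2$-subsets, at least $8-3=5$ of the sets $T_2$ satisfy $|T_2\cap T_3|=1$ and hence $|T_2\triangle T_3|=3$ -- a one-line pigeonhole in place of your Deza detour. Both arguments are sound (yours even yields the slightly stronger count of six), but the direct pigeonhole is shorter and defers the sunflower structure to \Cref{lemma:d3structure}, where it is actually needed; your version front-loads that structure at the cost of extra work that this particular counting step does not require.
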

\begin{proof}
  Assume that the rows are reordered such that $|\mathcal{T}_3|$ is maximized.
  If $|\mathcal{T}_3| \le 4$, then we have $|\mathcal{T}_2| \ge 8$ by \Cref{lemma:d3distinctcols}.
  Let $T_3 \in \mathcal{T}_3$.
  By \Cref{obs:d3pattern}, $T_2 \cap T_3 \neq \emptyset$ holds for each $T_2 \in \mathcal{T}_2$.
  There are at most three sets $T_2\in\mathcal{T}_2$ with $T_2 \subseteq T_3$.
  Thus, there are at least five sets $T_2\in\mathcal{T}_2$ such that $|T_2 \cap T_3| = 1$.
  %Let $i_0 \in [n]$ be a row index with $T_3 = D(\mathbf{T}[i_0], \mathbf{T}[n])$.
  For each of these five sets, it holds that $|T_2 \triangle T_3| = |T_2| + |T_3| - 2|T_2 \cap T_3| = 3$.
  This contradicts the choice of the row reordering.
\end{proof}

With \Cref{lemma:fived3pairs} at hand, we are ready to reveal the structure of a diameter-three matrix (see \Cref{fig:d3structure} for an illustration).

\begin{lemma}
  \label{lemma:d3structure}
  Let $\mathbf{T} \in \{ 0, 1 \}^{n \times \ell}$ be a matrix with $\delta(\mathbf{T}) \le 3$ and $|\mathcal{T}_3| \ge 5$.
  Then, there exist $j_1 \ne j_2 \in [\ell]$ such that the following hold:
  \begin{itemize}
    \item
      $T_1 \subseteq \{ j_1, j_2 \}$ for each $T_1 \in \mathcal{T}_1$.
    \item
      $T_2 \cap \{ j_1, j_2 \} \ne \emptyset$ for each $T_2 \in \mathcal{T}_2$.
    \item
      $T_3 \supseteq \{ j_1, j_2 \}$ for each $T_3 \in \mathcal{T}_3$.
  \end{itemize}
  Moreover, exactly one of the following holds for $\mathcal{T}_2' = \{ T_2 \in \mathcal{T}_2 \mid j_1 \in T_2 \wedge j_2 \not\in T_2 \}$ and  $\mathcal{T}_2'' = \{ T_2 \in \mathcal{T}_2 \mid j_1 \not\in T_2 \wedge j_2 \in T_2 \}$:
  \begin{enumerate}[label={(\alph*)}]
    \item
      \label{lemma:d3structure1}
      $\mathcal{T}_2' = \emptyset$ or $\mathcal{T}_2'' = \emptyset$.
    \item
      \label{lemma:d3structure2}
      $\mathcal{T}_2' = \{ \{ j_1, j_3 \} \}$ and $\mathcal{T}_2'' = \{ \{ j_2, j_3 \} \}$ for some $j_3 \in [\ell]$.
    \end{enumerate}
\end{lemma}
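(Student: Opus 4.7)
\medskip
\noindent\textbf{Proof plan.}
The plan is to first establish the sunflower structure on $\mathcal{T}_3$ via Deza-type results, then to propagate this structure to $\mathcal{T}_1$ and $\mathcal{T}_2$ through the intersection bounds of \Cref{obs:d3pattern}, and finally to pin down the remaining freedom in $\mathcal{T}_2'$ and $\mathcal{T}_2''$ by a short case analysis.

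Step~1: \emph{The sunflower core.} Observe that $\mathcal{T}_3$ is $3$-uniform, and by \Cref{obs:d3pattern} any two distinct sets in $\mathcal{T}_3$ meet in at least two elements, hence (being of size $3$) in exactly two elements. So $\mathcal{T}_3$ is a $(2\mu+1)$-uniform weak $\Delta$-system with $\mu=1$ and intersection size $\mu+1=2$. Since $|\mathcal{T}_3|\ge 5 = \mu^2+\mu+3$, \Cref{prop:sunflowerodd}(\ref{mu+1}) yields that $\mathcal{T}_3$ is a strong $\Delta$-system. Its core has size $2$; denote it by $\{j_1,j_2\}$. This immediately gives the third bullet $T_3\supseteq\{j_1,j_2\}$ for every $T_3\in\mathcal{T}_3$.

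Step~2: \emph{Propagation to $\mathcal{T}_1$ and $\mathcal{T}_2$.} For each $T_1\in\mathcal{T}_1$, \Cref{obs:d3pattern} says $T_1\subseteq T_3$ for every $T_3\in\mathcal{T}_3$, so $|T_1\cap T_3|\ge 1$. Applying \Cref{lemma:containcore} with $X=T_1$, $\mathcal{F}=\mathcal{T}_3$ and $\lambda=1$ (valid because $|\mathcal{F}|\ge 5>1=|X|$), we obtain $|T_1\cap\{j_1,j_2\}|\ge 1$, i.e.\ $T_1\subseteq\{j_1,j_2\}$ since $|T_1|=1$. Analogously, for each $T_2\in\mathcal{T}_2$ we have $|T_2\cap T_3|\ge 1$ for every $T_3\in\mathcal{T}_3$ by \Cref{obs:d3pattern}, and $|\mathcal{T}_3|\ge 5>2=|T_2|$, so \Cref{lemma:containcore} gives $T_2\cap\{j_1,j_2\}\ne\emptyset$.

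Step~3: \emph{The dichotomy between (a) and (b).} Suppose (a) fails, so there exist $T_2'\in\mathcal{T}_2'$ and $T_2''\in\mathcal{T}_2''$. Write $T_2'=\{j_1,j_a\}$ with $j_a\ne j_2$ and $T_2''=\{j_2,j_b\}$ with $j_b\ne j_1$. The pairwise distance bound applied to rows indexed by $T_2'$ and $T_2''$ yields $|T_2'\cap T_2''|\ge 1$ (from $|T_2'\triangle T_2''|=4-2|T_2'\cap T_2''|\le 3$), forcing $j_a=j_b=:j_3$, and $j_3\notin\{j_1,j_2\}$ by construction. To show $\mathcal{T}_2'=\{\{j_1,j_3\}\}$, suppose for contradiction that some $\{j_1,j_c\}\in\mathcal{T}_2'$ with $j_c\ne j_3$ exists; intersecting with $\{j_2,j_3\}\in\mathcal{T}_2''$ forces $j_c=j_2$, contradicting the definition of $\mathcal{T}_2'$. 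The same argument (swapping roles) gives $\mathcal{T}_2''=\{\{j_2,j_3\}\}$. Mutual exclusivity of (a) and (b) is immediate from the definitions, so exactly one alternative holds.

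The only delicate step is Step~1: one must check that the hypotheses of \Cref{prop:sunflowerodd}(\ref{mu+1}) are met \emph{exactly}, in particular that the intersection size equals $\mu+1=2$ rather than being merely $\ge 2$; this is ensured because two distinct $3$-element sets cannot share all three elements. The remaining steps are then routine applications of \Cref{obs:d3pattern} and \Cref{lemma:containcore}.
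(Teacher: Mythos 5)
Your proof is correct and follows essentially the same route as the paper's: \Cref{prop:sunflowerodd}~(i) applied to the $3$-uniform, intersection-size-two system $\mathcal{T}_3$ to extract the core $\{j_1,j_2\}$, \Cref{lemma:containcore} to propagate the structure to $\mathcal{T}_1$ and $\mathcal{T}_2$, and the identity $|T\triangle T'|=|T|+|T'|-2|T\cap T'|\le 3$ to force the intersections in the final dichotomy. The only cosmetic difference is that you identify $j_3$ directly from one pair and then rule out any further sets, whereas the paper first argues $|\mathcal{T}_2'|\le 1$ and $|\mathcal{T}_2''|\le 1$ and then intersects the two singletons; the two arguments are equivalent.
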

\begin{proof}
  Note that $\mathcal{T}_3$ is $3$-uniform by definition and note also that it is a weak $\Delta$-system with intersection size~two by \Cref{obs:d3pattern}.
  Hence, $\mathcal{T}_3$ is a strong $\Delta$-system by \Cref{prop:sunflowerodd}.
  Let $\{ j_1, j_2 \}$ be its core.
  Then, we have $\{ j_1, j_2 \} \subseteq T_3$ for each $T_3 \in \mathcal{T}_3$.
  It follows from \Cref{obs:d3pattern} and \Cref{lemma:containcore} that $T_1 \subseteq \{ j_1, j_2 \}$ for each $T_1 \in \mathcal{T}_1$ and $T_2 \cap \{ j_1, j_2 \} \ne \emptyset$ for each $T_2 \in \mathcal{T}_2$.

  Now we show that either \ref{lemma:d3structure1} or \ref{lemma:d3structure2} holds.
  Suppose that $|\mathcal{T}_2'| \ge 2$ and $|\mathcal{T}_2''| \ge 1$, and let $T_2 \ne T_2' \in \mathcal{T}_2'$ and $T_2'' \in \mathcal{T}_2''$.
  Then, either $T_2 \cap T_2'' = \emptyset$ or $T_2' \cap T_2'' = \emptyset$ must hold.
  However, this is a contradiction because the corresponding row vectors have Hamming distance~four.
  Thus, we have that $|\mathcal{T}_2'| \le 1$ or $\mathcal{T}_2'' = \emptyset$.
  Analogously, we obtain $\mathcal{T}_2' = \emptyset$ or $|\mathcal{T}_2''| \le 1$.
  If $\mathcal{T}_2' = \emptyset$ or $\mathcal{T}_2'' = \emptyset$, then \ref{lemma:d3structure1} is satisfied.
  Otherwise, we have $|\mathcal{T}_2'| = |\mathcal{T}_2''| = 1$.
  Since $|T_2' \triangle T_2''| = |T_2'| + |T_2''| - 2|T_2' \cap T_2''| \le \delta(\mathbf{T}) \le 3$, we obtain $T_2' \cap T_2'' \ne \emptyset$ for each $T_2' \in \mathcal{T}_2'$ and $T_2'' \in \mathcal{T}_2''$.
  Hence, \ref{lemma:d3structure2} holds.
\end{proof}

\begin{figure}[t]
  \centering
  \begin{subfigure}{.4\textwidth}
    \centering
    \begin{tikzpicture}[scale=0.45]
      \pgfmathsetmacro{\west}{0}
      \pgfmathsetmacro{\east}{7}
      \pgfmathsetmacro{\north}{0}
      \pgfmathsetmacro{\south}{8}
      \pgfmathsetmacro{\padding}{0.2}
      
      \fill[black] (0,1) rectangle (1,7);
      \fill[black] (1,1) rectangle (2,5);
      \fill[black] (1,7) rectangle (2,8);
      \fill[black] (2,5) rectangle (3,6);
      \fill[black] (3,3) rectangle (4,4);
      \fill[black] (4,4) rectangle (5,5);
      \fill[black] (5,6) rectangle (6,7);
      \fill[black] (5,2) rectangle (6,3);
      \fill[black] (6,1) rectangle (7,2);
      \draw[help lines] (\west, \south) grid (\east ,\north);
      \draw[decoration={brace,raise=3pt},decorate] (\west,7) -- node[left=3pt] {$\mathcal{T}_1$} (\west,8);
      \draw[decoration={brace,raise=3pt},decorate] (\west,5) -- node[left=3pt] {$\mathcal{T}_2$} (\west,7);
      \draw[decoration={brace,raise=3pt},decorate] (\west,1) -- node[left=3pt] {$\mathcal{T}_3$} (\west,5);
      \fill[black] (0.5, \south + .5) node {$j_1$};
      \fill[black] (1.5, \south + .5) node {$j_2$};
    \end{tikzpicture}
  \end{subfigure}
  \quad
  \begin{subfigure}{.4\textwidth}
    \centering
    \begin{tikzpicture}[scale=0.45]
      \pgfmathsetmacro{\west}{0}
      \pgfmathsetmacro{\east}{6}
      \pgfmathsetmacro{\north}{8}
      \pgfmathsetmacro{\south}{0}
      \pgfmathsetmacro{\padding}{0.2}

      % \node[draw,rectangle,minimum width=1cm] [fit = (0, 6) (1, 7)] (j1box) {$j_1$};
      \fill[black] (0.5,8.5) node {$j_1$};
      \fill[black] (1.5,8.5) node {$j_2$};
      \fill[black] (2.5,8.5) node {$j_3$};

      \fill[black] (1,7) rectangle (2,8);

      \fill[black] (0,6) rectangle (1,7);
      \fill[black] (2,6) rectangle (3,7);

      \fill[black] (1,5) rectangle (2,6);
      \fill[black] (2,5) rectangle (3,6);

      \fill[black] (0,1) rectangle (2,5);

      \fill[black] (3,4) rectangle (4,5);
      \fill[black] (2,3) rectangle (3,4);
      \fill[black] (4,2) rectangle (5,3);
      \fill[black] (5,1) rectangle (6,2);
      \draw[help lines] (\west, \south) grid (\east ,\north);
      \draw[decoration={brace,raise=3pt},decorate] (\west,7) -- node[left=3pt] {$\mathcal{T}_1$} (\west,8);
      \draw[decoration={brace,raise=3pt},decorate] (\west,5) -- node[left=3pt] {$\mathcal{T}_2$} (\west,7);
      \draw[decoration={brace,raise=3pt},decorate] (\west,1) -- node[left=3pt] {$\mathcal{T}_3$} (\west,5);
    \end{tikzpicture}
  \end{subfigure}
  \caption{
    Illustration (for smaller $\ell$) of \Cref{lemma:d3structure}~\ref{lemma:d3structure1} (left) and \ref{lemma:d3structure2} (right).
    % \Cref{lemma:d3structure} does not hold in general for $\ell < 14$, but this example serves our purpose.
   A black cell indicates that the entry differs from the last row vector in the corresponding column.
  }
  \label{fig:d3structure}
\end{figure}

The following lemma establishes a connection to \textsc{ConRMC}.
For $v, v' \in \{ 0, 1\}^\ell$ and $J \subseteq [\ell]$, we write $\dist_J(v, v')$ to denote $\dist(v[J], v'[J])$.

\begin{lemma}
  \label{lemma:d3v}
  Let $\mathbf{T} \in \{ 0, 1 \}^{n \times \ell}$ be a matrix consisting of dirty columns with $\maxd(\mathbf{T}) = 3$.
  If $\ell \ge 14$, then there exists $v \in \{ 0, 1 \}^\ell$ such that at least one of the following holds:
  \begin{enumerate}[label={(\alph*)}]
    \item \label{lemma:d3v1}
      There exists $j \in [\ell]$ such that $\dist_{[\ell] \setminus \{ j \}}(v, \mathbf{T}[i]) \le 1$ for all $i \in [n]$.
    \item \label{lemma:d3v2}
      There exist three column indices $J = \{ j_1, j_2, j_3 \} \subseteq [\ell]$ such that all of the following hold for each $i \in [n]$:
      \begin{itemize}
        \item
          $\dist(v_J, t_{i, J}) \le 2$.
        \item
          If $\dist(v_J, t_{i, J}) \ge 1$, then $\dist_{[\ell] \setminus J}(v, \mathbf{T}[i]) = 0$.
        \item
          If $\dist(v_J, t_{i, J}) = 0$, then $\dist_{[\ell] \setminus J}(v, \mathbf{T}[i]) \le 1$.
      \end{itemize}
      Here $v_J = (v[j_1], v[j_2], v[j_3])$ and $t_{i, J} = (\mathbf{T}[i, j_1], \mathbf{T}[i, j_2], \mathbf{T}[i, j_3])$ for each $i \in [n]$.
  \end{enumerate}
\end{lemma}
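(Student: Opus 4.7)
The plan is to feed $\mathbf{T}$ into the two preceding structural lemmas: first permute the rows using \Cref{lemma:fived3pairs} so that $|\mathcal{T}_3| \ge 5$ (which is exactly what the hypothesis $\ell \ge 14$ guarantees), and then invoke \Cref{lemma:d3structure} to obtain two distinguished columns $j_1 \ne j_2$ satisfying its conclusions together with one of the two structural subcases~(\ref{lemma:d3structure1}) or~(\ref{lemma:d3structure2}). In both subcases I will use the \emph{same} candidate vector: let $v$ be obtained from $\mathbf{T}[n]$ by flipping exactly the two entries at $j_1$ and $j_2$. The only identity I will exploit is $D(v, \mathbf{T}[i]) = D_i \triangle \{j_1, j_2\}$, where $D_i := D(\mathbf{T}[i], \mathbf{T}[n])$; everything else reduces to enumerating the finite list of shapes of $D_i$ permitted by \Cref{lemma:d3structure} and reading off the resulting symmetric differences.

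In subcase~(\ref{lemma:d3structure1}), without loss of generality $\mathcal{T}_2'' = \emptyset$, so every $D_i$ is of the form $\emptyset$, $\{j_1\}$, $\{j_2\}$, $\{j_1,j_2\}$, $\{j_1,k\}$ with $k \notin \{j_1,j_2\}$, or $\{j_1,j_2,k\}$ with $k \notin \{j_1,j_2\}$. A short check shows that $D_i \triangle \{j_1,j_2\}$ contains $j_2$ at most once and at most one further element, so $\dist_{[\ell]\setminus\{j_2\}}(v,\mathbf{T}[i]) \le 1$ for every $i$; this is property~(\ref{lemma:d3v1}) with $j = j_2$. The symmetric situation $\mathcal{T}_2' = \emptyset$ is handled identically with $j_1$ and $j_2$ swapped.

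In subcase~(\ref{lemma:d3structure2}), the auxiliary column $j_3$ and the prescribed forms of $\mathcal{T}_2'$ and $\mathcal{T}_2''$ restrict $D_i$ to at most seven shapes; setting $J = \{j_1, j_2, j_3\}$, each associated $D_i \triangle \{j_1, j_2\}$ either lies entirely inside $J$---certifying the parts of~(\ref{lemma:d3v2}) that require $\dist_{[\ell]\setminus J}(v,\mathbf{T}[i]) = 0$ together with $\dist(v_J, t_{i,J}) \le 2$---or equals a singleton $\{k\}$ with $k \notin J$, in which case $v$ and $\mathbf{T}[i]$ agree on $J$ and differ in exactly one coordinate outside, certifying the remaining part. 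The main obstacle is nothing conceptually deep: once \Cref{lemma:d3structure} is in hand, the argument is a finite case check, but one must be careful to list every admissible $D_i$ (particularly the edge case $k = j_3$ in subcase~(\ref{lemma:d3structure2})) to avoid missing a shape whose symmetric difference would violate one of the three conditions.
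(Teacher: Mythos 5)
Your proposal is correct and follows essentially the same route as the paper's proof: apply \Cref{lemma:fived3pairs} to get $|\mathcal{T}_3|\ge 5$, invoke \Cref{lemma:d3structure}, take $v$ to be $\mathbf{T}[n]$ with the entries at $j_1,j_2$ flipped, and verify the two conclusions by computing $D(\mathbf{T}[i],\mathbf{T}[n])\triangle\{j_1,j_2\}$ over the finitely many admissible shapes of $D(\mathbf{T}[i],\mathbf{T}[n])$ in each structural subcase. Your case enumeration (including the edge cases $k=j_2$ in subcase~(\ref{lemma:d3structure1}) and $k=j_3$ in subcase~(\ref{lemma:d3structure2})) matches the paper's checks, just phrased via symmetric differences rather than the paper's expression $|\{j_1\}\triangle(T\setminus\{j_2\})|$.
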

\begin{proof}
  From \Cref{lemma:fived3pairs}, we can assume that $|\mathcal{T}_3| \ge 5$.
  Hence, \Cref{lemma:d3structure} applies. Let~$j_1$ and $j_2$ be the according column indices.
  Let $v \in \{ 0, 1 \}^\ell$ be the row vector with
  \begin{align*}
    v[j] = \begin{cases}
      1 - \mathbf{T}[n, j] & \text{if } j \in \{ j_1, j_2 \}, \\
      \mathbf{T}[n, j] & \text{otherwise}, \\
    \end{cases}
  \end{align*}
  for each $j \in [\ell]$.
  We claim that \ref{lemma:d3v1} corresponds to \Cref{lemma:d3structure}~\ref{lemma:d3structure1}, and \ref{lemma:d3v2}
  corresponds to \Cref{lemma:d3structure}~\ref{lemma:d3structure2}.

  Suppose that \Cref{lemma:d3structure}~\ref{lemma:d3structure1} holds with $\mathcal{T}_2'' = \emptyset$ (the case $\mathcal{T}_2' = \emptyset$ is completely analogous).
  We prove that $\dist_{[\ell] \setminus \{ j_2 \}}(v, \mathbf{T}[i]) \le 1$ for all $i \in [n]$.
  Since $\dist_{[\ell] \setminus \{ j_2 \}}(v, \mathbf{T}[i]) = |\{ j_1 \} \triangle (D(\mathbf{T}[i], \mathbf{T}[n]) \setminus \{ j_2 \})|$, it suffices to show that $|\{ j_1 \} \triangle (T \setminus \{ j_2 \})| \le 1$ holds for all $T \in \mathcal{T}$.
  Due to \Cref{lemma:d3structure}, we have
  \begin{itemize}
    \item
      $\{ j_1 \} \triangle (T \setminus \{ j_2 \}) \subseteq \{ j_1 \}$ for each $T \in \mathcal{T}_1$ and
    \item 
       $|\{ j_1 \} \triangle (T \setminus \{ j_2 \})| = |T \setminus \{ j_1, j_2 \}| \le 1$ for each $T \in \mathcal{T}_2 \cup \mathcal{T}_3$ since $j_1 \in T$.
  \end{itemize}
  Hence, \ref{lemma:d3v1} is true.
  
  Now, assume that \Cref{lemma:d3structure}~\ref{lemma:d3structure2} holds and let~$J=\{j_1,j_2,j_3\}$.
    If there exists $i \in [n]$ with $\dist(v_J, t_{i, J}) = 3$, then this implies $\{ j_3 \} \in \mathcal{T}_1$ which yields the contradiction $\{j_3\}\not\subseteq\{j_1,j_2\}$.
    
      Further, for each $T \in \mathcal{T}_1 \cup \mathcal{T}_2$, we have $T \setminus J = \emptyset$.
      Hence, for the row vector $\mathbf{T}[i]$ corresponding to~$T$, we have $\dist_{[\ell] \setminus J}(v, \mathbf{T}[i]) = 0$.
      Now, let $T_3 \in \mathcal{T}_3$ with corresponding row vector $\mathbf{T}[i]$.
      If $T_3 = \{ j_1, j_2, j_3 \}$, then $\dist(v_J, t_{i, J}) = 1$ and $\dist_{[\ell] \setminus J}(v, \mathbf{T}[i]) = 0$.
      Otherwise, we have $T_3 = \{ j_1, j_2, j \}$ for some $j \in [\ell] \setminus J$. Hence, $\dist(v_J, t_{i, J}) = 0$ and $\dist_{[\ell] \setminus J}(v, \mathbf{T}[i]) = |T_3 \setminus \{ j_1, j_2 \}| = 1$. Hence, \ref{lemma:d3v2} is true.
\end{proof}

Based on the connection to \textsc{ConRMC}, we obtain a polynomial-time algorithm.

% \begin{theorem}
%   \label{theorem:d3polynomial}
%   $(0, 3)$-\mc{} can be solved in $O(n \ell^4)$ time.
% \end{theorem}

\begin{restatable}{theorem}{dpolynomial}
  \label{theorem:d3polynomial}
  $(0, 3)$-\mc{} can be solved in $O(n \ell^4)$ time.
\end{restatable}

\begin{proof}
  We first apply \Cref{theorem:d2linear} to determine whether there exists a completion $\mathbf{T}\in\{0,1\}^{n\times\ell}$ of $\mathbf{S}\in\{0,1,\square\}^{n\times\ell}$ such that $\maxd(\mathbf{T}) \le 2$.
  If not, then it remains to determine whether there exists a completion~$\mathbf{T}$ with $\maxd(\mathbf{T}) = 3$.
  We can assume that $\ell \ge 14$ by \Cref{lemma:linearforconstantl}.
  We solve the problem by solving several instances of \textsc{ConRMC} based on \Cref{lemma:d3v}.

  For $j \in [\ell]$, let $I_{j}=(\mathbf{S}[:, [\ell] \setminus \{ j \}], 1^n)$ be a \textsc{ConRMC} instance and let $\mathcal{I}_1 = \{ I_j \mid j \in [\ell] \}$.
  These instances correspond to \Cref{lemma:d3v}~\ref{lemma:d3structure1}.

  Now, we describe the instances corresponding to \Cref{lemma:d3v}~\ref{lemma:d3structure2}.
  Let $j_1, j_2, j_3 \in [\ell]$ be three distinct column indices and let $v_1, v_2, v_3 \in \{ 0, 1\}$.
  We define an instance $I_{j_1, j_2, j_3}^{v_1, v_2, v_3} = (\mathbf{S}_{j_1, j_2, j_3}, r)$ of \textsc{ConRMC}
   as follows:
  \begin{itemize}
    \item
      $\mathbf{S}_{j_1, j_2, j_3} = \mathbf{S}[:, [\ell] \setminus \{ j_1, j_2, j_3 \}]$.
    \item
      For each $i \in [n]$, let
      \begin{align*}
        r[i] = \begin{cases}
          0 & \text{if $(\mathbf{S}[i, j_1] = 1-v_1) \vee (\mathbf{S}[i, j_2] = 1-v_2) \vee (\mathbf{S}[i, j_3] = 1-v_3)$.} \\
          1 & \text{otherwise}.
        \end{cases}
      \end{align*}
  \end{itemize}
  Let $\mathcal{I}_2$ contain those instances $I_{j_1, j_2, j_3}^{v_1, v_2, v_3}$ in which for each $i \in [n]$ at least one of $\mathbf{S}[i, j_1] \ne 1 - v_1$, $\mathbf{S}[i, j_2] \ne 1 - v_2$, or $\mathbf{S}[i, j_3] \ne 1 - v_3$ holds.
  We claim that $\mathbf{S}$ is a \textbf{Yes}-instance if and only if at least one instance in $\mathcal{I}_1$ or $\mathcal{I}_2$ is a \textbf{Yes}-instance.

  \begin{itemize}
    \item
      If $I_j \in \mathcal{I}_1$ is a \textbf{Yes}-instance, then there exists $v \in \{ 0, 1 \}^{\ell - 1}$ such that $\dist(v', \mathbf{S}[i, [\ell] \setminus \{ j \}]) \le 1$ for each $i \in [n]$.
      Let $\mathbf{T}$ be the completion of $\mathbf{S}$ in which $\mathbf{T}[i, [\ell] \setminus \{ j \}] = \mathbf{S}[i, [\ell] \setminus \{ j \}] \leftarrow v$ for each $i \in [n]$.
      Then, we have
      \begin{align*}
        \dist(\mathbf{T}[i], \mathbf{T}[i'])
        &\le \dist(\mathbf{T}[i, [\ell] \setminus \{ j \}], \mathbf{T}[i', [\ell \setminus \{ j \}]]) + 1 \\
        &\le \dist(v, \mathbf{T}[i, [\ell] \setminus \{ j \}]) + \dist(v, \mathbf{T}[i', [\ell \setminus \{ j \}]]) + 1 \le 3
      \end{align*}
      for each $i, i' \in [n]$.
    \item
      If $I_{j_1, j_2, j_3}^{v_1, v_2, v_3}=(\mathbf{S}_{j_1, j_2, j_3}, r)\in\mathcal{I}_2$ is a \textbf{Yes}-instance with solution $v' \in \{ 0, 1 \}^{\ell - 3}$, then
      let $v \in \{ 0, 1 \}^{\ell}$ be the row vector obtained from $v'$ by inserting $v_1$, $v_2$, and $v_3$ in the $j_1$-th, $j_2$-th, and $j_3$-th column, respectively, and 
      let $\mathbf{T}$ be the completion of $\mathbf{S}$ in which $\mathbf{T}[i] = \mathbf{S}[i] \leftarrow v$ or each $i \in [n]$.
      We prove that $\maxd(\mathbf{T}) \le 3$.
      Let $R_x \subseteq [n]$ be the set of row indices $i$ with $r[i] = x$ for $x \in \{ 0, 1 \}$.
      Then, we have that
      \begin{itemize}
        \item
          $\dist(v, \mathbf{T}[i]) = \dist_{\{ j_1, j_2, j_3 \}}(v, \mathbf{T}[i]) + \dist_{[\ell] \setminus \{ j_1, j_2, j_3 \}}(v, \mathbf{T}[i]) \le 2 + 0 = 2$ for each $i \in R_0$ and
        \item
          $\dist(v, \mathbf{T}[i]) = \dist_{\{ j_1, j_2, j_3 \}}(v, \mathbf{T}[i]) + \dist_{[\ell] \setminus \{ j_1, j_2, j_3 \}}(v, \mathbf{T}[i]) \le 0 + 1 = 1$ for each $i \in R_1$.
      \end{itemize}
      By the triangle inequality, we obtain $\dist(\mathbf{T}[i], \mathbf{T}[i']) \le 3$ for each $i, i' \in [n]$ with $i \in R_1$ or $i' \in R_1$.
      Thus, it suffices to show $\maxd(\mathbf{T}[R_0]) \le 3$. 
      Since $\mathbf{T}[i, [\ell] \setminus \{ j_1, j_2, j_3 \}] = \mathbf{T}[i', [\ell] \setminus \{ j_1, j_2, j_3 \}] = v'$ for each $i, i' \in R_0$, this clearly holds.
  \end{itemize}
  The reverse direction is easily verified using \Cref{lemma:d3v}.
  
  Overall, we construct $O(\ell^3)$ \textsc{ConRMC} instances, each of which can be solved in $O(n \ell)$ time~\cite[Theorem~1]{KFN20}.
  Hence, $(0, 3)$-\mc{} can be solved in $O(n \ell^4)$ time.
\end{proof}

Our algorithms work via reductions to \textsc{ConRMC}.
Although \textsc{ConRMC} imposes an upper bound on the diameter implicitly by the triangle inequality, it is seemingly difficult to enforce any lower bounds (that is, $\alpha > 0$).
In the next subsection, we will see polynomial-time algorithms for $\alpha > 0$, based on reductions to the graph factorization problem.

\subsection{Polynomial time for~\texorpdfstring{$\beta = \alpha +1$}{beta = alpha + 1}}
\label{sec:const:poly}

We now give polynomial-time algorithms for $(\alpha, \beta)$-\mc{} with constant~$\alpha > 0$ given that $\beta \le \alpha + 1$.
As in \Cref{sec:alpha0}, our algorithms exploit combinatorial structures revealed by Deza's theorem (\Cref{prop:sunflower,prop:sunflowerodd}).
Recall that $\mathcal{T}$ denotes a set system obtained from a complete matrix $\mathbf{T}$ (\Cref{def:diffsets}).
We show that $\mathcal{T}$ essentially is a sunflower when $\mind(\mathbf{T}) \ge \alpha$ and $\maxd(\mathbf{T}) \le \alpha + 1$.
For the completion into such a sunflower, it suffices to solve the following matrix completion problem, which we call \textsc{Sunflower Matrix Completion}.

\dprob
{Sunflower Matrix Completion (SMC)}
{An incomplete matrix $\mathbf{S} \in \{ 0, 1, \square \}^{n \times \ell}$ and $s, m \in \NN$.}
{Is there a completion $\mathbf{T} \in \{ 0, 1 \}^{n \times \ell}$ of $\mathbf{S}$ such that $D(\mathbf{T}[1], \mathbf{T}[n]), \dots, D(\mathbf{T}[n - 1], \mathbf{T}[n])$ are pairwise disjoint sets each of size at most $s$ and $\sum_{i \in [n-1]} |D(\mathbf{T}[i], \mathbf{T}[n])| \ge m$.}
Intuitively speaking, the problem asks for a completion into a sunflower with empty core and bounded petal sizes.
All algorithms presented in this subsection are via reductions to \textsc{SMC}.
First, we show that \textsc{SMC} is indeed polynomial-time solvable.
We prove this using a well-known polynomial-time algorithm for the graph problem~\textsc{$(g, f)$-Factor}~\cite{Gab83}.

\dprob
{$(g, f)$-Factor}
{A graph $G = (V, E)$, functions $g, f \colon V \to \NN$, and $m' \in \NN$.}
{Does $G$ contain a subgraph $G' = (V, E')$ such that $|E'| \ge m'$ and $g(v) \le \deg_{G'}(v) \le f(v)$ for all $v \in V$?}

\begin{lemma}
  \label{lemma:pmc}
  For constant~$s>0$, {\normalfont\textsc{SMC}} can be solved in $O(n\ell \sqrt{n + \ell})$ time.
\end{lemma}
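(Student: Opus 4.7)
The plan is to reduce \textsc{SMC} to the polynomial-time solvable $(g, f)$-\textsc{Factor} problem on an auxiliary bipartite graph $G = (U \cup V, E)$, where $U = [n-1]$ represents the non-reference rows and $V = [\ell]$ represents the columns. Each edge $(i, j) \in E$ will encode the statement ``in the sought completion, row $i$ differs from row $n$ at column $j$'', so a selected edge set corresponds exactly to the petal system $(D(\mathbf{T}[i], \mathbf{T}[n]))_{i \in [n-1]}$. The disjointness condition then translates to a column-degree bound of $1$, the petal-size bound to a row-degree bound of $s$, and the total size lower bound $\sum_i |D(\mathbf{T}[i], \mathbf{T}[n])| \ge m$ to a lower bound on the number of selected edges.

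Next, for each column $j$ I would perform a short case analysis on $\mathbf{S}[n, j]$ and on the counts $a_j, b_j, c_j$ of $0$-, $1$-, and $\square$-entries below row~$n$ in column~$j$ to determine which edges $(i, j)$ are \emph{feasible} (realizable by some completion of column~$j$ with at most one differing row) and which are \emph{forced} (must appear in every such completion). When $\mathbf{S}[n, j] \in \{0, 1\}$, rows with the opposite value are forced, and having more than one such row makes the instance infeasible. When $\mathbf{S}[n, j] = \square$, the value $\mathbf{T}[n, j]$ is implicit in the petal choice at column~$j$: picking row $i$ with $\mathbf{S}[i, j] \in \{0, 1\}$ to differ forces $\mathbf{T}[n, j] = 1 - \mathbf{S}[i, j]$, and the column is feasible iff $\min(a_j, b_j) \le 1$. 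Since different columns are handled independently---the choices of $\mathbf{T}[n, j]$ for different $j$ do not interact---each column contributes a well-defined set of feasible and forced edges.

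After this preprocessing, if any column has two or more forced rows or any row accumulates more than $s$ forced edges, the instance is rejected. Otherwise I build $G$ on the non-forced feasible edges and set $g \equiv 0$, $f(\text{row } i) = s - r_i$ where $r_i$ is the number of forced edges incident to row~$i$, $f(\text{column } j) = 1$ for columns without a forced edge, and target edge count $m' = \max(0, m - F)$ where $F$ is the total number of forced edges. A $(g, f)$-factor of $G$ with at least $m'$ edges combined with the forced edges then corresponds bijectively to a valid \textsc{SMC} solution, and conversely.

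Finally, I would invoke Gabow's algorithm~\cite{Gab83} to solve the $(g, f)$-\textsc{Factor} instance in $O(|E| \sqrt{\sum_v f(v)})$ time. The graph has $O(n + \ell)$ vertices and $|E| = O(n\ell)$ edges; with constant~$s$, $\sum_v f(v) \le s(n-1) + \ell = O(n + \ell)$, yielding the claimed $O(n \ell \sqrt{n + \ell})$ overall running time. The main obstacle will be the case analysis for $\square$-entries in row $n$: arguing that every $(g,f)$-factor of $G$ lifts to a valid completion of $\mathbf{S}$ and vice versa. I expect this to follow cleanly from the observation that the petal assignment at each column $j$ determines the required value of $\mathbf{T}[n, j]$ (or leaves it free when no row differs), so the global completion can be assembled column by column from the edge selection.
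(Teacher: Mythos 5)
Your reduction is the same as the paper's: a bipartite graph with row-vertices $u_1,\dots,u_{n-1}$ and column-vertices $v_1,\dots,v_\ell$, an edge meaning ``row $i$ differs from row $n$ in column $j$ in some admissible completion of that column,'' upper bounds $f(u_i)=s$ and $f(v_j)=1$, a lower bound $m'$ on the number of edges, and a call to Gabow's $(g,f)$-factor algorithm with the identical running-time accounting. So the approach is right, but your construction as stated has a correctness gap in exactly the place you flag as ``the main obstacle.''

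The gap is that you set $g\equiv 0$ on the column side and only introduce \emph{forced} edges when $\mathbf{S}[n,j]\in\{0,1\}$. Consider a column $j$ with $\mathbf{S}[n,j]=\square$ in which both values $0$ and $1$ occur among rows $1,\dots,n-1$ (say once each, the rest $\square$). Your feasibility test $\min(a_j,b_j)\le 1$ accepts this column, and it contributes two feasible, non-forced edges with $g(v_j)=0$, $f(v_j)=1$. A $(g,f)$-factor may then select \emph{no} edge at $v_j$, but every completion of this column makes at least one row differ from row $n$ there, so the factor does not lift to a completion with pairwise disjoint difference sets of the claimed total size; the global bound $m'$ cannot repair this, since $m$ may be small. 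The same problem arises when $\mathbf{S}[n,j]=\square$ and the minority value occurs exactly once while the majority occurs at least twice: the minority row is then genuinely forced to differ, yet your rules do not force that edge either. What is missing is a degree \emph{lower} bound on such column vertices; the paper supplies it by normalizing each column so that the minority symbol is $1$ and setting $g(v_j)=a_j^1\in\{0,1\}$, which forces $v_j$ to be matched (to one of the admissible rows) precisely when the column is unavoidably dirty. With that lower bound added (and the corresponding adjustment when you fold forced edges into residual capacities), your argument goes through and coincides with the paper's proof.
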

\begin{proof}
 Let~$(\mathbf{S},s,m)$ be an \textsc{SMC} instance.
  Let~$a_j^{x}$ be the number of occurrences of $x\in\{0,1\}$ in $\mathbf{S}[:,j]$ for each $j \in [\ell]$.
  We can assume that $a_j^0 \ge a_j^1$ for each $j \in [\ell]$ (otherwise swap the occurrences of 0's and 1's in the column). If $a_j^0 \ge 2$ and $\mathbf{S}[n, j] = 1$ for some~$j\in[\ell]$, then we can return \textbf{No} since there will be two intersecting sets.
  Also, if $a_j^1 \ge 2$, then we return \textbf{No}.
  
  We construct an instance of \textsc{$(g, f)$-Factor} as follows.
  We introduce a vertex $u_i$ for each $i \in [n - 1]$ and a vertex $v_j$ for each $j \in [\ell]$.
  The resulting graph $G$ will be a bipartite graph with one vertex subset $\{ u_1, \dots, u_{n - 1} \}$ representing rows and the other $\{ v_1, \dots, v_{\ell} \}$ representing columns.
  Essentially, we add an edge between $u_i$ and $v_j$ if the column $\mathbf{S}[:, j]$ can be completed such that the $i$-th entry differs from all other entries on $\mathbf{S}[:, j]$ (see \Cref{fig:factor} for an illustration). Intuitively, such an edge encodes the information that column index~$j$ can be contained in a petal of the sought sunflower.
  Formally, there is an edge $\{u_i,v_j\}$ if and only if there is a completion $t_j\in\{0,1\}^n$ of $\mathbf{S}[:, j]$ in which $t_j[h] = 1 - t_j[i]$ for all $h \in [n - 1] \setminus \{ i \}$.
  We set $g(u_i) := 0$ and $f(u_i) := s$ for each $i \in [n - 1]$, $g(v_j) := a_j^1$ and $f(v_j) := 1$ for each $j \in [\ell]$, and $m' := m$.
  This construction can be done in~$O(n\ell)$ time.
  To see this, note that the existence of an edge~$\{u_i,v_j\}$ only depends on~$a_j^0$,~$a_j^1$, and $\mathbf{S}[i, j]$.
  \begin{itemize}
    \item
      If $a_j^0 \le 1$ and $a_j^1 = 0$, then add the edge $\{u_i,v_j\}$.
      The corresponding completion~$t_j$ can be seen as follows:
      \begin{itemize}
        \item
          If $\mathbf{S}[h, j] = \square$ for all $h \in [n - 1]$, then let $t_j[i] := 1$ and let $t_j[h] := 0$ for all $h \in [n] \setminus \{ i \}$.
        \item
          If $\mathbf{S}'[h, j] = 0$ for some $h \in [n - 1]$,
          then $\mathbf{S}'[h', j] = \square$ for all $h' \in [n] \setminus \{ h \}$.
          If $h \ne i$, then let $t_j[i] := 1$ and let $t_j[h] := 0$ for all $h \in [n] \setminus \{ i \}$.
          Otherwise, let $t_j[h] := 1$ for all $h \in [n] \setminus \{ i \}$.
      \end{itemize}
    
    \item
      If $a_j^0 = 1$ and $a_j^1 = 1$, then add the edge $\{u_i,v_j\}$ if $\mathbf{S}[i, j] \neq\square$.
      \item
      If $a_j^0 \ge 2$ and $a_j^1 = 0$, then add the edge $\{u_i,v_j\}$ if $\mathbf{S}[i, j] = \square$.
    \item
      If $a_j^0 \ge 2$ and $a_j^1 = 1$, then add the edge $\{u_i,v_j\}$ if $\mathbf{S}[i, j] = 1$ (because $\mathbf{S}[n, j]$ must be completed with 0).
  \end{itemize}
  
  The correctness of the reduction easily follows from the definition of an edge:
  If~$\mathbf{T}$ is a solution for~$(\mathbf{S},s,m)$, then the corresponding subgraph of~$G$ contains the edge~$\{u_i,v_j\}$ for each~$i\in[n-1]$ and each~$j\in D(\mathbf{T}[i],\mathbf{T}[n])$.
  Conversely, a completion of~$\mathbf{S}$ is obtained from a subgraph~$G'$ by taking for each edge~$\{u_i,v_j\}$ the corresponding completion~$t_j$ as the~$j$-th column.
  Note that no vertex~$v_j$ can have two incident edges since~$f(v_j)=1$.
  Moreover, if~$v_j$ has no incident edges, then this implies that~$g(v_j)=a_j^1=0$. Hence, we can complete all missing entries in column~$j$ by~0.
  
  \begin{figure}[t]
    \centering
    \begin{subfigure}{.4\textwidth}
      \centering
      \begin{tikzpicture}[scale=0.6]
        \fill [lipicsLightGray] (0, 3) rectangle (1, 4);
        \fill [lipicsLightGray] (1, 1) rectangle (2, 2);
        \fill [lipicsLightGray] (1, 2) rectangle (2, 3);
        \fill [lipicsLightGray] (2, 2) rectangle (3, 3);
        \fill [lipicsLightGray] (2, 3) rectangle (3, 4);
        \fill [lipicsLightGray] (3, 0) rectangle (4, 1);
        \fill [lipicsLightGray] (3, 1) rectangle (4, 2);
        \fill [lipicsLightGray] (4, 1) rectangle (5, 2);
        \fill [lipicsLightGray] (4, 2) rectangle (5, 3);
        \fill [lipicsLightGray] (4, 4) rectangle (5, 5);

        \node at (0.5, 3.5) {0};
        \node at (1.5, 1.5) {0}; \node at (1.5, 2.5) {0};
        \node at (2.5, 2.5) {0}; \node at (2.5, 3.5) {1};
        \node at (3.5, 0.5) {0}; \node at (3.5, 1.5) {1};
        \node at (4.5, 1.5) {0}; \node at (4.5, 4.5) {0}; \node at (4.5, 2.5) {1};

        \node at (0.5, 0.5) {1}; \node at (0.5, 1.5) {1}; \node at (0.5, 2.5) {1}; \node at (0.5, 4.5) {1};
        \node at (1.5, 0.5) {0}; \node at (1.5, 3.5) {0}; \node at (1.5, 4.5) {1};
        \node at (2.5, 0.5) {0}; \node at (2.5, 1.5) {0}; \node at (2.5, 4.5) {0};
        \node at (3.5, 2.5) {0}; \node at (3.5, 3.5) {0}; \node at (3.5, 4.5) {0};
        \node at (4.5, 0.5) {0}; \node at (4.5, 3.5) {0};

        \draw[very thick] (0, 3) rectangle (1, 4);
        \draw[very thick] (1, 4) rectangle (2, 5);
        \draw[very thick] (2, 3) rectangle (3, 4);
        \draw[very thick] (3, 1) rectangle (4, 2);
        \draw[very thick] (4, 2) rectangle (5, 3);

        \node at (-0.5, 4.5) {$u_1$}; \node at (-0.5, 3.5) {$u_2$}; \node at (-0.5, 2.5) {$u_3$}; \node at (-0.5, 1.5) {$u_4$};
        \node at (0.5, 5.5) {$v_1$}; \node at (1.5, 5.5) {$v_2$}; \node at (2.5, 5.5) {$v_3$}; \node at (3.5, 5.5) {$v_4$}; \node at (4.5, 5.5) {$v_5$};
        \draw[help lines] (0, 0) grid (5, 5);
      \end{tikzpicture}
    \end{subfigure}
    \begin{subfigure}{.4\textwidth}
      \centering
      \begin{tikzpicture}[scale=0.7, every node/.style={circle,draw,inner sep=0pt,minimum size=7pt}]
        \node at (0.5, 2.5) (u1) [label=above:$u_1$] {};
        \node at (1.5, 2.5) (u2) [label=above:$u_2$] {};
        \node at (2.5, 2.5) (u3) [label=above:$u_3$] {};
        \node at (3.5, 2.5) (u4) [label=above:$u_4$] {};

        \node at (0.0, 0) (v1) [label=below:$v_1$] {};
        \node at (1.0, 0) (v2) [label=below:$v_2$] {};
        \node at (2.0, 0) (v3) [label=below:$v_3$] {};
        \node at (3.0, 0) (v4) [label=below:$v_4$] {};
        \node at (4.0, 0) (v5) [label=below:$v_5$] {};

        \draw (v1) -- (u1); \draw[very thick] (v1) -- (u2); \draw (v1) -- (u3); \draw (v1) -- (u4);
        \draw[very thick] (v2) -- (u1); \draw (v2) -- (u2);
        \draw[very thick] (v3) -- (u2); \draw (v3) -- (u3);
        \draw[very thick] (v4) -- (u4);
        \draw[very thick] (v5) -- (u3);
      \end{tikzpicture}
    \end{subfigure}
    \caption{A completion of a $5 \times 5$ incomplete matrix (left).
      The known entries are highlighted in gray.
      A bipartite graph as constructed in the reduction (right).
      Note that the entries framed by thick lines (which differ from all others in the same column) correspond to the subgraph represented by the thick lines.
    }
    \label{fig:factor}
  \end{figure}

  Regarding the running time,
  note that the constructed graph~$G$ has at most $n \ell$ edges and $\sum_{i \in [n - 1]} f(u_i) \in O(n)$ and $\sum_{j \in [\ell]} f(v_j) \in O(\ell)$.
  Since \textsc{$(g, f)$-Factor} can be solved in $O(|E| \sqrt{f(V)})$ time \cite{Gab83} for $f(V) = \sum_{v \in V} f(v)$, \textsc{SMC} can be solved in $O(n\ell \sqrt{n + \ell})$ time.
\end{proof}

Using \Cref{lemma:pmc}, we first show that $(\alpha, \alpha)$-\mc{} can be solved in polynomial time.

\begin{theorem}
  \label{theorem:aapoly}
  $(\alpha, \alpha)$-\mc{} can be solved in $O(n\ell \sqrt{n + \ell})$ time.
\end{theorem}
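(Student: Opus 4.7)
First I would handle the easy and degenerate cases. If $\alpha=0$, all rows must coincide, which is checked in $O(n\ell)$ time. If $\alpha$ is odd and $n\ge 3$, the problem is infeasible: in any triple of rows the three pairwise Hamming distances sum column-by-column to an even number, contradicting that $3\alpha$ would have to be odd. For $n\le 2$ the problem is direct. Writing $\alpha=2\mu$, the remaining small range $3\le n\le \mu^2+\mu+2$ has $n$ bounded by a function of $\alpha$ alone and can be disposed of by, e.g., setting up a constant-dimensional integer linear program whose variables count how many columns of each input pattern are completed to each consistent column pattern. Henceforth I focus on $n\ge \mu^2+\mu+3$, so that Deza's theorem applies.

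The core structural observation is that every valid completion has a very rigid form. Fixing $v:=\mathbf{T}[n]$ in the identity $d(u,w)=|D(u,v)|+|D(v,w)|-2|D(u,v)\cap D(v,w)|$ forces the set system $\mathcal{T}$ from \Cref{def:diffsets} to be a $2\mu$-uniform weak $\Delta$-system with intersection size $\mu$. Then \Cref{prop:sunflower} makes $\mathcal{T}$ a sunflower with a core $C$ of size $\mu$ and pairwise disjoint petals $P_i$ of size $\mu$, so every column of $\mathbf{T}$ is either uniform or has exactly one outlier entry. Consequently the columnwise majority vector $v^*$ of $\mathbf{T}$ is well defined and satisfies $D(\mathbf{T}[i],v^*)=P_i$ for $i<n$ and $D(\mathbf{T}[n],v^*)=C$; with $v^*$ as reference, the rows of $\mathbf{T}$ form a sunflower with \emph{empty} core, which is precisely the structure that \textsc{SMC} enforces.

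The reduction then augments $\mathbf{S}$ by one more row $v\in\{0,1,\square\}^\ell$ defined by $v[j]:=b$ if $\mathbf{S}[:,j]$ has at least two entries equal to $b\in\{0,1\}$ (returning No if this holds for both values simultaneously) and $v[j]:=\square$ otherwise, producing $\mathbf{S}'\in\{0,1,\square\}^{(n+1)\times\ell}$. I feed $(\mathbf{S}',\mu,n\mu)$ into \textsc{SMC} via \Cref{lemma:pmc} in $O(n\ell\sqrt{n+\ell})$ time and output the first $n$ rows. Soundness follows because any \textsc{SMC} completion $\mathbf{T}'$ has $n$ pairwise disjoint sets $D(\mathbf{T}'[i],\mathbf{T}'[n+1])$ of size at most $\mu$ summing to at least $n\mu$, hence each is exactly $\mu$, so by disjointness $d(\mathbf{T}'[i],\mathbf{T}'[i'])=2\mu=\alpha$ for all distinct $i,i'\in[n]$. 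Completeness uses the columnwise majority: given a valid $\mathbf{T}$, any column with at least two entries equal to $b$ in $\mathbf{S}$ still has at least two such entries in $\mathbf{T}$, and under the sunflower structure the majority of such a column is forced to be $b$; therefore $v^*$ completes $v$, and appending $v^*$ to $\mathbf{T}$ yields the required \textsc{SMC} completion. The main obstacle is bridging the symmetric pairwise-distance constraint of \mc with the asymmetric reference-row format of \textsc{SMC}: the columnwise majority serves as a virtual center row, and Deza's theorem both makes this center well defined and compatible with whatever partial data $\mathbf{S}$ already prescribes.
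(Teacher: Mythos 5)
Your proof is correct and follows essentially the same route as the paper: the parity argument for odd $\alpha$, a constant-size ILP for small $n$, and for $n\ge\mu^2+\mu+3$ Deza's theorem plus a reduction to \textsc{SMC} with parameters $(\mu,n\mu)$ on a matrix augmented by one reference row. The only cosmetic difference is that you pre-fill that extra row with column majorities and phrase the forward direction via the majority vector $v^*$, whereas the paper appends $\square^\ell$ and obtains the same vector by flipping $\mathbf{T}[n]$ on the sunflower core; the paper's \textsc{SMC} algorithm performs your majority/infeasibility check internally anyway.
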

\begin{proof}
  We first show that $(\alpha, \alpha)$-\mc{} can easily be solved if $\alpha$ is odd.
  Consider row vectors $u, v, w \in \{ 0, 1 \}^\ell$ and let $U := D(u, v)$ and $W := D(v, w)$.
  Then, $\dist(u, v) + \dist(v, w) + \dist(w, u) = |U| + |W| + (|U| + |W| - 2|U \cap W|) = 2(|U| + |W| - |U \cap W|)$ and hence $\dist(u, v) + \dist(v, w) + \dist(w, u)$ is even.
  Thus, we can immediately answer \textbf{No} if $n \ge 3$.
  It is also easy to see that \mc{} can be solved in linear time if $n \le 2$.

  We henceforth assume that $\alpha$ is even.
	Eiben et al.~\cite[Theorem 34]{EGKOS21} provided a linear-time algorithm for $(0, \alpha)$-\mc{} with constant~$n$ (and arbitrary~$\alpha$) using reductions to integer linear programming (ILP).
  To ensure that each pairwise distance is at most $\alpha$, they express this property as a linear constraint.
  By simply adding an analogous constraint enforcing that each pairwise distance is at least~$\alpha$, it follows that $(\alpha, \alpha)$-\mc{} can also be solved in linear time for constant~$n$.
  So we can assume that $n \ge (\alpha / 2)^2 + (\alpha / 2) + 3$ (otherwise $(\alpha, \alpha)$-\mc{} can be solved in linear time).
  We claim that there is a completion $\mathbf{T}$ of $\mathbf{S}$ with $\mind(\mathbf{T}) = \maxd(\mathbf{T}) = \alpha$ if and only if the \textsc{SMC} instance $(\mathbf{S}', \alpha / 2, \alpha n / 2)$ is a \textbf{Yes}-instance for the matrix $\mathbf{S}' \in \{ 0, 1, \square \}^{(n + 1) \times \ell}$ obtained from $\mathbf{S}$ with an additional row vector $\square^\ell$.

  $(\Rightarrow)$
  Let $\mathbf{T}$ be a completion of $\mathbf{S}$ with $\mind(\mathbf{T}) = \maxd(\mathbf{T}) = \alpha$.
  Then, $\mathcal{T}$ is a weak $\Delta$-system with intersection size $\alpha / 2$:
  For any two sets $U, W \in \mathcal{T}$, we have $|U \cap W| = (|U| + |W| - |U \triangle W|) / 2 = \alpha / 2$.
  Since $|\mathcal{T}| \ge (\alpha / 2)^2 + (\alpha / 2) + 2$, \Cref{prop:sunflower} tells us that $\mathcal{T}$ is a sunflower.
  Let $C$ be the core of $\mathcal{T}$.
  Consider the completion $\mathbf{T}'$ of $\mathbf{S}'$ such that
  \begin{itemize}
    \item
      $\mathbf{T}'[[n], :] = \mathbf{T}$,
    \item
      $\mathbf{T}'[n + 1, j] = 1 - \mathbf{T}[n,j]$ for each $j \in C$, and
    \item
      $\mathbf{T}'[n + 1, j] = \mathbf{T}[n,j]$ for each $j \in [\ell] \setminus C$.
  \end{itemize}
  Note that $D(\mathbf{T}'[i], \mathbf{T}'[n + 1]) = D(\mathbf{T}'[i], \mathbf{T}'[n]) \setminus C$ for each $i \in [n - 1]$.
  Note also that $D(\mathbf{T}'[n], \mathbf{T}'[n + 1]) = C$.
  Hence, $D(\mathbf{T}'[1], \mathbf{T}'[n + 1]), \dots, D(\mathbf{T}'[n], \mathbf{T}'[n + 1])$ are pairwise disjoint sets of size $\alpha / 2$.

  $(\Leftarrow)$
  Let $\mathbf{T}'$ be a completion of $\mathbf{S}'$ such that $D(\mathbf{T}'[1], \mathbf{T}'[n + 1]), \dots, D(\mathbf{T}'[n], \mathbf{T}'[n + 1])$ are pairwise disjoint sets of size $\alpha / 2$.
  For the completion $\mathbf{T} = \mathbf{T}'[[n], :]$ of $\mathbf{S}$, it holds that $\dist(\mathbf{T}[i], \mathbf{T}[i']) = |D(\mathbf{T}'[i], \mathbf{T}'[n + 1]) \triangle D(\mathbf{T}'[i'], \mathbf{T}'[n + 1])| = |D(\mathbf{T}'[i], \mathbf{T}'[n + 1])| + |D(\mathbf{T}'[i'], \mathbf{T}'[n + 1])| = \alpha$ for each $i,i'\in[n]$.
\end{proof}

Now we proceed to develop polynomial-time algorithms for the case $\alpha + 1 = \beta$.
We will make use of the following observation made by Froese et al.~\cite[Proof of Theorem~9]{FBNS16}.

\begin{obs}
  \label{obs:intersectionsize}
  Let $\mathbf{T} \in \{ 0, 1 \}^{n \times \ell}$ with $\mind(\mathbf{T}) \ge \alpha$ and $\maxd(\mathbf{T}) \le \beta = \alpha + 1$.
  For $T_\alpha \ne T_\alpha' \in \mathcal{T}_\alpha$ and $T_\beta \ne T_\beta' \in \mathcal{T}_\beta$, it holds that $|T_\alpha \cap T_\alpha'| = \lfloor \alpha / 2 \rfloor$, $|T_\beta \cap T_\beta'| = \lceil \beta / 2 \rceil$, and $|T_\alpha \cap T_\beta| = \lceil \alpha / 2 \rceil = \lfloor \beta / 2 \rfloor$.
\end{obs}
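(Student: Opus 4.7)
The plan is to apply the distance identity from the preliminaries,
$\dist(u, w) = |D(u, v)| + |D(v, w)| - 2|D(u, v) \cap D(v, w)|,$
with the pivot chosen as $v = \mathbf{T}[n]$. If $T_x = D(\mathbf{T}[i], \mathbf{T}[n])$ and $T_y = D(\mathbf{T}[i'], \mathbf{T}[n])$ for rows at respective Hamming distances $x, y \in \{\alpha, \beta\}$ from $\mathbf{T}[n]$, then
$\dist(\mathbf{T}[i], \mathbf{T}[i']) = x + y - 2|T_x \cap T_y|,$
so I can solve for $|T_x \cap T_y|$ as soon as I know the exact value of $\dist(\mathbf{T}[i], \mathbf{T}[i'])$.

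Two elementary observations pin that exact value down. First, the displayed identity shows that $\dist(\mathbf{T}[i], \mathbf{T}[i'])$ has the same parity as $x + y$. Second, the hypotheses $\mind(\mathbf{T}) \ge \alpha$ and $\maxd(\mathbf{T}) \le \alpha + 1$ force $\dist(\mathbf{T}[i], \mathbf{T}[i']) \in \{\alpha, \alpha + 1\}$, and these two candidate values have opposite parity. Hence the parity of $x + y$ alone selects which of the two is attained.

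From here I would just run the three cases. For $x = y = \alpha$, the sum $2\alpha$ is even, so the distance equals $\alpha$ when $\alpha$ is even and $\alpha + 1$ when $\alpha$ is odd; in both subcases solving the identity gives $|T_\alpha \cap T_\alpha'| = \lfloor \alpha / 2 \rfloor$. For $x = y = \beta$, the analogous parity dichotomy forces the distance to $\beta$ or $\alpha$ and yields $|T_\beta \cap T_\beta'| = \lceil \beta / 2 \rceil$ in both subcases. For $x = \alpha$, $y = \beta$, the sum $2\alpha + 1$ is odd, so the distance is $\alpha$ when $\alpha$ is odd and $\beta$ when $\alpha$ is even, and a direct computation yields the common value $\lceil \alpha / 2 \rceil = \lfloor \beta / 2 \rfloor$.

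I do not expect any genuine obstacle: the entire observation reduces to inclusion--exclusion combined with a parity check. The only conceptual point worth flagging is that the two-element target interval $\{\alpha, \alpha + 1\}$ interacts rigidly with the parity of $x + y$, which is precisely why the hypotheses $\mind(\mathbf{T}) \ge \alpha$ and $\maxd(\mathbf{T}) \le \alpha + 1$ are tight enough to determine the intersection sizes exactly rather than merely bound them.
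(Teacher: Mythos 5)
Your argument is correct and is exactly the intended justification: the paper states this as an observation without proof (citing Froese et al.), but the identity $\dist(\mathbf{T}[i],\mathbf{T}[i']) = x + y - 2|T_x \cap T_y|$ from the preliminaries, combined with the parity rigidity of the two-element interval $\{\alpha,\alpha+1\}$, is precisely the standard derivation, and all three of your case computations check out. No gaps.
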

\begin{proof}
  For any $T, T' \in \mathcal{T}_\alpha \cup \mathcal{T}_\beta$, we have $|T| + |T'| - 2 |T \cap T'| \in \{ \alpha, \beta \}$.
  If $T, T' \in \mathcal{T}_\alpha$ or $T, T' \in \mathcal{T}_\beta$, then $|T| + |T'| - 2 |T \cap T'|$ is even, and thus $|T| + |T'| - 2 |T \cap T'| \in 2 \lceil  \alpha / 2 \rceil = 2 \lfloor \beta / 2 \rfloor$.
  It follows that $|T_\alpha \cap T_\alpha'| = \lfloor \alpha / 2 \rfloor$ and $|T_\beta \cap T_\beta'| = \lceil \beta / 2 \rceil$.
  For the last equation, $|T_\alpha| + |T_\beta| - 2 |T_\alpha \cap T_\beta| = \alpha + \beta - 2 |T_\alpha \cap T_\beta| \in \{ \alpha, \beta \}$ gives us $2 |T_\alpha \cap T_\beta| \in \{ \alpha, \beta \}$.
  Hence, $|T_\alpha \cap T_\beta| = \lceil \alpha / 2 \rceil = \lfloor \beta / 2 \rfloor$.
\end{proof}

Surprisingly, an odd value of $\alpha$ seems to allow for significantly more efficient algorithms than an even value.

\begin{theorem}
  \label{theorem:aa+1}
  $(\alpha, \beta)$-\mc{} with $\beta=\alpha+1$ can be solved in
  \begin{enumerate}[label=(\roman*)]
    \item\label{aodd} $O(n\ell \sqrt{n + \ell})$ time for odd $\alpha$, and
    \item\label{aeven} $(n\ell)^{O(\alpha^3)}$ time for even~$\alpha$.
  \end{enumerate}
\end{theorem}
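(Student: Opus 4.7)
The plan is to follow the blueprint of \Cref{theorem:aapoly}: augment the input $\mathbf{S}$ by appending a fresh row $\square^\ell$ (call the result $\mathbf{S}'$), use Deza-type structural results to argue that any solution $\mathbf{T}$ must arise from a sunflower centred at a suitable completion of this new row, and reduce to a bounded number of instances of \textsc{SMC} solvable via \Cref{lemma:pmc}. For $n$ bounded in terms of $\alpha$, we apply the ILP-based argument used in \Cref{theorem:aapoly}, so we may assume that $n$ exceeds the thresholds of \Cref{prop:sunflower,prop:sunflowerodd}.

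For (\ref{aodd}), write $\alpha = 2\mu+1$ and $\beta = 2(\mu+1)$. By \Cref{obs:intersectionsize}, $\mathcal{T}_\beta$ is a $2(\mu+1)$-uniform weak $\Delta$-system with intersection size $\mu+1$, so \Cref{prop:sunflower} makes it a sunflower with core $C$ of size $\mu+1$ whenever $|\mathcal{T}_\beta|$ is above threshold. Since the cross-intersection size is $\mu+1$, \Cref{lemma:containcore} applied with $X = T_\alpha$ and $\lambda = \mu+1$ forces $C \subseteq T_\alpha$ for every $T_\alpha \in \mathcal{T}_\alpha$; combined with $|T_\alpha|=2\mu+1$ and $|T_\alpha \cap T_\alpha'|=\mu < |C|$ this immediately yields $|\mathcal{T}_\alpha| \le 1$. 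The symmetric case in which $\mathcal{T}_\alpha$ is large while $|\mathcal{T}_\beta|$ is below the Deza threshold is handled by \Cref{prop:sunflowerodd}(\ref{mu}) and reduces, upon relabeling the ``last'' row of $\mathbf{T}$, to the previous case. Setting $\mathbf{T}[n+1]$ to flip the entries of $\mathbf{T}[n]$ on $C$ turns the sets $D(\mathbf{T}[i],\mathbf{T}[n+1])$ into a pairwise-disjoint family of sizes in $\{\mu,\mu+1\}$ with at most one size-$\mu$ set; conversely, any such family yields a valid completion since two disjoint sets of sizes $a,b \in \{\mu,\mu+1\}$ have symmetric difference $a+b \in \{\alpha,\beta\}$. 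Guessing the (at most one) exceptional row index $i^* \in \{0\} \cup [n]$ and invoking \Cref{lemma:pmc} on the corresponding \textsc{SMC} instance with $s=\mu+1$ and $m = (\mu+1)n$ (respectively $m=(\mu+1)n - 1$ if $i^* \ne 0$), augmented by a small modification of the factor-graph functions $g,f$ to enforce a per-petal lower bound of $\mu$, finishes the argument.

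For (\ref{aeven}), with $\alpha = 2\mu$ and $\beta = 2\mu+1$, the analogous analysis uses \Cref{prop:sunflower} for $\mathcal{T}_\alpha$ and \Cref{prop:sunflowerodd}(\ref{mu+1}) for $\mathcal{T}_\beta$, yielding sunflowers with cores $C_\alpha$ of size $\mu$ and $C_\beta$ of size $\mu+1$. Crucially, the two cores need not coincide: \Cref{lemma:containcore} only forces $C_\alpha \subseteq T_\beta$ and $|T_\alpha \cap C_\beta| \ge \mu$ (each $T_\alpha$ missing at most one element of $C_\beta$). This allows several relative positions of $C_\alpha$ and $C_\beta$ and up to $O(\alpha)$ rows that deviate from the sunflower pattern. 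The plan is to enumerate over all $(n\ell)^{O(\alpha^3)}$ admissible choices of $C_\alpha$, $C_\beta$ and of the exceptional rows together with their completions, and to reduce each branch to an \textsc{SMC} instance as in (\ref{aodd}). The main obstacle is precisely this case analysis: verifying that the enumeration exhausts all solution structures, bounding its size by $(n\ell)^{O(\alpha^3)}$, and guaranteeing that each branch can be faithfully encoded in \textsc{SMC} while enforcing both $\mind \ge \alpha$ and $\maxd \le \beta$ simultaneously across all pairs of rows.
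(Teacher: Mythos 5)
Part~(\ref{aodd}) of your proposal is essentially the paper's proof: the same two-case Deza analysis (via \Cref{obs:intersectionsize}, \Cref{prop:sunflower}, \Cref{prop:sunflowerodd} and \Cref{lemma:containcore}) showing that at most one row deviates from a sunflower with petals of size $\beta/2$, followed by the same reduction to \textsc{SMC} on $\mathbf{S}$ augmented with a row $\square^\ell$, with $s=\beta/2$ and $m=ns-1$. Your extra steps (guessing the exceptional row index, modifying $g,f$ to impose a per-petal lower bound) are redundant, since the budget $m=ns-1$ together with $f(u_i)=s$ already forces at most one petal of size $s-1$, but they do no harm.

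Part~(\ref{aeven}) has a genuine gap: you explicitly defer the entire case analysis and enumeration, and that is precisely where the content of this half of the theorem lies. Two points in particular. First, when \emph{both} $|\mathcal{T}_\alpha|$ and $|\mathcal{T}_\beta|$ exceed the Deza threshold, the cores do not merely stand in ``several relative positions'': since $C_\alpha\subseteq T_\beta$ for every $T_\beta$ and the petals of $\mathcal{T}_\beta$ are pairwise disjoint, one gets $C_\alpha\subsetneq C_\beta$ with $|C_\beta\setminus C_\alpha|=1$, so this subcase collapses to guessing the single column $j$ with $C_\beta=C_\alpha\cup\{j\}$, deleting it, and solving $(\alpha,\alpha)$-\mc on the rest---no $(n\ell)^{O(\alpha^3)}$ enumeration is needed here. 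Second, the subcase that actually produces the $(n\ell)^{O(\alpha^3)}$ bound is the one your sketch does not isolate: when one of the two families is \emph{below} the Deza threshold $c=O(\alpha^2)$, it need not be a sunflower at all, so no structural lemma applies to it. The algorithm must enumerate the up to $c=O(\alpha^2)$ rows realizing that family (not ``$O(\alpha)$ rows'' as you write), the up to $\beta c=O(\alpha^3)$ columns they occupy, and the core, verify by hand that these rows and columns admit completions consistent with all pairwise distance constraints, and only then hand the residual matrix to \textsc{SMC}. Without carrying out this enumeration, bounding its size, and checking that the residual instance is faithfully an \textsc{SMC} instance, part~(\ref{aeven}) is not proved; your proposal names these as open obstacles rather than resolving them.
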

\begin{proof}
  \ref{aodd} We can assume that $n \ge \beta^2 / 2 + \beta + 7$ holds since otherwise the problem is linear-time solvable via a reduction to ILP as in the proof of \Cref{theorem:aapoly}.
  Suppose that $\mathbf{S}$ admits a completion $\mathbf{T}$ with $\mind(\mathbf{T}) \ge \alpha$ and $\maxd(\mathbf{T}) \le \beta$.
  Since $\mathcal{T} = \mathcal{T}_\alpha \cup \mathcal{T}_\beta$ and $|\mathcal{T}| \ge \beta^2 / 2 + \beta + 6$, it follows that $\max \{|\mathcal{T}_\alpha|, |\mathcal{T}_\beta| \} \ge c := (\beta / 2)^2 + (\beta / 2) + 3$.
  We consider two cases depending on the size of $\mathcal{T}_\alpha$ and~$\mathcal{T}_\beta$.
  \begin{enumerate}
    \item
      Suppose that $|\mathcal{T}_\alpha| \ge c$.
      Since $\mathcal{T}_\alpha$ is a weak $\Delta$-system with intersection size $(\alpha - 1) / 2$,
      $\mathcal{T}_\alpha$ is a sunflower with a core of size $(\alpha - 1) / 2$ and petals of size $(\alpha + 1) / 2$ by \Cref{prop:sunflowerodd}~\ref{mu}.
      We claim that $\mathcal{T}_\beta = \emptyset$.
      Suppose not and let $T_\beta \in \mathcal{T}_\beta$.
      We then obtain $|T_\alpha \cap T_\beta| = (\alpha + 1) / 2$ for all $T_\alpha \in \mathcal{T}_\alpha$ by \Cref{obs:intersectionsize}, which contradicts \Cref{lemma:containcore}.
    \item
      Suppose that $|\mathcal{T}_\beta| \ge c$.
      By \Cref{prop:sunflower}, $\mathcal{T}_\beta$ is a sunflower whose core $C$ has size $\beta / 2$.
      By \Cref{obs:intersectionsize} and \Cref{lemma:containcore}, $T_\alpha \supseteq C$ holds for each $T_\alpha \in \mathcal{T}_\alpha$.
      Now suppose that there exist $T_\alpha \ne T_\alpha' \in \mathcal{T}_\alpha$.
      Since $C \subseteq T_\alpha$ and $C \subseteq T_\alpha'$, it follows that $|T_\alpha \cap T_\alpha'| \ge \beta / 2$, thereby contradicting \Cref{obs:intersectionsize}.
      Hence, we have $|\mathcal{T}_\alpha| \le 1$.
  \end{enumerate}
  We construct an instance $I$ of \textsc{SMC} covering both cases above, as in \Cref{theorem:aapoly}.
  We use the matrix~$\mathbf{S}'$ obtained from~$\mathbf{S}$ by appending a row vector $\square^\ell$, and we set $s := \beta / 2$ and $m := ns - 1$.
  Basically, we allow at most one ``petal'' to have size~$s-1$.
  We return \textbf{Yes} if and only if $I$ is a \textbf{Yes}-instance.
  The correctness can be shown analogously to the proof of \Cref{theorem:aapoly}.
  
  \ref{aeven} Suppose that there is a completion $\mathbf{T}$ of $\mathbf{S}$ with $\mind(\mathbf{T}) \ge \alpha$ and $\maxd(\mathbf{T}) \le \beta$.
  Again, we can assume that $n > 2c$ for $c := (\beta / 2)^2 + (\beta / 2) + 4$, and consider a case distinction regarding the size of $\mathcal{T}_\alpha$ and $\mathcal{T}_\beta$.
  \begin{enumerate}
    \item
      Suppose that $|\mathcal{T}_\alpha| \ge c$ and $|\mathcal{T}_\beta| \ge c$.
      It follows from \Cref{obs:intersectionsize,prop:sunflower,prop:sunflowerodd} that $\mathcal{T}_\alpha$ and $\mathcal{T}_\beta$ are sunflowers.
      Let $C_\alpha$ and $C_\beta$ be the cores of $\mathcal{T}_\alpha$ and $\mathcal{T}_\beta$, respectively.
      Note that $|C_\alpha| = \alpha / 2$ and $|C_\beta| = \alpha / 2 + 1$, and hence $C_\alpha \subsetneq C_\beta$ holds by \Cref{obs:intersectionsize,lemma:containcore}.
      Let $j \in [\ell]$ be such that $C_\alpha \cup \{ j \} = C_\beta$ and let $\mathbf{T}' := \mathbf{T}[:, [\ell] \setminus \{ j \}]$.
      Then, the set family $\mathcal{T}'$ is a sunflower with a core of size $\alpha / 2$ and petals of size $\alpha / 2$.
      Hence, there exists $j \in [\ell]$ such that the $(\alpha, \alpha)$-\mc{} instance $\mathbf{S}[:, [\ell] \setminus \{ j \}]$ is a \textbf{Yes}-instance.
      On the other hand, if there is a completion $\mathbf{T}'$ of $\mathbf{S}[:, [\ell] \setminus \{ j \}]$ with $\mind(\mathbf{T}') = \maxd(\mathbf{T}') = \alpha$, then $\mind(\mathbf{T}) \ge \alpha$ and $\maxd(\mathbf{T}) \le \alpha + 1$ hold for any completion $\mathbf{T}$ of $\mathbf{S}$ with $\mathbf{T}[:, [\ell] \setminus \{ j \}]=\mathbf{T}'$.
    \item
      Suppose that $|\mathcal{T}_\alpha| \ge c$ and $|\mathcal{T}_\beta| < c$.
      The same argument as above shows that $T_\alpha \cap T_\beta = C$ holds for each $T_\alpha \in \mathcal{T}_\alpha$ and $T_\beta \in \mathcal{T}_\beta$, where $C$ is the size-$\alpha/2$ core of sunflower $\mathcal{T}_\alpha$.
      Let $I_\beta = \{ i \in [n - 1] \mid \dist(\mathbf{T}[i], \mathbf{T}[n]) = \beta \}$ be the row indices that induce the sets in $\mathcal{T}_\beta$ and let $J_\beta = \bigcup_{T_\beta \in \mathcal{T}_\beta} T_\beta$.
      Consider $\mathbf{T}' = \mathbf{S}[[n] \setminus I_\beta, [\ell] \setminus (C \cup J_\beta)]$
      and note that the family $\mathcal{T}'$ consists of pairwise disjoint sets, each of size $\alpha / 2$.
      We use this observation to obtain a reduction to \textsc{SMC}.
      The idea is to test all possible choices for~$\mathbf{T}'$, that is, we simply try out all possibilities to choose the following sets:
      \begin{itemize}
        \item
          $C \subseteq [\ell]$ of size exactly $\alpha / 2$.
        \item
          $I_\beta \subseteq [n - 1]$ of size at most $c$.
        \item
          $J_\beta \subseteq [\ell] \setminus C$ of size at most $\beta \cdot c$ such that $\dist_{[\ell] \setminus (C \cup J_\beta)}(\mathbf{S}[i_\beta], \mathbf{S}[n]) = 0$ for all $i_\beta \in I_\beta$.
        \end{itemize}
        For each possible choice, we check whether it allows for a valid completion.
        Formally, it is necessary that the following exist:
        \begin{itemize}
        \item
          A completion $t_C$ of $\mathbf{S}[n, C]$ such that $\mathbf{S}[i, j] \ne t_C[j]$ for all $i \in [n - 1]$ and $j \in C$.
          \item
          A completion $t_{J_\beta}$ of $\mathbf{S}[n, J_\beta]$ such that $\dist(t_{J_\beta}, \mathbf{S}[i, J_\beta]) = 0$ for all $i \in [n - 1] \setminus I_\beta$.
        \item
          A completion $t_{i_\beta}$ of $\mathbf{S}[i_\beta, J_\beta]$ for each $i_\beta \in I_\beta$ such that $\dist(t_{i_\beta}, t_{J_\beta}) = \alpha / 2 + 1$ for each $i_\beta \in I_\beta$ and $\dist(t_{i_\beta}, t_{i_\beta'}) = \alpha$ for each $i_\beta \ne i_\beta' \in I_\beta$.
          
        \end{itemize}
        The existence of the above completions can be checked in~$O(n)$ time.
      We then construct an \textsc{SMC} instance $(\mathbf{S}', \alpha / 2, (n - |I_\beta| - 1) \cdot \alpha / 2)$, where $\mathbf{S}'$ is an incomplete matrix with $n' = n - |I_\beta|$ rows and $\ell - |C| - |J_\beta|$ columns defined as follows:
      \begin{itemize}
        \item
          $\mathbf{S}'[[n' - 1]] = \mathbf{S}[[n - 1] \setminus I_\beta, [\ell] \setminus (C \cup J_\beta)]$.
        \item
          $\mathbf{S}'[n', j] = \square$ for each $j \in [\ell] \setminus (C \cup J_\beta)$ such that $\mathbf{S}[i_\beta, j] = \square$ for all $i_\beta \in I_\beta \cup \{ n \}$.
        \item
          $\mathbf{S}'[n', j] = \mathbf{S}[i_\beta, j]$ for each $j \in [\ell] \setminus (C \cup J_\beta)$ such that $\mathbf{S}[i_\beta, j] \neq \square$ for some $i_\beta \in I_\beta \cup \{ n \}$.
      \end{itemize}
      Overall, we solve at most $(n\ell)^{O(\alpha^3)}$ \textsc{SMC} instances and hence it requires~$(n\ell)^{O(\alpha^3)}$ time.
    \item
      Suppose that $|\mathcal{T}_\alpha| < c$ and $|\mathcal{T}_\beta| \ge c$.
      Let $i \in [n - 1]$ be such that $\dist(\mathbf{T}[i], \mathbf{T}[n]) = \beta$.
      Then, $\dist(\mathbf{T}[i], \mathbf{T}[i']) = \alpha$ holds for each $i' \in [n-1] \setminus \{ i \}$ with $\dist(\mathbf{T}[i'], \mathbf{T}[n]) = \beta$.
      Since there are at least $c - 1 = (\beta / 2)^2 + (\beta / 2) + 3$ such row indices, it follows that this case is essentially equivalent to the previous case (by considering row~$i$ as the last row).
  \end{enumerate}
\end{proof}

A natural question is whether one can extend our approach above to the case $\beta = \alpha + 2$ (particularly $\alpha=1$ and $\beta=3$).
The problem is that the petals of the sunflowers $\mathcal{T}_2$ and~$\mathcal{T}_3$ may have nonempty intersections.
Thus, reducing to \textsc{SMC} to obtain a polynomial-time algorithm is probably hopeless.

\subsection{NP-hardness}
\label{sec:const:nph}

Hermelin and Rozenberg~\cite[Theorem 5]{HR15} proved that \textsc{ConRMC} (under the name \textsc{Closest String with Wildcards}) is NP-hard even if $r[i] = 2$ for all $i \in [n]$.
Using this result, we prove the following.

\begin{restatable}{theorem}{dfournphard}
  \label{theorem:d4nphard}
  $(\alpha, \beta)$-\mc{} is NP-hard if $\beta \ge 2 \lceil \alpha / 2 \rceil + 4$.
\end{restatable}
\begin{proof}
  We give a polynomial-time reduction from~\textsc{ConRMC}.
  Let $(\mathbf{S} \in \{ 0, 1, \square \}^{n \times \ell},r$) be a \textsc{ConRMC} instance with~$r[i]=2$ for all~$i\in[n]$.
  
  Let $\mathbf{C}\in\{0,1\}^{(n+1)\times m}$ be the binary matrix with $m :=(n-1) \cdot \lceil \alpha / 2 \rceil + \beta - 2$ columns obtained by horizontally stacking
  \begin{itemize}
    \item
      the $(n + 1) \times (n + 1)$ identity matrix $\lceil \alpha / 2 \rceil$ times and
    \item
      the column vector $(0^n 1)^T$ $\beta - 2 \lceil \alpha / 2 \rceil - 2$ times.
  \end{itemize}
  Since the pairwise row Hamming distances in the identity matrix are all two, we have that:
  \begin{itemize}
    \item 
      $\dist(\mathbf{C}[i], \mathbf{C}[i']) = 2 \lceil \alpha / 2 \rceil$ for each $i \ne i' \in [n]$ and
    \item
      $\dist(\mathbf{C}[i], \mathbf{C}[n + 1]) = \beta - 2$ for each $i \in [n]$.
  \end{itemize}
  Consider the matrix $\mathbf{S}'\in\{ 0, 1, \square \}^{(n+1)\times(\ell+m)}$ obtained from $\mathbf{S}$ by adding the row $\square^{\ell}$ and then horizontally appending $\mathbf{C}$.
  We show that there exists a vector $v \in \{ 0, 1 \}^{\ell}$ with $\dist(v, \mathbf{S}[i]) \le 2$ for all $i \in [n]$ if and only if $\mathbf{S}'$ admits a completion $\mathbf{T}'$ with $\mind(\mathbf{T}') \ge \alpha$ and $\maxd(\mathbf{T}') \le \beta$.

  $(\Rightarrow)$
  Let $\mathbf{T}'$ be the completion of $\mathbf{S}'$ such that for each $i \in [n + 1]$, $\mathbf{T}'[i, [\ell]] = \mathbf{S}'[i] \leftarrow v$.
  Then, we have the following:
  \begin{itemize}
    \item
      $\mind(\mathbf{T'}) \ge \mind(\mathbf{C}) = 2 \lceil \alpha / 2 \rceil \ge \alpha$.
    \item
      $\dist(\mathbf{T}'[i], \mathbf{T}'[n + 1]) = \dist(\mathbf{S}[i], v) + \beta - 2 \le \beta$ for each $i \in [n]$.
    \item
      By the triangle inequality, $\dist(\mathbf{T}'[i], \mathbf{T}'[i']) \le \dist(v, \mathbf{S}[i]) + \dist(v, \mathbf{S}[i']) + \dist(\mathbf{C}[i], \mathbf{C}[i']) \le 2 + 2 + 2 \lceil \alpha / 2 \rceil \le \beta$ holds for each $i, i' \in [n]$ 
    \end{itemize}

  \indent $(\Leftarrow)$
  Let $v = \mathbf{T'}[n + 1, [\ell]]$.
  It is easy to see that $\dist(\mathbf{S}[i], v) = \dist(\mathbf{T}'[i], \mathbf{T}'[n + 1]) - \dist(\mathbf{C}[i], \mathbf{C}[n + 1]) \le \beta - (\beta - 2) \le 2$ holds for each $i \in [n]$.
\end{proof}

It remains open whether NP-hardness also holds for~$(\alpha,\alpha+3)$-\mc{} with~$\alpha \ge 1$ (recall that $(0,3)$-\mc{} is polynomial-time solvable).
In \Cref{sec:mindmc:k2}, however, we show NP-hardness for~$\beta = \alpha + 3$ when $\alpha$ and $\beta$ are part of the input.

\section{Bounded Number \texorpdfstring{$k$}{k} of Missing Entries per Row}
\label{sec:mindmc:k2}

In this section, we consider \mc{} with $\alpha$ and $\beta$ being part of the input, hence not necessarily being constants.
We consider the maximum number $k$ of missing entries in any row as a parameter (\mc{} is clearly trivial for $k = 0$).
We obtain two polynomial-time algorithms and two NP-hardness results.
Our polynomial-time algorithms are based on reductions to \textsc{2-SAT}.

\subsection{Polynomial-time algorithms}
\label{sec:unbounded:poly}

We show that \mc{} can be solved in polynomial time when $k = 1$, via a reduction to \textsc{2-SAT}.
For a Boolean variable~$x$, we use $(x = 1)$ and $(x \ne 0)$ to denote the positive literal~$x$.
Similarly, we use $(x = 0)$ and $(x \ne 1)$ for the negative literal $\neg x$.

\begin{restatable}{theorem}{kpoly}
  \label{theorem:k1polynomial}
  \mc{} can be solved in $O(n^2 \ell)$ time 
  \begin{enumerate}[label=(\roman*)]
    \item \label{kone} for $k = 1$, and
    \item \label{ktwoaa} for $k=2$ and $\alpha = \beta$.
  \end{enumerate}

\end{restatable}

\begin{proof}
  \ref{kone} We construct a 2-CNF formula $\phi$ of polynomial size such that $\phi$ is satisfiable if and only if the input matrix $\mathbf{S}$ admits a completion $\mathbf{T}$ with $\mind(\mathbf{T}) \ge \alpha$ and $\maxd(\mathbf{T}) \le \beta$.

  First, we compute the distances $\dist(\mathbf{S}[i], \mathbf{S}[i'])$ for each $i, i' \in [n]$ in $O(n^2 \ell)$ time.
  Clearly, if there exists a pair with distance less than $\alpha - 2$ or larger than $\beta$, then we have a \textbf{No}-instance.
  Let $I \subseteq [n]$ be the set of row indices corresponding to row vectors with a missing entry and let $j_i \in [\ell]$ be such that $\mathbf{S}[i, j_i] = \square$.

  We introduce a variable $x_i$ for each $i \in I$, where $x_{i}$ is set to true if $\mathbf{S}[i,j_i]$ is completed with a 1. We construct the formula $\phi$ as follows:
  \begin{itemize}
    \item
      For each $i < i' \in [n]$ with $\dist(\mathbf{S}[i], \mathbf{S}[i']) = \alpha - 2$:
      
          If $i \not\in I$ or $i' \not\in I$, or $j_i = j_{i'}$, then return \textbf{No}.
          Otherwise, add the clauses $(x_i = 1 - \mathbf{S}[i', j_i])$ and $(x_{i'} = 1 - \mathbf{S}[i, j_{i'}])$.
     
    \item
      For each $i < i' \in [n]$ with $\dist(\mathbf{S}[i], \mathbf{S}[i']) = \alpha - 1$:
      \begin{itemize}
        \item
          If $i \not\in I$ and $i' \not\in I$, then return \textbf{No}.
        \item
          If $i \in I$ and $i' \not\in I$, then add the clause $(x_i \ne \mathbf{S}[i', j_i])$.
        \item
          If $i \not\in I$ and $i' \in I$, then add the clause $(x_{i'} \ne \mathbf{S}[i, j_{i}])$.
        \item
          If $i\in I$ and~$i'\in I$ and $j_i = j_{i'}$, then add the clauses $(x_i \vee x_{i'})$ and $(\neg x_i \vee \neg x_{i'})$.
        \item
          If $i\in I$ and~$i'\in I$ and $j_i \neq j_{i'}$, then add the clause $(x_{i} \ne \mathbf{S}[i', j_i] \vee x_{i'} \ne \mathbf{S}[i, j_{i'}])$.
      \end{itemize}
    \end{itemize}
    It is easy to see that these clauses ensure that $\mind(\mathbf{T})\ge \alpha$.
    Similarly, to ensure that $\maxd(\mathbf{T})\le \beta$, we add the following clauses:
    \begin{itemize}
    \item
      For each $i < i' \in [n]$ with $\dist(\mathbf{S}[i], \mathbf{S}[i']) = \beta$:
      \begin{itemize}
          \item If $i \in I$ and $i' \not\in I$, then add the clause~$(x_i=\mathbf{S}[i', j_{i}])$.
          \item If $i \not\in I$ and $i' \in I$, then add the clause~$(x_{i'}=\mathbf{S}[i, j_{i'}])$.
          \item If $i \in I$ and $i'\in I$ and $j_i = j_{i'}$, then add the clauses~$(x_i\vee \neg x_{i'})$ and $(\neg x_i \vee x_{i'})$.
            \item If $i \in I$ and $i'\in I$ and $j_i \neq j_{i'}$, then add the clauses $(x_i=\mathbf{S}[i', j_{i}])$ and $(x_{i'}=\mathbf{S}[i, j_{i'}])$.
          \end{itemize}
    \item
      For each $i < i' \in [n]$ with $\dist(\mathbf{S}[i], \mathbf{S}[i']) = \beta - 1$:
      
          If $i\in I$ and~$i'\in I$ and $j_i \neq j_{i'}$, then add the clause $(x_{i} = \mathbf{S}[i', j_i] \vee x_{i'} = \mathbf{S}[i, j_{i'}])$.
        \end{itemize}
        Thus,~$\phi$ is of size~$O(n^2)$ and can be solved in $O(n^2)$ time~\cite{aspvall1979linear}.
        The correctness follows directly from the construction.

  \ref{ktwoaa}
  Since~$k=2$, there are at most four possible ways to complete the row vector $\mathbf{S}[n]$.
  So we can afford to try all possible completions of $\mathbf{S}[n]$.
  Without loss of generality, assume that $\mathbf{S}[n] = 0^\ell$.
  We show that \mc{} can be solved in $O(n^2 \ell)$ time. 
  
  First, we check whether $\alpha - 2 \le \dist(\mathbf{S}[i], 0^\ell) \le \alpha$ holds for each $i \in [n - 1]$ (otherwise, we return \textbf{No}).
  We do the following for each $i \in [n - 1]$:
  \begin{itemize}
    \item
      $\mathbf{S}[i]$ contains exactly one missing entry:
      If $\dist(\mathbf{S}[i], 0^\ell) = \alpha - 1$, then fill the missing entry by~1. 
      If $\dist(\mathbf{S}[i], 0^\ell) = \alpha$, then fill the missing entry by~0.
      If $\dist(\mathbf{S}[i], 0^\ell) = \alpha - 2$, then return \textbf{No}.
    \item
      $\mathbf{S}[i]$ contains exactly two missing entries:
      If $\dist(\mathbf{S}[i], 0^\ell) = \alpha - 2$, then fill both missing entries by~1.
      If $\dist(\mathbf{S}[i], 0^\ell) = \alpha$, then fill both missing entries by~0.
  \end{itemize}
  Now, each row vector either contains no missing entry or exactly two missing entries.
  Let $I_0 \subseteq [n]$ be the set of row indices corresponding to row vectors without any missing entry and let $I_2 = [n] \setminus I_0$.
  We check in~$O(n^2 \ell)$ time whether all pairwise Hamming distances in $\mathbf{S}[I_0]$ are $\alpha$. If not, then we return \textbf{No}.
  Note that we have $\dist(\mathbf{S}[i], 0^\ell) = \alpha - 1$ for each $i \in I_2$, and thus there are exactly two ways to complete $\mathbf{S}[i]$:
  One missing entry filled by~1 and the other by~0, or vice versa.
  For each $i \in I_2$, let $j_i^1 < j_i^2 \in [\ell]$ be such that $\mathbf{S}[i, j_i^1] = \mathbf{S}[i, j_i^2] = \square$.
  We verify whether the following necessary conditions hold:
  \begin{itemize}
    \item
      $\dist(\mathbf{S}[i_0], \mathbf{S}[i]) = \alpha - 1$ for each $i_0 \in I_0$ and $i \in I_2$ with $\mathbf{S}[i_0, j_i^1] = \mathbf{S}[i_0, j_i^2]$.
    \item
      $\dist(\mathbf{S}[i_0], \mathbf{S}[i]) \in  \{ \alpha- 2, \alpha \}$ for each $i_0 \in I_0$ and $i \in I_2$ with $\mathbf{S}[i_0, j_i^1] \ne \mathbf{S}[i_0, j_i^2]$.
    \item
      Let $i \ne i' \in I_2$ be such that $\{ j_i^1, j_i^2 \} \cap \{ j_{i'}^1, j_{i'}^2 \} = \emptyset$.
      Observe that if $\mathbf{S}[i', j_{i}^1] = \mathbf{S}[i', j_{i}^2]$, then the completion of $\mathbf{S}[i]$ increases the distance between $\mathbf{S}[i]$ and $\mathbf{S}[i']$ by exactly one.
      Otherwise, the distance either stays the same or increases by exactly two.
      It is analogous for the completion of $\mathbf{S}[i']$.
      Thus, the following must hold for each $i \ne i' \in I_2$ with $\{ j_i^1, j_i^2 \} \cap \{ j_{i'}^1, j_{i'}^2 \} = \emptyset$:
      \begin{itemize}
        \item
          $\dist(\mathbf{S}[i], \mathbf{S}[i']) = \alpha - 2$ if $\mathbf{S}[i, j_{i'}^1] = \mathbf{S}[i, j_{i'}^2]$ and $\mathbf{S}[i', j_{i}^1] = \mathbf{S}[i', j_{i}^2]$.
        \item
          $\dist(\mathbf{S}[i], \mathbf{S}[i']) = \{ \alpha - 3, \alpha - 1\}$ if either $\mathbf{S}[i, j_{i'}^1] = \mathbf{S}[i, j_{i'}^2]$ and $\mathbf{S}[i', j_{i}^1] \ne \mathbf{S}[i', j_{i}^2]$, or $\mathbf{S}[i, j_{i'}^1] \ne \mathbf{S}[i, j_{i'}^2]$ and $\mathbf{S}[i', j_{i}^1] = \mathbf{S}[i', j_{i}^2]$.
        \item
          $\dist(\mathbf{S}[i], \mathbf{S}[i']) = \{ \alpha - 4, \alpha - 2, \alpha \}$ if $\mathbf{S}[i, j_{i'}^1] \ne \mathbf{S}[i, j_{i'}^2]$ and $\mathbf{S}[i', j_{i}^1] \ne \mathbf{S}[i', j_{i}^2]$.
      \end{itemize}
    \item
      Let $i \ne i' \in I_2$ be such that $j_i^2 = j_{i'}^1$.
      Note that $\mathbf{S}[i, j_{i}^2]$ and $\mathbf{S}[i', j_{i'}^  1]$ are completed by the same value, if and only if $\mathbf{S}[i, j_i^{1}]$ and $\mathbf{S}[i', j_{i'}^{2}]$ are completed by the same value.
      Hence, the following must hold for each $i \ne i' \in I_2$ with $j_i^2 = j_{i'}^1$:
      \begin{itemize}
        \item
          $\dist(\mathbf{S}[i], \mathbf{S}[i']) \in \{ \alpha - 2, \alpha \}$ if $\mathbf{S}[i, j_{i'}^2] = \mathbf{S}[i', j_{i}^1]$.
        \item
          $\dist(\mathbf{S}[i], \mathbf{S}[i']) \in \{ \alpha - 3, \alpha - 1 \}$ if $\mathbf{S}[i, j_{i'}^2] \ne \mathbf{S}[i', j_{i}^1]$.
      \end{itemize}
    \item
      $\dist(\mathbf{S}[i], \mathbf{S}[i']) \in \{ \alpha - 2, \alpha \}$ for each $i \ne i' \in I_2$ with $j_i^1 = j_{i'}^1$ and $j_{i}^2 = j_{i'}^2$.
  \end{itemize}
  We return \textbf{No} if at least one of the above fails.
  Clearly this requires $O(n^2 \ell)$ time.

  Now, we construct a 2-CNF formula which is satisfiable if and only if our \mc{} instance is a \textbf{Yes}-instance.
  We introduce a variable $x_i$ for each $i \in I_2$, which basically encodes the completion of $\mathbf{S}[i]$.
  Intuitively speaking, $(\mathbf{S}[i, j_i^1], \mathbf{S}[i, j_i^2])$ are filled by $(1, 0)$ if $x_i$ is true and by $(0, 1)$ if $x_i$ is false.
  We add clauses as follows:
  \begin{itemize}
    \item
      For each $i_0 \in I_0$ and $i \in I_2$ with $\dist(\mathbf{S}[i_0], \mathbf{S}[i]) = \alpha - 2$, add a singleton clause $(x_i)$ if $(\mathbf{S}[i_0, j_i^1], \mathbf{S}[i_0, j_i^2]) = (0, 1)$ and $(\neg x_i)$ otherwise.
    \item
      For each $i \ne i' \in I_2$ with $\{ j_i^1, j_i^2 \} \cap \{ j_{i'}^1, j_{i'}^2 \} = \emptyset$, $\mathbf{S}[i, j_{i'}^1] = \mathbf{S}[i, j_{i'}^2]$, and $\mathbf{S}[i, j_{i'}^1] \ne \mathbf{S}[i, j_{i'}^2]$, add a clause $(x_i)$ if either $(\mathbf{S}[i, j_i^1], \mathbf{S}[i, j_i^{2}]) = (0, 1)$ and $\dist(\mathbf{S}[i], \mathbf{S}[i']) = \alpha - 3$, or $(\mathbf{S}[i, j_i^1], \mathbf{S}[i, j_i^{2}]) = (1, 0)$ and $\dist(\mathbf{S}[i], \mathbf{S}[i']) = \alpha - 1$.
      Analogously, we add a clause $(x_{i'})$ or $(\neg x_{i'})$ for the case $\{ j_i^1, j_i^2 \} \cap \{ j_{i'}^1, j_{i'}^2 \} = \emptyset$, $\mathbf{S}[i, j_{i'}^1] = \mathbf{S}[i, j_{i'}^2]$, and $\mathbf{S}[i, j_{i'}^1] \ne \mathbf{S}[i, j_{i'}^2]$.
    \item
      We do as follows for each $i \ne i' \in I_2$ with $\{ j_i^1, j_i^2 \} \cap \{ j_{i'}^1, j_{i'}^2 \} = \emptyset$, $\mathbf{S}[i, j_{i'}^1] \ne \mathbf{S}[i, j_{i'}^2]$, and $\mathbf{S}[i, j_{i'}^1] \ne \mathbf{S}[i, j_{i'}^2]$.
      \begin{itemize}
        \item
          Suppose that $\dist(\mathbf{S}[i], \mathbf{S}[i']) = \alpha - 4$.
          Add a clause $(x_i)$ if $(\mathbf{S}[i', j_{i}^1], \allowbreak \mathbf{S}[i', j_{i}^2]) = (0, 1)$ and $(\neg x_i)$ otherwise.
          Moreover, add a clause $(x_{i'})$ if $(\mathbf{S}[i, j_{i'}^1], \allowbreak \mathbf{S}[i, j_{i'}^2]) = (0, 1)$ and $(\neg x_{i'})$ otherwise.
        \item
          Suppose that $\dist(\mathbf{S}[i], \mathbf{S}[i']) = \alpha - 2$.
          If $(\mathbf{S}[i, j_{i'}^1], \mathbf{S}[i, j_{i'}^2]) = (\mathbf{S}[i', j_i^1], \mathbf{S}[i', j_i^2])$, then add clauses $(x_i \vee x_{i'})$ and $(\neg x_i \vee \neg x_{i'})$.
          Note that these clauses are satisfied if and only if $x_i \ne x_{i'}$.
          Otherwise, we add clauses $(x_i \vee \neg x_{i'})$ and $(\neg x_i \vee x_{i'})$, which are satisfied if and only if $x_i = x_{i'}$.
        \item
          Suppose that $\dist(\mathbf{S}[i], \mathbf{S}[i']) = \alpha$.
          Add a clause $(x_i)$ if $(\mathbf{S}[i', j_{i}^1], \allowbreak \mathbf{S}[i', j_{i}^2]) = (1, 0)$ and $(\neg x_i)$ otherwise.
          Moreover, add a clause $(x_{i'})$ if $(\mathbf{S}[i, j_{i'}^1], \allowbreak \mathbf{S}[i, j_{i'}^2]) = (1, 0)$ and $(\neg x_{i'})$ otherwise.
      \end{itemize}
      \item
        We do as follows for each $i \ne i' \in I_2$ with $j_i^2 = j_{i'}^1$.
        \begin{itemize}
          \item
            If $\dist(\mathbf{S}[i], \mathbf{S}[i']) = \alpha - 2$ and $\mathbf{S}[i, j_{i'}^2] = \mathbf{S}[i', j_{i}^1] = 0$, then add a clause $(x_i \vee \neg x_{i'})$.
          \item
            If $\dist(\mathbf{S}[i], \mathbf{S}[i']) = \alpha$ and $\mathbf{S}[i, j_{i'}^2] = \mathbf{S}[i', j_{i}^1] = 0$, then add clauses $(\neg x_i)$ and $(x_{i'})$.
          \item
            If $\dist(\mathbf{S}[i], \mathbf{S}[i']) = \alpha - 2$ and $\mathbf{S}[i, j_{i'}^2] = \mathbf{S}[i', j_{i}^1] = 1$, then add a clause $(\neg x_i \vee x_{i'})$.
          \item
            If $\dist(\mathbf{S}[i], \mathbf{S}[i']) = \alpha$ and $\mathbf{S}[i, j_{i'}^2] = \mathbf{S}[i', j_{i}^1] = 1$, then add clauses $(x_i)$ and $(\neg x_{i'})$.
          \item
            If $\dist(\mathbf{S}[i], \mathbf{S}[i']) = \alpha - 3$ and $(\mathbf{S}[i, j_{i'}^2], \mathbf{S}[i', j_{i}^1]) = (1, 0)$, then add clauses $(x_i)$ and $(x_{i'})$.
          \item
            If $\dist(\mathbf{S}[i], \mathbf{S}[i']) = \alpha - 1$ and $(\mathbf{S}[i, j_{i'}^2], \mathbf{S}[i', j_{i}^1]) = (1, 0)$, then add a clause $(\neg x_i \vee \neg x_{i'})$.
          \item
            If $\dist(\mathbf{S}[i], \mathbf{S}[i']) = \alpha - 3$ and $(\mathbf{S}[i, j_{i'}^2], \mathbf{S}[i', j_{i}^1]) = (0, 1)$, then add clauses $(\neg x_i)$ and $(\neg x_{i'})$.
          \item
            If $\dist(\mathbf{S}[i], \mathbf{S}[i']) = \alpha - 1$ and $(\mathbf{S}[i, j_{i'}^2], \mathbf{S}[i', j_{i}^1]) = (0, 1)$, then add a clause $(x_i \vee x_{i'})$.
        \end{itemize}
        \item
          For each $i \ne i' \in I_2$ with $j_i^1 = j_{i'}^1$ and $j_{i}^2 = j_{i'}^2$,
          add clauses $(x_i \vee \neg x_{i'})$ and $(\neg x_i \vee x_{i'})$ if $\dist(\mathbf{S}[i], \mathbf{S}[i']) = \alpha - 2$ and
          add clauses $(x_i \vee x_{i'})$ and $(\neg x_i \vee \neg x_{i'})$ if $\dist(\mathbf{S}[i], \mathbf{S}[i']) = \alpha$.
  \end{itemize}

  It is easy to check that the constructed formula is correct.
  The formula contains $O(n^2)$ clauses and can thus be solved in~$O(n^2)$ time~\cite{aspvall1979linear}.
\end{proof}

Next, we show that the quadratic dependency on $n$ in the running time of \Cref{theorem:k1polynomial} is inevitable under the Orthogonal Vectors Conjecture (OVC), which states that \textsc{Orthogonal Vectors} cannot be solved in $O(n^{2 - \epsilon} \cdot \ell^c)$ time for any $\epsilon, c > 0$ (it is known that the Strong Exponential Time Hypothesis implies the OVC \cite{Wil05}).

\dprob{Orthogonal Vectors}
{Sets $\mathcal{U}, \mathcal{V}$ of row vectors in \( \{ 0, 1\}^\ell \) with \( |\mathcal{U}| = |\mathcal{V}| = n \).}
{Are there row vectors $u \in \mathcal{U}$ and $v \in \mathcal{V}$ such that \( u[j] \cdot v[j] = 0 \) holds for all $j \in [\ell]$?}

\begin{theorem}
  \mc{} cannot be solved in $O(n^{2 - \epsilon} \cdot \ell^c)$ time for any $c, \epsilon > 0$, unless OVC breaks.
\end{theorem}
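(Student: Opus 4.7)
The plan is to give a fine-grained reduction from \textsc{Orthogonal Vectors} to \mc that maps an OV instance with $n$ vectors per side, each of length $\ell$, to a \mc instance with $2n$ rows and $O(\ell)$ columns in time $O(n\ell)$. Composing this reduction with any hypothetical $O(N^{2-\epsilon}L^c)$ algorithm for \mc on $N$ rows and $L$ columns would then yield an $O(n^{2-\epsilon}\ell^c)$ algorithm for OV, contradicting the conjecture.

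First I would perform a weight-normalization preprocessing step so that every input vector has Hamming weight exactly $\ell$ while preserving all cross inner products. Given $u_i\in\mathcal{U}$ and $v_j\in\mathcal{V}$, I replace them by padded vectors $u_i', v_j'\in\{0,1\}^{3\ell}$ of the form
\[
u_i' \;=\; u_i\cdot p_i^U\cdot 0^\ell, \qquad v_j' \;=\; v_j\cdot 0^\ell\cdot p_j^V,
\]
where $p_i^U,p_j^V\in\{0,1\}^\ell$ are arbitrary padding strings of Hamming weight $\ell - |u_i|$ and $\ell - |v_j|$, respectively. Since the $\mathcal{U}$- and $\mathcal{V}$-padding occupy disjoint column blocks, $\langle u_i',v_j'\rangle = \langle u_i,v_j\rangle$ for all $i,j$, while $|u_i'|=|v_j'|=\ell$. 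I then build the complete matrix $\mathbf{S}\in\{0,1\}^{2n\times(3\ell+1)}$ whose $i$-th row is $0\cdot u_i'$ and whose $(n+j)$-th row is $1\cdot v_j'$, and set $\alpha=0$, $\beta=2\ell$.

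For correctness, since $\alpha=0$ and $\mathbf{S}$ has no missing entries, the constructed \mc instance is a \textbf{Yes}-instance iff every pair of rows has Hamming distance at most $2\ell$. Within the $\mathcal{U}$-block, $d(u_i',u_j')\le 2\ell$ because both vectors have weight $\ell$, and the same holds within the $\mathcal{V}$-block. For a cross pair, the marker column contributes $1$ and $d(u_i',v_j') = 2\ell - 2\langle u_i',v_j'\rangle$, so the total row distance is $2\ell + 1 - 2\langle u_i,v_j\rangle$. This exceeds $\beta=2\ell$ precisely when $\langle u_i,v_j\rangle=0$. Hence $\mathbf{S}$ is a \textbf{Yes}-instance of \mc iff no orthogonal pair exists in $\mathcal{U}\times\mathcal{V}$.

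The reduction clearly runs in $O(n\ell)$ time, and the produced instance has $N = 2n$ and $L = 3\ell+1$. The only delicate point is the weight-normalization step: the $\mathcal{U}$- and $\mathcal{V}$-paddings must land in disjoint column ranges so that the cross inner products are preserved exactly, which is why I use three disjoint column blocks of length $\ell$ rather than overlapping them. With the marker column and the calibration $\beta = 2\ell$ in place, the crossover from ``distance $\le\beta$'' to ``distance $>\beta$'' happens exactly at orthogonality, and the remaining arithmetic is routine.
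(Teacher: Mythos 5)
Your reduction is correct, and it follows the same overall strategy as the paper's proof: reduce \textsc{Orthogonal Vectors} to deciding the diameter of a \emph{complete} binary matrix with $2n$ rows and $O(\ell)$ columns, so that any $O(N^{2-\epsilon}L^c)$ algorithm for \mc would break the conjecture. The difference lies entirely in the gadget. The paper replaces each coordinate by a $3$-bit block ($001/111$ for $\mathcal{U}$-rows, $010/111$ for $\mathcal{V}$-rows) and appends $\ell$ padding blocks ($000$ versus $111$), which forces every cross-pair distance to equal $5\ell-2\langle u,v\rangle$ while intra-block distances stay at most $2\ell$; the threshold $\maxd(\mathbf{T})=5\ell$ is then hit exactly at orthogonality, and no marker column is needed because cross pairs automatically dominate. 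You instead normalize all Hamming weights to $\ell$ via disjoint padding blocks (which preserves all cross inner products) and add one marker column, so that $d(x,y)=|x|+|y|-2\langle x,y\rangle$ gives intra-block distances at most $2\ell$ and cross distances $2\ell+1-2\langle u_i,v_j\rangle$, with $\beta=2\ell$ separating $\langle u_i,v_j\rangle=0$ from $\langle u_i,v_j\rangle\ge 1$. Your version produces a slightly smaller instance ($3\ell+1$ versus $6\ell$ columns) and makes the \textbf{Yes}/\textbf{No} correspondence (Yes iff no orthogonal pair) immediate from the standard distance identity; the paper's version avoids the weight-normalization preprocessing at the cost of a per-coordinate case analysis. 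Both are linear-time, linear-size reductions, so both establish the stated lower bound.
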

\begin{proof}
  We reduce from \textsc{Orthogonal Vectors}.
  Let $u_1, \dots, u_n, v_1, \dots, v_n \in \{ 0, 1 \}^\ell$ be row vectors.
  Consider the matrix $\mathbf{T} \in \{ 0, 1 \}^{2n \times 6\ell}$ where
  \begin{align*}
    \mathbf{T}[i, [3j - 2, 3j]] &= \begin{cases}
      001 & \text{if } u_i[j] = 0, \\
      111 & \text{if } u_i[j] = 1, \\
    \end{cases} & 
    \mathbf{T}[i, [3 \ell + 3j - 2, 3 \ell + 3j]] = 000,
    \\
    \mathbf{T}[n + i, [3j - 2, 3j]] &= \begin{cases}
      010 & \text{if } v_i[j] = 0, \\
      111 & \text{if } v_i[j] = 1, \\
    \end{cases} &
    \mathbf{T}[n + i, [3 \ell + 3j - 2, 3 \ell +  3j]] = 111,
  \end{align*}
  for each $i \in [n]$ and $j \in [\ell]$ (see \Cref{fig:ov} for an illustration).
  We show that $\maxd(\mathbf{T}) = 5 \ell$ if and only if there are $i, i' \in [n]$ such that $u_i$ and~$v_{i'}$ are orthogonal.
  By construction, we have
  \begin{align*}
    \dist(\mathbf{T}[i, [3j - 2, 3j]], \mathbf{T}[n + i', [3j - 2, 3j]]) =
    \begin{cases}
      2 & \text{if } u_i[j] = 0 \text{ or } v_i[j] = 0, \\
      0 & \text{otherwise}.
    \end{cases}
  \end{align*}
  for any $i, i' \in [n]$ and $j \in [\ell]$.
  Thus, it holds for any orthogonal vectors $u_i$ and $v_i'$ that $\dist(\mathbf{T}[i], \mathbf{T}[n + i']) = 5 \ell$.
  Conversely, suppose that there exist $i < i' \in [2n]$ such that $\dist(\mathbf{T}[i], \mathbf{T}[i']) = 5 \ell$.
  It is easy to see that $i \in [n]$ and $i' \in [n + 1, 2n]$ hold (since otherwise $\dist(\mathbf{T}[i], \mathbf{T}[i']) \le 3 \ell$).
  Then, the vectors $u_i$ and $v_{i' - n}$ are orthogonal.
\end{proof}

\begin{figure}
  $$
    \mathbf{T} = \left[
    \begin{array}{c|c|c|c}
      001 & 111 & 001 & 000000000 \\
      111 & 111 & 001 & 000000000 \\
      \hline
      111 & 111 & 010 & 111111111 \\
      111 & 010 & 111 & 111111111 \\
    \end{array}
    \right]
  $$
  \caption{An illustration of the reduction from \textsc{Orthogonal Vectors}, where $\mathcal{U} = \{ 010, 110\}$ and $\mathcal{V} = \{ 110, 101 \}$.}
  \label{fig:ov}
\end{figure}

\subsection{NP-hardness}
\label{sec:unbounded:nph}

Next, we prove the following two NP-hardness results.
In particular, it turns out that \mc{} is NP-hard even for $\alpha = \beta$ when $\alpha$ and $\beta$ are unbounded.
This is in contrast to \Cref{theorem:aapoly}, where we showed that \mc{} is polynomial-time solvable when $\alpha = \beta$ is fixed.

\begin{restatable}{theorem}{knphard}
  \label{theorem:k2nphard}
  \mc{} is NP-hard
  \begin{enumerate}[label=(\roman*)]
    \item \label{ktwo} for $k = 2$ and $\beta - \alpha \ge 3$, and
    \item \label{kthree} for $k = 3$ and $\alpha = \beta$.
  \end{enumerate}

\end{restatable}

The proof for \Cref{theorem:k2nphard} is based on reductions from NP-hard variants of \textsc{3-SAT}.
The most challenging technical aspect of the reductions is to ensure the upper and lower bounds on the pairwise row Hamming distances.
To overcome this challenge, we adjust pairwise row distances by making heavy use of a specific matrix, in which one pair of rows has distance exactly two greater than any other:

\begin{restatable}{lemma}{matrixb}
  \label{lemma:matrixb}
  For each $n \ge 3$ and $i < i' \in [n]$, one can construct in $n^{O(1)}$ time, a matrix $\mathbf{B}^n_{i, i'}\in\{0,1\}^{n\times\ell}$ with $n$ rows and $\ell:=(\binom{n}{2} - 1) (2n - 1)$ columns such that for all $h \ne h' \in [n]$,
  \begin{align*}
    \dist(\mathbf{B}^n_{i, i'}[h], \mathbf{B}^n_{i, i'}[h']) = \begin{cases}
      \gamma(\mathbf{B}^n_{i, i'}) + 2 & \text{if } (h, h') = (i, i'), \\
      \gamma(\mathbf{B}^n_{i, i'})     & \text{otherwise}. \\
    \end{cases}
  \end{align*}
\end{restatable}

\begin{proof}
  First, we define a binary matrix $\mathbf{A}^n \in \{ 0, 1 \}^{n \times (2n - 1)}$ as follows:
  \[
    \mathbf{A}^n := \left[\begin{array}{*{11}{c}}
      1 & 0 & 0                  & 0 & 0 & \cdots & 0 & 0 & 0 & \cdots & 0 \\
      0 & 1 & 0                  & 0 & 0 & \cdots & 0 & 0 & 0 & \cdots & 0 \\
      1 & 1 & 1                  & \multicolumn{4}{c}{\multirow{4}{*}{$\mathbf{I}$}} & \multicolumn{4}{c}{\multirow{4}{*}{$\mathbf{I}$}} \\
      1 & 1 & 1                  &   &   &        &   &   &   &        &   \\
      \multicolumn{3}{c}{$\vdots$} & &   &        &   &   &   &        & \\
      1 & 1 & 1                  &   &   &        &   &   &   &        &   \\
    \end{array}\right],
  \]
  where $\mathbf{I}$ is the \( (n - 2) \times (n - 2) \) identity matrix.
  Note that $\dist(\mathbf{A}^n[1], \mathbf{A}^n[2]) = 2$ and \( \dist(\mathbf{A}^n[h], \mathbf{A}^n[h']) = 4 \) for all \( h < h' \in [n] \) with $(h, h') \ne (1, 2)$.
  We also define the matrix $\mathbf{A}^n_{h, h'}$ obtained from \( \mathbf{A}^n \) by swapping the row vectors \( \mathbf{A}^n[1] \) (and \( \mathbf{A}^n[2] \)) with \( \mathbf{A}^n[h] \) (and \( \mathbf{A}^n[h'] \), respectively) for each $h < h' \in [n]$.
  The matrix $\mathbf{A}^n_{h, h'}$ is a matrix in which the distance between the $h$-th and $h'$-th row vectors are exactly two smaller than all other pairs.
  Now we use the matrix $\mathbf{A}^n_{h, h'}$ to obtain a binary matrix in which the distance of a certain pair of row vectors is exactly two greater than all others.
  We define $\mathbf{B}^n_{i, i'} \in \{ 0, 1 \}^{n \times \ell}$ as the matrix obtained by horizontally stacking \( \binom{n}{2} - 1 \) matrices $\mathbf{A}^n_{h, h'}$ for all $h < h' \in [n]$ with $(h, h') \ne (i, i')$:
  \begin{align*}
    \mathbf{B}^n_{i, i'} := 
    \begingroup
    \setlength\arraycolsep{2pt}
    \begin{bmatrix}
      \mathbf{A}^n_{1, 1} \cdots  \mathbf{A}^n_{1, n}  \cdots 
      \mathbf{A}^n_{i, i + 1}  \cdots  \mathbf{A}^n_{i, i' - 1}  \mathbf{A}^n_{i, i' + 1}  \cdots  \mathbf{A}^n_{i, n} \cdots
       \mathbf{A}^n_{n - 1, n}
    \end{bmatrix}.
    \endgroup
  \end{align*}
  Observe that $\dist(\mathbf{B}^n_{i, i'}[i], \mathbf{B}^n_{i, i'}[i']) = 4 \cdot (\binom{n}{2} - 1) = 2n(n - 1) - 4$, since \( \dist(\mathbf{A}^n_{i, i'}[h], \mathbf{A}^n_{i, i'}[h']) = 4 \) for all $h < h' \in [n]$ with $(h, h') \ne (i, i')$.
  Note also that for each $h < h' \in [n]$ with $(h, h') \ne (i, i')$, we have $\dist(\mathbf{B}^n_{i, i'}[h], \mathbf{B}^n_{i, i'}[h']) = 2n(n - 1) - 6$ because the distance between \( \mathbf{A}^n_{i, i'}[\widetilde{h}] \) and \( \mathbf{A}^n_{i, i'}[\widetilde{h'}] \) is four for every $\widetilde{h} < \widetilde{h'} \in [n]$ except that it is smaller by two for the pair \( \mathbf{A}^n_{i, i'}[i] \) and \( \mathbf{A}^n_{i, i'}[i'] \).
  It is easy to see that the matrix $\mathbf{B}^n_{i, i'}$ can be constructed in polynomial time.
\end{proof}

% We now prove that \mc{} is NP-hard for $k = 2$ and $\alpha + 3 \le \beta$.
% Our proofs are based on reductions from NP-hard variants of \textsc{3-SAT}.
% The most challenging technical aspect of the reduction is to ensure the
% upper and lower bounds on the Hamming distances. We achieve this by making heavy use of \Cref{lemma:matrixb}.

%\knphard*
% \begin{theorem}
%   \label{theorem:k2nphard}
%   \mc{} is NP-hard for $k = 2$ and $\alpha + 3 \le \beta$.
% \end{theorem}
\begin{proof}[\Cref{theorem:k2nphard}]
  \ref{ktwo}
  We reduce from the following NP-hard variant of \textsc{3-SAT} known as \textsc{(3,\,B2)-SAT}~\cite{BKS03}:

  \dprob
  {(3,\,B2)-SAT}
  {A Boolean formula in conjunctive normal form, in which each literal occurs exactly twice and each clause contains exactly three literals of distinct variables.}
  {Is there a satisfying truth assignment?}

  We divide our proof into two parts as follows.
  We first provide a set $\mathcal{C}$ of incomplete matrices and describe certain completion rules such that
  the given formula of \textsc{(3,\,B2)-SAT} is satisfiable if and only if the matrices \( \mathcal{C} \) can be completed under those rules.
  We then show that one can construct in polynomial time a single incomplete matrix $\mathbf{S}$ containing each matrix in \( \mathcal{C} \) as a submatrix, such that $\mathbf{S}$ admits a solution
  if and only if the submatrices in \( \mathcal{C} \) can be completed according to the rules.
  We are going to exploit the matrix \( \mathbf{B}^n_{i, i'} \) of \Cref{lemma:matrixb} for this construction.

  \paragraph*{Part I}
  Let $\phi$ be an instance of \textsc{(3,\,B2)-SAT} with clauses $C_0, \dots, C_{m - 1}$.
  We define the following matrix for each clause~$C_i$
  \begin{align*}
    \mathbf{C}_i := \left[\begin{array}{*{8}{c}}
      \multicolumn{2}{c}{l^1_i} & 0 & 0                     & 0 & 0 & 1 & 1 \\
      0 & 0                     & \multicolumn{2}{c}{l^2_i} & 0 & 0 & 1 & 0 \\
      0 & 0                     & 0 & 0                     & \multicolumn{2}{c}{l^3_i} & 0 & 1 \\
      1 & 0                     & 1 & 0                     & 1 & 0                     & \multicolumn{2}{c}{c_i}\\
    \end{array}\right].
  \end{align*}
  Here we use $l^1_i, l^2_i, l^3_i$, and $c_i$ to represent two missing entries for notational purposes.
  Note that the matrices $\mathbf{C}_i$ are identical for all $i \in [0, m - 1]$.
  We will prove that $\phi$ is satisfiable if and only if it is possible to complete matrices $\mathcal{C} := \{ \mathbf{C}_i \mid i \in [0, m - 1] \}$ while satisfying the following constraints:
  \begin{enumerate}
    \item The missing entries \( l_i^j \) are filled by 10 or 01 for each $i \in [0, m - 1]$ and $j \in [3]$. \label{item:rule:1}
    \item The missing entries \( c_i \) are filled by 00, 01, or 10 for each $i \in [0, m - 1]$. \label{item:rule:2}
    \item If the missing entries \( c_i \) are filled by 00 (01, 10), then $l_i^1$ ($l_i^2$, $l_i^3$, respectively) are filled by 10 for each $i \in [0, m - 1]$. \label{item:rule:3}
    \item
    Let $\mathcal{Z}$ be the set such that $(i, j, i', j') \in \mathcal{Z}$ if and only if the $j$-th literal in $C_i$ and the $j'$-th literal in $C_{i'}$ correspond to the same variable and one is the negation of the other for each $i < i' \in [0, m - 1]$ and $j, j' \in [3]$.
    If $(i, j, i', j') \in \mathcal{Z}$, then either $l_i^j$ or $l_{i'}^{j'}$ is filled by 01. \label{item:rule:4}
  \end{enumerate}

  Note that there are three choices for filling in $c_i$ by Constraint~\ref{item:rule:2}.
  The intuitive idea is that the choice of $c_i$ dictates which literal (in binary encoding) in the clause~$C_i$ is satisfied.
  We then obtain a satisfying truth assignment for $\phi$, as we shall see in the following claim.
  \begin{claim}
    \label{claim:part1}
    The formula $\phi$ is satisfiable if and only if the matrices $\mathcal{C}$ can be completed according to Constraints~\ref{item:rule:1}~to~\ref{item:rule:4}.
  \end{claim}
  \begin{claimproof}
  \( (\Rightarrow) \)
  If there exists a truth assignment $\tau$ satisfying $\phi$, then at least one literal in the clause $C_i$ evaluates to true for each $i \in [0, m - 1]$.
  We choose an arbitrary number $l_i \in [3]$ such that the $l_i$-th literal of \( C_i \) is satisfied in $\tau$ for each $i \in [0, m - 1]$.
  For each \( i \in [0, m - 1] \) we complete the matrix $C_i$ as follows:
  \begin{itemize}
    \item
    If $l_i = 1$, then the missing entries $c_i, l_i^1, l_i^2, l_i^3$ are filled by 00, 10, 01, 01, respectively.
    \item
    If $l_i = 2$, then the missing entries $c_i, l_i^1, l_i^2, l_i^3$ are filled by 01, 01, 10, 01, respectively.
    \item
    If $l_i = 3$, then the missing entries $c_i, l_i^1, l_i^2, l_i^3$ are filled by 10, 01, 01, 10, respectively.
  \end{itemize}
  It is easy to verify that Constraints~\ref{item:rule:1}~to~\ref{item:rule:3} are satisfied.
  We claim that Constraint~\ref{item:rule:4} is also satisfied.
  Suppose to the contrary that there exists an $(i, j, i', j') \in \mathcal Z$ such that the missing entries $l_i^j$ and~$l_{i'}^{j'}$ are both filled by 10.
  Then, we have $l_i = j$ and $l_{i'} = {j'}$, meaning that~$\tau$ satisfies both $x$ and~$\neg x$ (a contradiction).

  \( (\Leftarrow) \)
  For each \( i \in [0, m - 1] \) and \( j \in [3] \) where $l_i^j$ is filled by 10, we construct a truth assignment such that the $j$-th literal of $C_i$ is satisfied.
  No variable is given opposing truth values by such a truth assignment because of Constraint~\ref{item:rule:4}.
  It also satisfies every clause:
  Otherwise, there exists an integer \( i \in [0, m - 1] \) such that all $l_i^1, l_i^2, l_i^3$ are completed by~01 due to Constraint~\ref{item:rule:1}.
  Now we have a contradiction because Constraints~\ref{item:rule:2}~and~\ref{item:rule:3} imply that at least one of $l_i^1, l_i^2, l_i^3$ is filled by 10.
  \end{claimproof}

  \paragraph*{Part II}
  We provided matrices \( \mathcal{C} \) as well as the constraints on the completion of $\mathcal{C}$ in Part~I.
  Now, we describe how to construct a matrix \( \mathbf{S} \) that admits a completion \( \mathbf{T} \) with $\mind(\mathbf{T}) \ge \alpha$ and $\maxd(\mathbf{T}) \le \beta$ if and only if \( \mathcal{C} \) can be completed fulfilling Constraints~\ref{item:rule:1}~to~\ref{item:rule:4}.
  First, for each matrix $\mathbf{C}_i \in \mathcal{C}$, we introduce a matrix $\mathbf{C}_i' \in \{ 0, 1, \square \}^{11 \times 8}$ containing $\mathbf{C}_i$ by adding row vectors as follows (see \Cref{fig:matrixc1}).
  These additional rows will help to encode Constraints~\ref{item:rule:1}, \ref{item:rule:2}, and \ref{item:rule:3}. 
  \begin{itemize}
    \item
      The first four row vectors of $\mathbf{C}_i'$ are identical to the row vectors of $\mathbf{C}_i$.
    \item
      The row vectors $\mathbf{C}_i'[5]$, $\mathbf{C}_i'[6]$, and $\mathbf{C}_i'[7]$ are obtained by completing the missing entries in $\mathbf{C}_i[1]$, $\mathbf{C}_i[2]$, and $\mathbf{C}_i[3]$, respectively, with 00.
    \item
      The row vectors $\mathbf{C}_i'[8]$, $\mathbf{C}_i'[9]$, and $\mathbf{C}_i'[10]$ are obtained by completing the missing entries in $\mathbf{C}_i[1]$, $\mathbf{C}_i[2]$, and $\mathbf{C}_i[3]$, respectively, with 11.
    \item 
      The row vector $\mathbf{C}_i'[11]$ is obtained by completing the missing entries in $\mathbf{C}_i[4]$ with 00.
  \end{itemize}
  \begin{figure}[t]
    \centering
    \begin{subfigure}{.4\textwidth}
      \begin{align*}
          \left[\begin{array}{*{8}{c}}
          \multicolumn{2}{c}{l^1_i} & 0 & 0                     & 0 & 0 & 1 & 1 \\
          0 & 0                     & \multicolumn{2}{c}{l^2_i} & 0 & 0 & 1 & 0 \\
          0 & 0                     & 0 & 0                     & \multicolumn{2}{c}{l^3_i} & 0 & 1 \\
          1 & 0                     & 1 & 0                     & 1 & 0                     & \multicolumn{2}{c}{c_i}\\
          0 & 0 & 0 & 0 & 0 & 0 & 1 & 1 \\
          0 & 0 & 0 & 0 & 0 & 0 & 1 & 0 \\
          0 & 0 & 0 & 0 & 0 & 0 & 0 & 1 \\
          1 & 1 & 0 & 0 & 0 & 0 & 1 & 1 \\
          0 & 0 & 1 & 1 & 0 & 0 & 1 & 0 \\
          0 & 0 & 0 & 0 & 1 & 1 & 0 & 1 \\
          1 & 0 & 1 & 0 & 1 & 0 & 0 & 0 \\
        \end{array}\right]
      \end{align*}
    \end{subfigure}
    \quad
    \begin{subfigure}{.55\textwidth}
      \begin{align*}
        \begin{array}{*{9}{c}}
                 & \vdots &        &    &        &    &        & \vdots      & \\
          \cdots & l_i^j  & \cdots & 00 & \cdots & 00 & \cdots & 01          & \cdots \\
                 & \vdots &        &    &        &    &        & \vdots      & \\
                 & 00     &        &    &        &    &        & 00          & \\
                 & \vdots &        &    & \ddots &    &        & \vdots      & \\
                 & 00     &        &    &        &    &        & 00          & \\
                 & \vdots &        &    &        &    &        & \vdots      & \\
          \cdots & 01     & \cdots & 00 & \cdots & 00 & \cdots & l_{i'}^{j'} & \cdots \\
                 & \vdots &        &    &        &    &        & \vdots      &
        \end{array}
      \end{align*}
    \end{subfigure}
    \caption{The matrix $\mathbf{C}_i' \in \{ 0, 1, \square \}^{11 \times 8}$ (left).
    The rows $\{ 11i + j, 11i' + j' \}$ and the columns $\{ 8i + 2j - 1, 8i + 2j, 8j' + 2j' - 1, 8j' + 2j' \}$ of $\mathbf{C}$ for $(i, j, i', j') \in \mathcal Z$ (right).}
    \label{fig:matrixc1}
  \end{figure}

  Next, we construct a matrix $\mathbf{C} \in \{ 0, 1, \square \}^{11m \times 8m}$ from the matrices $\mathbf{C}_i'$ as follows (see also \Cref{fig:matrixc1}):
  We start with an empty matrix of size $11m \times 8m$.
  We first place $\mathbf{C}_0', \dots, \mathbf{C}_{m - 1}'$ on the diagonal.
  Then, we place 01 at the intersection of the row containing $l_i^j$ ($l_{i'}^{j'}$) and the columns containing $l_{i'}^{j'}$ ($l_i^j$, respectively), for each $(i, j, i', j') \in \mathcal Z$.
  This essentially encodes Constraint~\ref{item:rule:4}.
  Finally, let the remaining entries be all 0.
  The formal definition is given as follows:
  \begin{itemize}
    \item
      $\mathbf{C}[[11i + 1, 11i + 11], [8i + 1, 8i + 8]] = \mathbf{C}_i'$ for each $i \in [0, m - 1]$.
    \item
      $\mathbf{C}[11i + j, 8i' + 2j'] = 1$ and $\mathbf{C}[11i' + j', 8i + 2j] = 1$ for each $(i, j, i', j') \in \mathcal Z$.
    \item
      All other entries are 0.
  \end{itemize}
  Let $n = 11m$ be the number of rows in $\mathbf{C}$.
  Now we define seven ``types'' $H_1, \dots, H_7$ of row index pairs.
  The first four types correspond to Constraints~\ref{item:rule:1} to \ref{item:rule:4}.
  In \Cref{claim:matrixs}, we show how to enforce Constraints~\ref{item:rule:1} to \ref{item:rule:4} by appending an appropriate number of matrices given in \Cref{lemma:matrixb}.
  The other three types are defined based on the number of missing entries.
  For each $h < h' \in [n]$,
  \begin{itemize}
    \item
      $(h, h') \in H_1$ if $h = 11i + j$ and $h' = 11i + j'$ for some $i \in [0, m - 1]$ and $(j, j') \in \{ (1, 5), (2, 6), (3, 7), \allowbreak (1, 8), (2, 9), (3, 10) \}$.
    \item
      $(h, h') \in H_2$ if $h = 11i + j$ and $h' = 11i + j'$ for some $i \in [0, m - 1]$ and $(j, j') = (4, 11)$.
    \item
      $(h, h') \in H_3$ if $h = 11i + j$ and $h' = 11i + j'$ for some $i \in [0, m - 1]$ and $(j, j') \in \{ (1, 4), (2, 4), (3, 4) \}$.
    \item
      $(h, h') \in H_4$ if $h = 11i + j$ and $h' = 11i' + j'$ for $(i, j, i', j') \in \mathcal Z$.
    \item
      For each $h < h' \in [n]$ with $(h, h') \not\in H_1, \dots, H_4$,
      \begin{itemize}
        \item
          $(h, h') \in H_5$ if both $\mathbf{C}[h]$ and $\mathbf{C}[h']$ have missing entries.
        \item
          $(h, h') \in H_6$ if exactly one of $\mathbf{C}[h]$ and $\mathbf{C}[h']$ have missing entries.
        \item
          $(h, h') \in H_7$ if neither of $\mathbf{C}[h]$ and $\mathbf{C}[h']$ has missing entries.
      \end{itemize}
  \end{itemize}

  For each type of row index pairs, we ``adjust'' the pairwise distances using \Cref{lemma:matrixb} to encode Constraints~\ref{item:rule:1} to \ref{item:rule:4}.
  Before doing so, we prove an auxiliary claim stating that $\maxd(\mathbf{C})\le 8$.
  \begin{claim}
    \label{claim:cdiameter}
    $\dist(\mathbf{C}[h], \mathbf{C}[h']) \le 8$ for each $h < h' \in [n]$.
  \end{claim}
  \begin{claimproof}
    It suffices to show that $\dist(\mathbf{C}[h], 0^{8m}) \le 4$ for each $h \in [n]$, because we then have $\dist(\mathbf{C}[h], \mathbf{C}[h']) \le \dist(\mathbf{C}[h], 0^{8m}) + \dist(\mathbf{C}[h'], 0^{8m}) \le 8$ by the triangle inequality.
    Suppose that $h = 11i + j$ for $i \in [0, m - 1]$ and $j \in [3]$.
    Then, the row vector $\mathbf{C}[h]$ contains at most two~1's in~$\mathbf{C}_i'$, and exactly two~1's elsewhere because each literal appears in the formula $\phi$ exactly twice.
    It follows that $\mathbf{C}[h]$ contains at most four 1's.
    Hence, we can assume that $h = 11i + j$ for $i \in [0, m - 1]$ and $j \in [4, 11]$.
    Then, the row vector $\mathbf{C}[h]$ contains at most four~1's, because all~1's appear in $\mathbf{C}_i'$.
    This shows the claim.
  \end{claimproof}

  \begin{claim}
    \label{claim:matrixs}
    There exists a $\beta \in \NN$ and a complete matrix $\mathbf{D}$ over $\{ 0, 1 \}$ with $n$ rows such that all of the following hold for $\mathbf{S} = \begin{bmatrix} \mathbf{C} \quad \mathbf{D} \end{bmatrix}$:
    \begin{itemize}
      \item
        \( \dist(\mathbf{S}[h], \mathbf{S}[h']) = \beta - 1 \) for each $(h, h') \in H_1$ (cf.\ Constraint~\ref{item:rule:1}).
      \item
        \( \dist(\mathbf{S}[h], \mathbf{S}[h']) = \beta - 1 \) for each $(h, h') \in H_2$ (cf.\ Constraint~\ref{item:rule:2}).
      \item
        \( \dist(\mathbf{S}[h], \mathbf{S}[h']) = \beta - 3 \) for each $(h, h') \in H_3$ (cf.\ Constraint~\ref{item:rule:3}).
      \item
        \( \dist(\mathbf{S}[h], \mathbf{S}[h']) \in \{ \beta - 3, \beta - 2 \} \) for each $(h, h') \in H_4$ (cf.\ Constraint~\ref{item:rule:4}).
      \item
        \( \dist(\mathbf{S}[h], \mathbf{S}[h']) \in \{ \beta - 3, \beta - 2 \} \) for each $(h, h') \in H_5$.
      \item
        \( \dist(\mathbf{S}[h], \mathbf{S}[h']) \in \{ \beta - 2, \beta - 1 \} \) for each $(h, h') \in H_6$.
      \item
        \( \dist(\mathbf{S}[h], \mathbf{S}[h']) \in \{ \beta - 1, \beta \} \) for each $(h, h') \in H_7$.
    \end{itemize}
  \end{claim}
  \begin{claimproof}
    We obtain the matrix $\mathbf{D}$ by horizontally stacking \( \mathbf{B}_{h, h'}^{n} \) of \Cref{lemma:matrixb} $c_{h, h'}$ times (where $c_{h, h'} \in \NN$ is to be defined)  for each $h < h' \in [n]$.
    Recall that \( \dist(\mathbf{B}^n_{h, h'}[i], \mathbf{B}^n_{h, h'}[i']) \) equals $\mind(\mathbf{B}^n_{h, h'}) + 2$ if $(h, h') = (i, i')$ and $\mind(\mathbf{B}^n_{h, h'})$ otherwise and let
    \begin{align*}
      \beta = \sum_{h < h' \in [n']} c_{h, h'} \cdot \mind(\mathbf{B}^n_{h, h'}) + 11.
    \end{align*}
    Observe that the pairwise row distance can be rewritten as follows for each $h < h' \in [n]$:
    \begin{align*}
      \dist(\mathbf{S}[h], \mathbf{S}[h'])
      &= \dist(\mathbf{C}[h], \mathbf{C}[h']) + c_{h, h'} \cdot (\mind(\mathbf{B}^n_{h, h'}) + 2) + \sum_{\substack{i < i' \in [n], \\ (i, i') \ne (h, h')}} c_{i, i'} \cdot \mind(\mathbf{B}^n_{h, h'})  \\
      &= \dist(\mathbf{C}[h], \mathbf{C}[h']) + 2 c_{h, h'} + \beta - 11.
    \end{align*}
    We define $c_{h, h'}$ for each $(h, h') \in H_1 \cup H_2 \cup H_3$ as follows.
    \begin{itemize}
      \item
        Let $c_{h, h'} = 4$ for each $(h, h') \in H_1$.
        Then, we have $\dist(\mathbf{S}[h], \mathbf{S}[h']) = 2 + 2 \cdot 4 - \beta - 11 = \beta - 1$.
      \item
        Let $c_{h, h'} = 5$ for each $(h, h') \in H_2$.
        Then, we have $\dist(\mathbf{S}[h], \mathbf{S}[h']) = 0 + 2 \cdot 5 - \beta - 11 = \beta - 1$.
      \item
        Let $c_{h, h'} = 2$ for each $(h, h') \in H_3$.
        Then, we have $\dist(\mathbf{S}[h], \mathbf{S}[h']) = 4 + 2 \cdot 2 - \beta - 11 = \beta - 3$.
    \end{itemize}

    For the remainder (that is, $(h, h') \in H_4 \cup \dots \cup H_7$), it has to be shown that there exists $c_{h, h'} \in \NN$ such that $\dist(\mathbf{S}[h], \mathbf{S}[h']) \in \{ x, x + 1 \}$ holds for $x \in \NN$ with $x \ge \beta - 3$.
    Let $c_{h, h'} = \lceil (11 + x - \beta - \dist(\mathbf{C}[h], \mathbf{C}[h'])) / 2 \rceil$.
    Clearly, $c_{h, h'}$ is an integer and it holds that $c_{h, h'} \ge 0$ because $x - \beta \ge -3$ and $\dist(\mathbf{C}[h], \mathbf{C}[h']) \le 8$ by \Cref{claim:cdiameter}.
    Moreover, we have $\dist(\mathbf{S}[h], \mathbf{S}[h']) = x$ if $11 + x - \beta - \dist(\mathbf{C}[h], \mathbf{C}[h'])$ is even and $\dist(\mathbf{S}[h], \mathbf{S}[h']) = x + 1$ otherwise.
  \end{claimproof}
  
  Finally, we show that Constraints~\labelcref{item:rule:1,item:rule:2,item:rule:3,item:rule:4} are essentially the same as the pairwise row distance constraints on the matrix $\mathbf{S}$ of \Cref{claim:matrixs}.
  \begin{claim}
    \label{claim:part2}
    The matrices $\mathcal{C}$ can be completed according to Constraints~\labelcref{item:rule:1,item:rule:2,item:rule:3,item:rule:4}, if and only if~$\mathbf{S}$ admits a completion $\mathbf{T}$ with $\beta -3 \le \maxd(\mathbf{T}) \le \beta$.
  \end{claim}
  \begin{claimproof}
    $(\Rightarrow)$
    Let $\mathbf{T}$ be the matrix where the missing entries of $\mathbf{S}$ are filled as in the completion of~$\mathcal{C}$.
    First, note that $\mind(\mathbf{T}) \ge \mind(\mathbf{S}) \ge \beta - 3$ by \Cref{claim:matrixs}.
    We show that $\dist(\mathbf{T}[h], \mathbf{T}[h']) \le \beta$ holds for each $h < h' \in [n]$.
    \begin{itemize}
      \item
        Suppose that $(h, h') \in H_1$.
        Then, the missing entries in $\mathbf{S}[h]$ are filled by 10 or 01 by Constraint~\ref{item:rule:1} and $\mathbf{S}[h']$ (which has no missing entries) has 00 or 11 in the corresponding positions.
        Hence, $\dist(\mathbf{T}[h], \mathbf{T}[h']) = \dist(\mathbf{S}[h], \mathbf{S}[h']) + 1 = \beta$.
      \item
        Suppose that $(h, h') \in H_2$.
        Then, the missing entries in $\mathbf{S}[h]$ are filled by 00, 01, or 10 by Constraint~\ref{item:rule:2} and $\mathbf{S}[h']$ (which has no missing entries) has 00 in the corresponding positions.
        Hence, $\dist(\mathbf{T}[h], \mathbf{T}[h']) = \dist(\mathbf{S}[h], \mathbf{S}[h']) + 1 = \beta$.
      \item
        Suppose that $(h, h') \in H_3$.
        Note that $\mathbf{S}[h]$ has missing entries $l_i^j$ and $\mathbf{S}[h']$ has missing entries $c_i$ for $i \in [0, m - 1]$ and $j \in [3]$.
        Let $c_i^1 c_i^2$ be the completion of $c_i$ for $c_i^1, c_i^2 \in \{ 0, 1 \}$.
        If $\mathbf{T}[h]$ has $1 - c_i^1$ and $1 - c_i^2$ in the corresponding positions, then $\dist(\mathbf{T}[h], \mathbf{T}[h']) = \dist(\mathbf{S}[h], \mathbf{S}[h']) + 2 = \beta - 1$.
        Otherwise, $\mathbf{T}[h]$ matches in at least one position where $\mathbf{T}[h']$ has missing entries $c_i$.
        Therefore, $\dist(\mathbf{T}[h], \mathbf{T}[h']) \le \dist(\mathbf{S}[h], \mathbf{S}[h']) + 3 = \beta$.
      \item
        Suppose that $(h, h') \in H_4$.
        Note that $\mathbf{S}[h]$ has missing entries $l_i^j$ and $\mathbf{S}[h']$ has missing entries $l_{i'}^{j'}$.
        Also note that $\mathbf{S}[h]$ and $\mathbf{S}[h']$ have 01 where the other row vector has missing entries.
        Since either $l_i^j$ or $l_{i'}^{j'}$ must be completed by 01 due to Constraint~\labelcref{item:rule:4}, we have $\dist(\mathbf{T}[h], \mathbf{T}[h']) = \dist(\mathbf{S}[h], \mathbf{S}[h']) \le \beta - 2 + 2 = \beta$.
      \item
        Suppose that $(h, h') \in H_5 \cup H_6 \cup H_7$.
        Let $x \in \{ 0, 1, 2 \}$ be the number of row vectors with missing entries in $\{ \mathbf{S}[h], \mathbf{S}[h'] \}$.
        Then, we have $\dist(\mathbf{S}[h], \mathbf{S}[h']) \in \{ \beta - x - 1, \beta - x \}$.
        If $\mathbf{S}[h]$ has missing entries, then $\mathbf{S}[h']$ has 00 in the corresponding positions, and vice versa.
        Since the missing entries are filled by 00, 01, or 10 according to Constraints~\labelcref{item:rule:1,item:rule:2,item:rule:3}, we have $\dist(\mathbf{T}[h], \mathbf{T}[h']) \le \dist(\mathbf{S}[h], \mathbf{S}[h']) + x = \beta$.
    \end{itemize}

    $(\Leftarrow)$
    We complete the matrices in $\mathcal{C}$ in the same way as in the completion of $\mathbf{S}$.
    It is easy to verify all Constraints~\labelcref{item:rule:1,item:rule:2,item:rule:3,item:rule:4} are satisfied.
  \end{claimproof}
  Note that $\mind(\mathbf{T}) \ge \mind(\mathbf{S}) \ge \beta - 3$ for any completion $\mathbf{T}$ of $\mathbf{S}$.
  Hence, it follows from Claims~\ref{claim:part1} and~\ref{claim:part2} that $\phi$ is satisfiable if and only if the \textsc{DMC} instance $(\mathbf{S},\alpha,\beta)$ is a \textbf{Yes}-instance, for any $\alpha \le \beta - 3$.
  This concludes the proof of \Cref{theorem:k2nphard}~\ref{ktwo}.

  \ref{kthree} To prove that \mc{} is NP-hard for $\alpha = \beta$ and $k = 3$, we provide a polynomial-time reduction from another NP-hard variant of \textsc{3-SAT}~\cite{MR01}:

  \dprob
  {Cubic Monotone 1-in-3 SAT}
  {A Boolean formula in conjunctive normal form, in which each variable appears exactly three times and each clause contains exactly three distinct positive literals.}
  {Is there a truth assignment that satisfies exactly one literal in each clause?}

  Our reduction heavily depends on the fact that $\alpha = \beta$.
  This is contrary to the reduction in part~\ref{ktwo}, which in fact works for any $\alpha \le \beta - 3$.

  Let $\phi$ be an instance of \textsc{Cubic Monotone 1-in-3 SAT}.
  Our proof has two parts:
  First, we provide an incomplete matrix $\mathbf{C}$ and we show that $\phi$ is a \textbf{Yes}-instance if and only if $\mathbf{C}$ can be completed under certain constraints.
  Then, we obtain an instance $(\mathbf{S}, \alpha, \alpha)$ of \mc{} by adjusting the pairwise row distances with the help of \Cref{lemma:matrixb}.

  Suppose that $\phi$ contains variables $x_1, \dots, x_m$ and clauses $C_1, \dots, C_m$, where $C_i = (C_i^1 \vee C_i^2 \vee C_i^3)$ for each $i \in [m]$.
  First, we define matrices $\mathbf{C}_1, \mathbf{C}_3 \in \{ 0, 1, \square \}^{m \times 2m}$ and $\mathbf{C}_2, \mathbf{C}_4 \in \{ 0, 1, \square \}^{m \times 3m}$.
  The incomplete matrices $\mathbf{C}_1$ and $\mathbf{C}_4$ and represent variables and clauses, respectively.
  The matrices $\mathbf{C}_2$ and $\mathbf{C}_3$ are complete.
  We use $a_i$ (and $b_i$) to represent two (three, respectively) missing entries in $\mathbf{C}_1$ ($\mathbf{C}_4$, respectively) for each $i \in [m]$.
  For each $i \in [m]$ and $j \in [m]$, let
  \begin{align*}
    &\mathbf{C}_1[i, \{ 2j - 1, 2j \}] = \begin{cases}
      a_i & \text{if } i = j, \\
      00 & \text{otherwise}.
    \end{cases}
    &\mathbf{C}_2[i, [3j - 2, 3j]] &= 
    \begin{cases}
      011 & \text{if } x_i = C_j^1, \\
      101 & \text{if } x_i = C_j^2, \\
      110 & \text{if } x_i = C_j^3, \\
      000 & \text{otherwise}.
    \end{cases}
    \\
    &\mathbf{C}_3[i, \{ 2j - 1, 2j \}] = \begin{cases}
      10 & \text{if } x_i \text{ is in } C_j, \\
      00 & \text{otherwise}.
    \end{cases}
    &\mathbf{C}_4[i, [3j - 2, 3j]] &= \begin{cases}
      b_i & \text{if } i = j, \\
      000 & \text{otherwise}.
    \end{cases}
  \end{align*}
  We obtain an incomplete matrix $\mathbf{C} \in \{ 0, 1, \square \}^{(2m + 1) \times (5m + 1)}$ by appending a column vector $(0^m 1^m)^T$ and a row vector $0^{5m + 1}$ to the following matrix
  \[
    \left[\begin{array}{*{2}{c}}
      \mathbf{C}_1 & \mathbf{C}_2 \\ 
      \mathbf{C}_3 & \mathbf{C}_4 \\
    \end{array}\right].
  \]
  Refer to \Cref{fig:matrixc} for an illustration.

  \begin{figure}[t]
    $$
      \mathbf{C} = \left[\begin{array}{*{4}{c}|*{4}{c}|c}
        a_1 & 00  & 00  & 00  \; & 011 & 011 & 011 & 000 \; & 1 \\
        00  & a_2 & 00  & 00  \; & 101 & 101 & 000 & 011 \; & 1 \\
        00  & 00  & a_3 & 00  \; & 110 & 000 & 101 & 101 \; & 1 \\
        00  & 00  & 00  & a_4 \; & 000 & 110 & 110 & 110 \; & 1 \\
        \hline
        10  & 10  & 10  & 00  \; & b_1 & 000 & 000 & 000 \; & 0 \\
        10  & 10  & 00  & 10  \; & 000 & b_2 & 000 & 000 \; & 0 \\
        10  & 00  & 10  & 10  \; & 000 & 000 & b_3 & 000 \; & 0 \\
        00  & 10  & 10  & 10  \; & 000 & 000 & 000 & b_4 \; & 0 \\
        \hline
        00  & 00  & 00  & 00  \; & 000 & 000 & 000 & 000 \; & 0 \\
      \end{array}\right]
    $$
    \caption{An example of $\mathbf{C}$ for $\phi = (x_1 \vee x_2 \vee x_3) \wedge (x_1 \vee x_2 \vee x_4) \wedge (x_1 \vee x_3 \vee x_4) \wedge (x_2 \vee x_3 \vee x_4)$.}
    \label{fig:matrixc}
  \end{figure}

  Intuitively speaking, we will use the first $m$ rows to encode the variables and the following $m$ rows to encode the clauses.
  \begin{claim}\label{claim:1in3S}
    There is a truth assignment that satisfies exactly one literal in clause $C_i$ for each $i \in [m]$ if and only if there is a completion $\mathbf{C'}$ of $\mathbf{C}$ such that
    \begin{enumerate}
      \item
        $\dist(\mathbf{C}'[i], \mathbf{C}[2m + 1]) = \dist(\mathbf{C}[i], \mathbf{C}[2m + 1]) + 1$ for each $i \in [2m]$.
      \item
        $\dist(\mathbf{C}'[i], \mathbf{C}'[m + i']) = \dist(\mathbf{C}[i], \mathbf{C}[m + i']) + 3$ for each $i, i' \in [m]$ such that $x_i$ is in $C_{i'}$.
    \end{enumerate}
  \end{claim}
  \begin{claimproof}
    $(\Rightarrow)$
    Let $\tau$ be a truth assignment satisfying exactly one literal in each clause of $\phi$.
    Consider the matrix $\mathbf{C}'$ obtained by completing $\mathbf{C}$ as follows for each $i \in [m]$:
    \begin{itemize}
      \item
        The missing entries $a_i$ are filled by 10 if $x_i$ is true in $\tau$ and by 01 otherwise.
      \item
        The missing entries $b_i$ are filled by 100 if $C_i^1$ is true in $\tau$.
      \item
        The missing entries $b_i$ are filled by 010 if $C_i^2$ is true in $\tau$.
      \item
        The missing entries $b_i$ are filled by 001 if $C_i^3$ is true in $\tau$.
    \end{itemize}
    It is easy to see that the first constraint of the claim is indeed fulfilled.
    For the other constraint, consider $i, i' \in [m]$ such that $x_i$ is in $C_{i'}$.
    We prove that $\dist(\mathbf{C}'[i], \mathbf{C}'[m + i']) - \dist(\mathbf{C}[i], \mathbf{C}[m + i']) = 3$ or equivalently,
    \begin{align*}
      \dist(a_i', \mathbf{C}[m + i', [2i' - 1, 2i']]) +
      \dist(\mathbf{C}[i, [2m + 3i' - 2, 2m + 3i']], b_{i'}')
      = 3,
    \end{align*}
    where $a_i'$ and $b_{i'}'$ are the completion of $a_i$ and $b_{i'}$ in $\mathbf{C}'$.
    We show that it holds for the case $x_i = C_{i'}^1$.
    It can be proven analogously for the cases of $x_i = C_{i'}^2$ and $x_i = C_{i'}^3$ as well.

    Note that $\mathbf{C}[m + i', [2i' - 1, 2i']] = 10$ and $\mathbf{C}[i, [2m + 3i' - 2, 2m + 3i']] = 011$.
    If $x_i$ is true in $\tau$, then $a_i'$ and $b_{i'}'$ are 10 and 100, respectively.
    Thus, the equality above holds.
    If $x_i$ is false in $\tau$, then $a_i' = 01$ and $b_{i'}' \in \{ 010, 001 \}$.
    Again the equality above holds.

    $(\Leftarrow)$
    Let $a_i'$ and $b_{i}'$ be the completion of $a_i$ and $b_{i}$ in $\mathbf{C}'$ for each $i \in [m]$.
    Due to the first constraint, exactly one entry in $a_i'$ and $b_i'$ must be 1 for each $i \in [m]$.
    Now consider the truth assignment $\tau$ that assigns $x_i$ to true if $a_i' = 10$ and false if $a_i' = 01$ for each $i \in [m]$.
    We show that $\tau$ satisfies exactly one literal in each clause of $\phi$.
    Consider $i \in [m]$ with $C_i = (x_{i_1} \vee x_{i_2} \vee x_{i_3})$.
    By the second constraint, we have
    \begin{align*}
      \sum_{j = 1}^3 \dist(\mathbf{C}'[i_j], \mathbf{C}'[m + i]) - \dist(\mathbf{C}[i_j], \mathbf{C}[m + i]) = 9.
    \end{align*}
    Rewriting the left-hand side in terms of $a_{i_1}', a_{i_2}', a_{i_3}', b_{i}'$, we obtain
    \begin{align*}
       \dist(011, b_{i}') + \dist(101, b_{i}') + \dist(110, b_{i}') + \dist(10, a_{i_1}') + \dist(10, a_{i_2}') + \dist(10, a_{i_3}') = 9.
    \end{align*}
    Since $b_{i}' \in \{ 100, 010, 001 \}$, it follows that the first three terms sum up to exactly $5$ and hence $\dist(10, a_{i_1}') + \dist(10, a_{i_2}') + \dist(10, a_{i_3}') = 4$.
    This means that exactly one of $a_{i_1}', a_{i_2}', a_{i_3}'$ is 10 and the remaining two are 01.
    Thus, exactly one literal in $C_i$ is satisfied.
  \end{claimproof}

  We will build an incomplete matrix $\mathbf{S}$ with $2m + 1$ rows from $\mathbf{C}$ by horizontally appending matrices~$\mathbf{B}_{i, i'}^{2m + 1}$ of \Cref{lemma:matrixb} such that $\mathbf{S}$ admits a completion $\mathbf{T}$ with $\mind(\mathbf{T}) = \maxd(\mathbf{T})$ if and only if $\phi$ has a satisfying assignment.
  To determine how many times we append $\mathbf{B}_{i, i'}^{2m + 1}$, we observe the following about the pairwise distances in $\mathbf{C}$ (see \Cref{fig:matrixc}):
  \begin{itemize}
    \item
      For each $i \in [m]$, $\dist(\mathbf{C}[i], \mathbf{C}[2m + 1]) = 7$ and $\dist(\mathbf{C}[m + i], \mathbf{C}[2m + 1]) = 3$.
    \item
      For each $i \ne i' \in [m]$, $\dist(\mathbf{C}[i], \mathbf{C}[i']) = 12 - 2 c_{i, i'}$ and $\dist(\mathbf{C}[m + i], \mathbf{C}[m + i']) = 6 - 2 c_{i, i'}'$. 
      Here $c_{i, i'} \in \{ 0, 1, 2, 3 \}$ is the number of clauses that contain both $x_i$ and $x_{i'}$, and $c_{i, i'}' \in \{ 0, 1, 2, 3 \}$ is the number of variables that are both in $C_i$ and $C_{i'}$.
    \item
      For each $i, i' \in [m]$ with $x_i$ in $C_{i'}$, $\dist(\mathbf{C}[i], \mathbf{C}[m + i']) = 7$.
    \item
      For each $i, i' \in [m]$ with $x_i$ not in $C_{i'}$, $\dist(\mathbf{C}[i], \mathbf{C}[m + i']) = 10$.
  \end{itemize}

  Now we construct $\mathbf{S}$ as follows.
  Recall that $\dist(\mathbf{B}_{i, i'}^{2m + 1}[i],\mathbf{B}_{i, i'}^{2m + 1}[i']) = \mind(\mathbf{B}_{i, i'}^{2m + 1}) + 2$ and $\dist(\mathbf{B}_{i, i'}^{2m + 1}[h],\mathbf{B}_{i, i'}^{2m + 1}[h']) = \mind(\mathbf{B}_{i, i'}^{2m + 1})$ for all $h < h' \in [2m + 1]$ with $(h, h') \ne (i, i')$.
  For $c \in \NN$, let $c \mathbf{B}_{i, i'}$ be the matrix obtained by horizontally stacking $\mathbf{B}_{i, i'}^{2m + 1}$ $c$ times (we drop the superscript for readability).
  We also compute a value for $\alpha \in \NN$ as follows:
  We start with $\alpha = 14$ and we increase $\alpha$ by $c \cdot \mind(\mathbf{B}_{i, i'}^{2m + 1})$ each time $c \mathbf{B}_{i, i'}$ is appended to~$\mathbf{C}$.
  We horizontally append the following matrices:
  \begin{itemize}
    \item
      $3\mathbf{B}_{i, 2m + 1}$ and $5\mathbf{B}_{m + i, 2m + 1}$ for each $i \in [m]$.
    \item
      $c_{i, i'} \mathbf{B}_{i, i'}$ and  $(c_{i, i'}' + 3) \mathbf{B}_{m + i, m + i'}$ for each $i < i' \in [m]$.
    \item
      $2\mathbf{B}_{i, m + i'}$ for each $i, i' \in [m]$ with $x_i$ in $C_{i'}$.
    \item
      $1\mathbf{B}_{i, m + i'}$ for each $i, i' \in [m]$ with $x_i$ not in $C_{i'}$.
  \end{itemize}

  Note that for each $i, i' \in [2m + 1]$, $\dist(\mathbf{S}[i], \mathbf{S}[i']) = \dist(\mathbf{C}[i], \mathbf{C}[i']) + 2 \cdot n_{i, i'} + \alpha - 14$, where $n_{i, i'}$ is the number of appended $\mathbf{B}_{i, i'}$'s. 
  Thus, the pairwise row distances in $\mathbf{S}$ are given as follows:
  \begin{itemize}
    \item
      For each $i \in [m]$, $\dist(\mathbf{S}[i], \mathbf{S}[2m + 1]) = 7 + 2 \cdot 3 + \alpha - 14 = \alpha - 1$ and $\dist(\mathbf{S}[m + i], \mathbf{S}[2m + 1]) = 3 + 2 \cdot 5 + \alpha - 14 = \alpha - 1$.
    \item
      For each $i \ne i' \in [m]$, $\dist(\mathbf{S}[i], \mathbf{S}[i']) = (12 - 2 c_{i, i'}) + 2\cdot c_{i, i'} + \alpha - 14 = \alpha - 2$ and $\dist(\mathbf{S}[m + i], \mathbf{S}[m + i']) = (6 - 2 c_{i, i'}') + 2 \cdot (c_{i, i'}' + 3) + \alpha - 14 = \alpha - 2$.
    \item 
      For each $i, i' \in [m]$ with $x_i$ in $C_{i'}$, $\dist(\mathbf{S}[i], \mathbf{S}[m + i']) = 7 + 2 \cdot 2 + \alpha - 14 = \alpha - 3$.
    \item 
      For each $i, i' \in [m]$ with $x_i$ not in $C_{i'}$, $\dist(\mathbf{S}[i], \mathbf{S}[m + i']) = 10 + 2 \cdot 1 + \alpha - 14 = \alpha - 2$.
  \end{itemize}

  Finally, we prove that one can complete $\mathbf{C}$ as specified in \Cref{claim:1in3S} if and only if one can complete $\mathbf{S}$ into a matrix~$\mathbf{T}$ with $\mind(\mathbf{T})=\maxd(\mathbf{T})=\alpha$.
  \begin{claim}\label{claim:CS}
    There is a completion $\mathbf{C'}$ of $\mathbf{C}$ such that
    \begin{enumerate}
      \item
        $\dist(\mathbf{C}'[i], \mathbf{C}[2m + 1]) = \dist(\mathbf{C}[i], \mathbf{C}[2m + 1]) + 1$ for each $i \in [2m]$.
      \item
        $\dist(\mathbf{C}'[i], \mathbf{C}'[m + i']) = \dist(\mathbf{C}[i], \mathbf{C}[m + i']) + 3$ for each $i, i' \in [m]$ such that $x_i$ is in $C_{i'}$.
    \end{enumerate}
    if and only if there is a completion $\mathbf{T}$ of $\mathbf{S}$ with $\dist(\mathbf{T}[i], \mathbf{T}[i']) = \alpha$ for all $i \ne i' \in [2m + 1]$.
  \end{claim}
  \begin{claimproof}
    $(\Rightarrow)$
    Consider the completion $\mathbf{T}$ of $\mathbf{S}$ in which each missing entry is completed as in~$\mathbf{C}'$.
    Let $a_{i}' $ and $b_{i}'$ be the completion of $a_i$ and $b_i$ for each $i \in [m]$.
    We have $a_{i}' \in \{ 10, 01 \}$ and $b_{i}' \in \{ 100, 010, 001 \}$ due to the first constraint.
    Now we examine each pairwise row distance.
    \begin{itemize}
      \item
        For each $i \in [m]$, $\dist(\mathbf{T}[i], \mathbf{T}[2m + 1]) = \dist(\mathbf{S}[i], \mathbf{S}[2m + 1]) + \dist(a_i', 00) = (\alpha - 1) + 1 = \alpha$ and $\dist(\mathbf{T}[m + i], \mathbf{T}[2m + 1]) = \dist(\mathbf{S}[m + i], \mathbf{S}[2m + 1]) + \dist(b_i', 00) = (\alpha - 1) + 1 = \alpha$.
      \item
        For each $i \ne i' \in [m]$, $\dist(\mathbf{T}[i], \mathbf{T}[i']) = \dist(\mathbf{S}[i], \mathbf{S}[i']) + \dist(a_i, 00) + \dist(a_{i'}, 00) = (\alpha - 2) + 1 + 1 = \alpha$ and $\dist(\mathbf{T}[m + i], \mathbf{T}[m + i']) = \dist(\mathbf{S}[m + i], \mathbf{S}[m + i']) + \dist(b_i, 000) + \dist(b_{i'}, 000) = (\alpha - 2) + 1 + 1 = \alpha$
      \item 
        For each $i, i' \in [m]$ with $x_i$ in $C_{i'}$, $\dist(\mathbf{T}[i], \mathbf{T}[m + i']) = \dist(\mathbf{S}[i], \mathbf{S}[m + i']) + (\dist(\mathbf{C}'[i], \mathbf{C}'[m + i']) - \dist(\mathbf{C}[i], \mathbf{C}[m + i'])) = (\alpha - 3) + 3 = \alpha$ because of the second constraint.
      \item 
        For each $i, i' \in [m]$ with $x_i$ not in $C_{i'}$, $\dist(\mathbf{T}[i], \mathbf{T}[m + i']) = \dist(\mathbf{S}[i], \mathbf{S}[m + i']) + \dist(a_i, 00) + \dist(b_{i}', 000) = (\alpha - 2) + 1 + 1 = \alpha$.
    \end{itemize}
    Hence, all pairwise row distances are equal to $\alpha$.

    $(\Leftarrow)$
    Consider the completion $\mathbf{C}'$ of $\mathbf{C}$ in which each missing entry is completed as in $\mathbf{T}$.
    For each $i \in [2m]$, we have $\dist(\mathbf{C}'[i], \mathbf{C}[2m + 1]) - \dist(\mathbf{C}[i], \mathbf{C}[2m + 1]) = \dist(\mathbf{T}[i], \mathbf{T}[2m + 1]) - \dist(\mathbf{S}[i], \mathbf{S}[2m + 1]) = \alpha - (\alpha - 1) = 1$.
    Moreover, it holds for each $i, i' \in [m]$ with $x_i$ in $C_{i'}$ that $\dist(\mathbf{C}'[i], \mathbf{C}'[m + i']) - \dist(\mathbf{C}[i], \mathbf{C}[m + i']) = \dist(\mathbf{S}'[i], \mathbf{S}'[m + i']) - \dist(\mathbf{S}[i], \mathbf{S}[m + i']) = \alpha - (\alpha - 3) = 3$.
    This concludes the proof of the claim.
  \end{claimproof}
  Combining \Cref{claim:1in3S} and \Cref{claim:CS}, we have that $\phi$ is a \textbf{Yes}-instance if and only if the \mc{} instance $(\mathbf{S}, \alpha, \alpha)$ is a \textbf{Yes}-instance.
\end{proof}

The problem of deciding whether an incomplete matrix $\mathbf{S} \in \{ 1, -1, \square \}^{n \times n}$ can be completed into a Hadamard matrix~\cite{Joh90}, is equivalent to \textsc{DMC} with $n = \ell$ and $\alpha = \beta = n / 2$.
We conjecture that one can adapt the proof of \Cref{theorem:k2nphard}~\ref{kthree} to show the NP-hardness of this problem.
We also conjecture that \mc{} with~$k=3$ is actually NP-hard for every value of $\beta - \alpha$. Similar reductions might work here as well.
By contrast, we believe the case~$k=2$ and~$\beta - \alpha=1$ to be polynomial-time solvable, again by reducing to \textsc{2-SAT}.

\section{Conclusion}
\label{sec:mindmc:conc}

Together with the recent work of Eiben et al.~\cite{EGKOS21,EGKOS22}, we are 
seemingly among the first in the context of stringology that make extensive use of Deza's theorem and sunflowers.
While Eiben et al.~\cite{EGKOS21,EGKOS22} achieved classification results in terms of parameterized (in)tractability, we conducted a detailed complexity analysis in terms of polynomial-time solvable versus NP-hard cases.
Figure~\ref{fig:results} provides a visual overview on our results for 
\textsc{Diameter Matrix Completion} (DMC), also spotting concrete 
open questions.

Going beyond open questions directly arising from Figure~\ref{fig:results},
we remark that
it is known that the clustering variant of \mc{} can be solved 
in polynomial time when the number of clusters is two and the matrix is complete \cite{GJL04}.
Hence, it is natural to ask whether our tractability results can be extended to this matrix completion clustering problem as well.
Furthermore, we proved that there are polynomial-time algorithms solving \mc{} when $\beta \le 3$ and~$\alpha=0$ (\Cref{theorem:d2linear,theorem:d3polynomial}).
This leads to the question whether these algorithms can be extended to matrices with arbitrary alphabet size.
Next, we are curious whether the phenomenon we observed in Theorem~\ref{theorem:aa+1} concerning the exponential dependence of the running time 
for $(\alpha, \alpha +1)$-\mc{} when $\alpha$ is even but independence 
of~$\alpha$ when it is odd can be further substantiated or whether 
one can get rid of the ``$\alpha$-dependence'' in the even case.
In terms of standard parameterized complexity analysis, we wonder whether
\mc{} is fixed-parameter tractable with respect to $\beta + k$ (in our NP-hardness proof for the case $\beta = 4$ (\Cref{theorem:d4nphard}) we have $k \in \theta(\ell)$).
Finally, performing a multivariate fine-grained 
complexity analysis in the same spirit as in recent work for 
\textsc{Longest Common Subsequence}~\cite{BK18} 
would be another natural next step.

\bibliography{ref}

\end{document}